\theoremstyle{plain}
\newtheorem{theorem}{Theorem}[section]
\newtheorem{corollary}[theorem]{Corollary}
\newtheorem{proposition}[theorem]{Proposition}
\newtheorem{lemma}[theorem]{Lemma}
\theoremstyle{definition}
\newtheorem{definition}[theorem]{Definition}
\theoremstyle{remark}
\newtheorem{remark}[theorem]{Remark} 
\newtheorem{example}[theorem]{Example}
\numberwithin{equation}{section}
\numberwithin{figure}{section}
\numberwithin{table}{section}
\newcommand{\h}{\mathfrak{h}}
\newcommand{\R}{\mathbb{R}}
\newcommand{\N}{\mathbb{N}}
\newcommand{\C}{\mathbb{C}} 
\newcommand{\Z}{\mathbb{Z}}
\newcommand{\T}{\mathbb{T}}
\newcommand{\rz}{\mathtt{R}}
\newcommand{\U}{\mathtt{U}}
\newcommand{\s}[1]{\CMcal{#1}}
\newcommand{\bb}[1]{\mathscr{#1}}
\newcommand{\n}[1]{\mathbb{#1}}
\newcommand{\fb}{\mathfrak{f}_B}
\newcommand{\K}{\s{K}}
\newcommand{\expo}[1]{\,\mathrm{e}^{#1}\,}
\newcommand{ \ii}{\,\mathrm{i}\,}
\newcommand{\virg}[1]{\lq\lq#1\rq\rq}                \newcommand{\ie}{\textsl{i.\,e.\,}}
\newcommand{\cf}{\textsl{cf}.\,}
\newcommand{\A}{\mathfrak{A}}
\begin{document}

\title[Interface currents and corner states in magnetic quarter-plane systems]{Interface currents and corner states in magnetic quarter--plane systems
}

\author[D. Polo]{Danilo Polo Ojito}

\address[D. Polo]{Facultad de Matem\'aticas,
  Pontificia Universidad Cat\'olica de Chile,
  Santiago, Chile.}
\email{djpolo@mat.uc.cl}

\vspace{2mm}

\date{\today}

\begin{abstract}
We study the propagation of currents along the interface of two $2$-$d$ magnetic systems, where one of them occupies the first quadrant of the plane.  By considering the tight-binding approximation model and K-theory, we prove that, for an integer number that is given by the difference of two bulk topological invariants of each individual system, such interface currents are quantized.  We further state the necessary conditions to produce corner states for these kinds of underlying systems, and we show that they have topologically protected asymptotic invariants.

\medskip

\noindent
{\bf MSC 2020}:
Primary: 81R60;
Secondary: 46L80, 81R60, 19K56. \\
\noindent
{\bf Keywords}:
{\it Quarter-plane algebra, K-theory, interface currents, corner states.}

\end{abstract}

\maketitle

\tableofcontents

\section{Introduction}\label{sect:intro}

A \emph{magnetic quarter-plane system} is composed of two $2$-$d$ materials subjected to constant magnetic fields so that one material occupies the first quadrant of the plane and the other material the remaining part of the plane; see fig. \ref{fig:my_label} and \ref{eq:def_B}. When the magnetic fields are orthogonal and of distinct intensity, there may exist extended states near the interface between the two materials carrying currents\footnote{Similar to the edge currents in the quantum Hall effect}. Furthermore, under suitable time-dependent perturbation of the system could appear localized states at the corner of the interface. This work deals with the study of topological invariants associated with \emph{interface currents} and \emph{corner states} in magnetic quarter-plane systems for which the interactions between particles are neglected. These new results are a physical manifestation of the K-theory and index theory of certain $C^*$-algebras.

\begin{figure}[ht]
    \centering
    \includegraphics[scale=0.5]{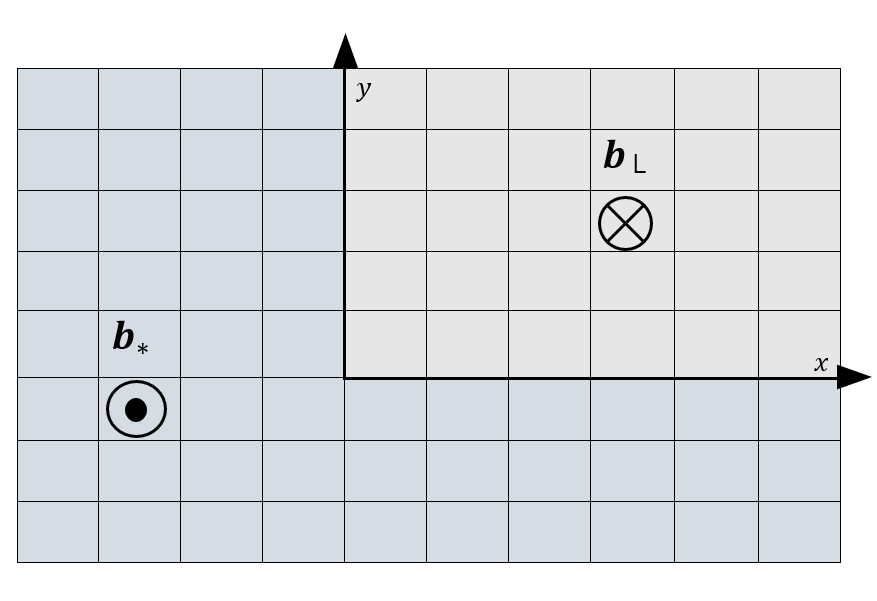}
    \caption{A picture of the magnetic quarter-plane. The interface is the thin region that separates the two materials. The magnetic effects in the full system are modeled by a magnetic field oriented orthogonally to the plane with strength function $B\colon\Z^2\rightarrow \R$ such that has the value $b_\llcorner$ in the first quadrant and $b_\ast$ otherwise.}
    \label{fig:my_label}
\end{figure}
\subsection{Interface currents} There are prior rigorous works about interface currents by considering the \emph{Iwatsuka magnetic system} \cite{Iwa}, \ie each material occupies a half-plane and the Iwatsuka magnetic field is a $y$-independent magnetic field oriented orthogonally to the plane with a strength function $B_I\colon \R^2\rightarrow\R$ so that $\lim\limits_{x\rightarrow\pm\infty}B_I(x)=b_\pm$. An interesting property is that in the Iwatsuka case the interface currents are quantized by the difference of two \emph{Chern numbers} of the two systems. A rigorous mathematical justification for the latter fact  was obtained in \cite{Dom} for continuous systems and in \cite{Kot} for discrete systems.  Using K-theory applied to the Iwatsuka Toeplitz extension, De Nittis and Gutierrez \cite{Deni1} also obtained the same result for discrete systems. It is important to point out that the work \cite{Deni1} also set a mathematical framework for analyzing general magnetic systems with interfaces.

In the tight-binding approximation, we prove that there may exist two topological currents flowing along the faces of the interface in the magnetic quarter-plane systems and that such currents are quantized by an integer number given by the difference of two bulk invariants of each material (Theorem \ref{Teo current}). 
 Following the previous work \cite{Deni1}, the main ingredient to get the latter result is to construct a suitable \emph{quarter-plane exact sequence} (sequence \ref{seq a}) which relates the states localized near the interface with the bulk states of the two subsystems. Then the interface and bulk conductances\footnote{The conductance associated to the interface and bulk currents of the system. We interpret the bulk conductance as the two bulk topological invariants of each material since it provides the value of the Hall conductance in the bulk of a quantum Hall system \cite{Bel1,Bel,Kun,Tho}.} are encoded in elements of some K-groups associated to the quarter-plane exact sequence. Consequently, we show the bulk-interface correspondence of these conductances based on the ideas of \cite{Guo}.  It is worth noting that the proofs also cover the case of the classical quarter-plane obtained in \cite{Guo}  when one of the two materials is chosen to be the vacuum.
Furthermore, we also prove that this result is robust against localized geometrical imperfection of the materials and even disorder effects can be handled, however, we prefer not to discuss this issue.

\subsection{Corner states} One important result in the study of the \emph{quarter-plane algebras} and its K-theory is the following exact sequence
\begin{equation}\label{seq to}
     \xymatrix{
 0\ar[r]&\mathcal{K}\ar[r]^{} & \mathcal{T}^{\alpha,\beta} \ar[r]^{}& \mathcal{S}^{\alpha,\beta}\ar[r]&0}\;,
\end{equation}
where $\mathcal{T}^{\alpha, \beta}$ is the quarter-plane Toeplitz algebra, $\mathcal{K}$ is the algebra of compact operators, and $\mathcal{S}^{\alpha,\beta}$ is the pullback $C^*$-algebra of two suitable Toeplitz algebras. The latter sequence was proved by Douglas and Howe \cite{Dou} in a restricted context, and after that, Park obtained the general case \cite{Par1, Par2}.  In 2018, Hayashi \cite{Hay} defines two topological invariants associated with the sequence \ref{seq to}: one is defined by counting corner states of the system and its non-triviality implies that the corner Hamiltonian\footnote{Periodic family of selfadjoint elements in $\mathcal{T}^{\alpha,\beta}$} is gapless. The other invariant is defined in terms of the spectral gap of the bulk Hamiltonian\footnote{Periodic family of selfadjoint elements in $\mathcal{S}^{\alpha,\beta}$}. Both invariants are elements of some K-groups associated with the sequence \ref{seq to} and there is a topological correspondence between these two invariants \cite[Theorem 3.7]{Hay}.

In the same vein as Hayashi, we construct a Toeplitz extension for the magnetic quarter-plane algebra (Theorem \ref{Teo seq}), and we define two topological invariants associated with such a sequence: the corner invariant that describes the existence of states localized near the point $(0,0)$, and the gapless invariant related with the asymptotic behavior of the system.  Afterward, using the six-term exact sequence on the sequence \ref{Teo seq}, we show that these invariants are in correspondence (Theorem \ref{teo: corner}), and a necessary condition for to obtain non-trivial corner states is that the time-dependent magnetic Hamiltonian is gapless (Corollary \ref{coro fin}).

\medskip

\noindent
{\bf Structure of the paper.}
Section 2 is devoted to the construction of the magnetic quarter-plane algebra and its magnetic hull. In more detail, we give an explicit presentation of the magnetic hull which allows us to describe the magnetic quarter-plane algebra in terms of crossed products. In section 3 we calculate the K-groups of the magnetic quarter-plane algebra, the bulk algebra, and the interface algebra. Moreover, a description of the interface algebra in terms of suitable projections is provided. In section 4, we present the bulk-interface correspondence. We begin this section by introducing the bulk and interface conductances. Afterward, we state the necessary assumptions needed to obtain the bulk-edge correspondence. This section finishes with the proof of the quantization of the interface currents. Section 5 is devoted to the study of the corner states. Indeed, we construct  a suitable Toeplitz extension for the magnetic quarter-plane algebra and after that, we give a characterization of the Fredholm operators in this $C^*$-algebra. Finally, we define two topological invariants associated with the quarter-plane Toeplitz extension and we prove that there is a correspondence between these invariants.

\medskip

\noindent
{\bf Acknowledgements}
I would like to cordially thank G. De Nittis for suggesting me this problem, and for his guidance and support throughout this work. I also would like to express my gratitude to N. J. Buitrago for reading the preliminary version of this paper and for helpful remarks. I have special thanks to E. Gutierrez, J. Gomez, M. Moscolari and S. Teufel for many stimulating discussions. In particular, I am indebted to M. Moscolari for his help in building explicit examples. This research was supported by ANID-Subdirección de Capital Humano/ Doctorado Nacional/ 2022-21220144.  

\section{Magnetic quarter-plane algebra}
This section is devoted to the detailed study of \emph{magnetic quarter-plane algebra} and its \emph{magnetic hull}. For an introduction to the construction of  magnetic $C^*$-algebras, we refer to \cite{Deni1}. 
\subsection{Magnetic fields and Vector Potentials}
In the tight-binding approximation for  two-dimensional magnetic systems, the position space is $\mathbb{Z}^2$ and an orthogonal magnetic field is nothing more than a function $B\colon\mathbb{Z}^2 \rightarrow \mathbb{R}$. For our case of interest let us consider the \emph{quarter-plane  magnetic field} given by
\begin{equation}\label{eq:def_B}
    B(n)\;:=\; \left\{ \begin{array}{lll}
             b_\ast &\quad &  \text{if} \quad n_1 \leq 0\,\,\text{or}\,\,n_2\leq 0\;, \\
           b_{\llcorner} &\quad&  \text{otherwise}\,, 
             \end{array}
   \right.
\end{equation}
where $b_{\llcorner},b_\ast\in\R$ so that $b_{\llcorner}-b_\ast\notin 2\pi\Z$\footnote{
This condition is necessary to get non-trivial topological effects. Otherwise, the flux operator (see \ref{eq:f_B}) is constant.  } and $n=(n_1,n_2)\in \Z^2.$ For every magnetic field defined in $\Z^2$ one can find infinite vector potentials $A_B\colon\Z^2\times\Z^2\rightarrow\R$ whose circulation is exactly $B$, namely such that $B(n) =\mathtt{Cir}[A_B](n)$, for all $n\in\Z^2$, where
\begin{equation*}
    \begin{split}
     \mathtt{Cir}[A_B](n)\;:=\; &A_B(n,n-e_1)+A_B(n-e_1,n-e_1-e_2)\\&+A_B(n-e_1-e_2,n-e_2)+A_B(n-e_2,n)  \;.
    \end{split}
\end{equation*}
Here $e_1:=(1,0)$ and $e_2:=(0,1)$ denote the canonical basis of $\Z^2$.
A \emph{standard} choice for the vector potential associated with the magnetic field $B$ is given by
   \begin{equation}\label{eq:A_B}
       A_B(n,n-e_j)\;:=\;\delta_{j,1}\left(\delta_{n_2>0}\sum_{m=1}^{n_2}B(n_1,m)-\delta_{n_2<0}\sum_{m=0}^{|n_2|-1}B(n_1,-m)\right)\;,
       \end{equation}
        where $n=(n_1,n_2)\in \Z^2$. The check that the circulation of this vector potential provides $B$ is a matter of a straightforward calculation.
 \begin{remark}
 Since the magnetic field \eqref{eq:def_B} is not invariant under translation in some direction, it follows that there are not \emph{Landau-type vector potentials} (\cf \cite[Section 2.2]{Deni3}) for the magnetic field $B$.
 \end{remark}  
 We can introduce the magnetic translations $\mathfrak{s}_1$ and $\mathfrak{s}_2$ associated to the given magnetic potential $A_B$ as
  \begin{equation}\label{eq:mag_tras}
    (\mathfrak{s}_j\psi)(n)\;:=\;\expo{\ii A_B(n,n-e_j)  }\psi(n-e_j)\;,\qquad j=1,2
    \end{equation} 
where $\psi\in \ell^2(\Z^2)$. Let $f_B\colon\Z^2\to \n{U}(1)$ be the function
\begin{equation}\label{eq:f_B}
f_B(n)\;:=\;\expo{\ii B(n)}\;,\qquad n\in\Z^2
\end{equation}
which provides the  exponential of  the \emph{magnetic flux} through the  unit cell sited in $n$, and define the associated \emph{flux operator} as
\begin{equation}
    (\mathfrak{f}_B\psi)(n)\;:=\; f_B(n)\psi(n)\;,\qquad \psi\in \ell^2(\Z^2)\;.
\end{equation}
A straightforward computation shows that the magnetic translations
 satisfy the condition
\begin{equation}\label{equ 2.4}
    \mathfrak{s}_1\;\mathfrak{s}_2\;\mathfrak{s}_1^*\;\mathfrak{s}_2^*\;=\;\mathfrak{f}_B\;.
\end{equation}

\medskip

Once the magnetic translations have been defined, let us introduce the magnetic algebra associated with them.
\begin{definition}
The \emph{magnetic quarter-plane} algebra $\A_B$ (in the gauge $A_B$ fixed by \eqref{eq:A_B}) is the  $C^*$-algebra inside the bounded operators $\mathcal{B}(\ell^2(\Z^2))$ generated by the magnetic translations $\mathfrak{s}_j$ defined by \eqref{eq:mag_tras}, \ie
$$\A_B\;:=\;C^*(\mathfrak{s}_1,\mathfrak{s}_2)\;.$$
\end{definition}
From its definition, it turns out that $\A_B$ is an unital $C^*$-algebra.
\begin{remark}
Since there are infinite magnetic potentials whose circulation is exactly $B$, one could be worried about the loss of generality in the definition of the magnetic algebra $\A_B$.  This is however not the case, since every pair of magnetic potentials, let us say $A_B$ and $A_B'$, with equal circulation, are connected by a gauge function $G_B:\Z^2\rightarrow \R$ according to
$$
A_B(n,m)\; =\;A_B'(n,m) +G_B(n)-G_B(m)\;,\qquad|n-m|\;=\; 1\;.
$$
Therefore, if the magnetic translations associated with the magnetic potentials $A_B$ and $A_B'$ are $\mathfrak{s}_{j}$ and $\mathfrak{s}'_{j}$ respectively, then it holds that 
$$
\mathfrak{s}'_{j}\;=\; \expo{-\ii G}\;\mathfrak{s}_{j}\;\expo{\ii G}\;,\qquad j= 1,2\;.
$$
As a consequence, the associated magnetic algebras are  unitarily equivalent. 
\end{remark}

\subsection{Magnetic Hull}
Let $\mathfrak{m}\colon\ell^\infty(\Z^2)\rightarrow \mathcal{B}(\ell^2(\Z^2))$ be the injective $\ast$-morphism which associates to each $g\in\ell^\infty(\Z^2)$ the multiplication operator $\mathfrak{m}_g$ defined by
$$
(\mathfrak{m}_g\psi)(n)\;:=\;g(n)\psi(n)\;,\qquad \psi\in \ell^2(\Z^2)\;.
$$
An example of this type of operator is given by the flux operator $\mathfrak{f}_B=\mathfrak{m}_{f_B}$.
 Consider the $\Z^2$-action defined over $\ell^\infty(\Z^2)$ by
\begin{equation}
    \tau_\gamma(g)(n)\,:=g(n-\gamma)\;,\qquad n,\gamma\in \Z^2\;,
\end{equation}
for every $g\in \ell^\infty(\Z^2)$. It holds true that
\begin{equation}\label{equ 2.6}
\tau_\gamma\left(\mathfrak{m}_{g}\right)\;:=\;    \mathfrak{m}_{\tau_\gamma(g)}\;=\;(\mathfrak{s}_1)^{\gamma_1}\;(\mathfrak{s}_2)^{\gamma_2}\;\mathfrak{m}_g\;(\mathfrak{s}_2)^{-\gamma_2}\;(\mathfrak{s}_1)^{-\gamma_1}
\end{equation}
 with $\gamma=(\gamma_1,\gamma_2)$.
In view of  \eqref{equ 2.4} and \eqref{equ 2.6} one gets that 
$$
\tau_\gamma(\mathfrak{f}_B)\;=\;\mathfrak{m}_{\tau_\gamma({f}_B)}\;\in\; \A_B\;,\qquad \forall\;\gamma\in \Z^2\;.
$$ 
Therefore, the $C^*$-algebra
\begin{equation}
    \mathfrak{F}_B\;:=\;C^*\big(\mathfrak{1},\{\tau_\gamma(\mathfrak{f}_B)\;|\;\gamma\in \Z^2\}\big)\;\subset\; \A_B
\end{equation}
 is a sub $C^*$-algebra of $\A_B$ that is commutative and with unit. Moreover,
by construction, it is invariant  under the $\Z^2$-action implemented by the translations $\tau$.
  It follows from Gelfand-Naímark isomorphism that there exists a compact Hausdorff space $\Omega_B$ such that $\mathfrak{F}_B\simeq C(\Omega_B)$. Such space $\Omega_B$ will be called the \emph{magnetic hull} of the algebra $\A_B$.  In the following, we will focus on the description of the space $\Omega_B$.
 
\medskip

Given a subset $\mathtt{O}\subseteq \Z^2$ let $\chi_\mathtt{O}$ be the \emph{characteristic function} of $\mathtt{O}$ and $\mathfrak{m}_{\chi_\mathtt{O}}$ the associate multiplication operator. Evidently $\mathfrak{m}_{\chi_\mathtt{O}}$ is a projection. 
\begin{proposition}\label{Prop projections}
Given $n\in \Z^2$ define the sets 
\begin{equation}
    \begin{split}
       & \mathtt{R}_n\;:=\;\{n+je_1\,|\,j\in \N_0\}\;,\\
        & \mathtt{U}_n\;:=\;\{n+je_2\,|\,j\in \N_0\}\;,\\
         & \mathtt{Q}_n\;:=\;\{n+je_1+ke_2\,|\,j,k\in \N_0\}\;,\\
          &\mathtt{Q}_n^{\rm c}\;:=\;\Z^2\setminus \mathtt{Q}_n\;,
    \end{split}
\end{equation}
where $\N_0:=\N\cup\{0\}$. Then, for every $n\in \Z^2$, the projections $\mathfrak{r}_n:=\mathfrak{m}_{\chi_{\mathtt{R}_n}}$, $\mathfrak{u}_n:=\mathfrak{m}_{\chi_{\mathtt{U}_n}}$, $\mathfrak{q}_n:=\mathfrak{m}_{\chi_{\mathtt{Q}_n}}$ and $\mathfrak{q}^\bot_n:=\mathfrak{m}_{\chi_{\mathtt{Q}^{\rm c}_n}}$  are elements of $\mathfrak{F}_B$. 
\end{proposition}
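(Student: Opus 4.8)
The plan is to show that each of the four projections lies in $\mathfrak{F}_B$ by exhibiting it as a limit of polynomials in the generators $\tau_\gamma(\mathfrak{f}_B)$, $\gamma\in\Z^2$. The key observation is that $f_B$ takes only the two values $\expo{\ii b_\ast}$ and $\expo{\ii b_\llcorner}$, and by the hypothesis $b_\llcorner-b_\ast\notin 2\pi\Z$ these two values are distinct. Hence $f_B$ is an affine function of the characteristic function $\chi_{\mathtt{Q}_{e_1+e_2}}$ of the open quadrant: precisely, $f_B = \expo{\ii b_\ast}\mathfrak{1} + (\expo{\ii b_\llcorner}-\expo{\ii b_\ast})\chi_{\mathtt{Q}_{e_1+e_2}}$ as functions on $\Z^2$ (one must check which quadrant carries $b_\llcorner$ from \eqref{eq:def_B}: $B(n)=b_\llcorner$ iff $n_1\ge 1$ and $n_2\ge 1$, i.e. on $\mathtt{Q}_{e_1+e_2}$). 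Dividing by the nonzero constant $\expo{\ii b_\llcorner}-\expo{\ii b_\ast}$ and subtracting the constant multiple of the identity $\mathfrak{1}$, which is a generator of $\mathfrak{F}_B$, shows $\mathfrak{m}_{\chi_{\mathtt{Q}_{e_1+e_2}}}=\mathfrak{q}_{e_1+e_2}\in\mathfrak{F}_B$. Since $\mathfrak{F}_B$ is by construction invariant under the translations $\tau_\gamma$ and $\tau_\gamma(\mathfrak{m}_{\chi_{\mathtt{O}}})=\mathfrak{m}_{\chi_{\mathtt{O}+\gamma}}$ by \eqref{equ 2.6}, we immediately get $\mathfrak{q}_n=\mathfrak{m}_{\chi_{\mathtt{Q}_n}}\in\mathfrak{F}_B$ for every $n\in\Z^2$, and consequently $\mathfrak{q}_n^\bot=\mathfrak{1}-\mathfrak{q}_n\in\mathfrak{F}_B$.

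Next I would produce the half-line projections $\mathfrak{r}_n$ and $\mathfrak{u}_n$ from the quadrant projections just obtained. The geometric identity to exploit is that a half-line is the set-theoretic difference of two translated quadrants: $\mathtt{R}_n = \mathtt{Q}_n\setminus\mathtt{Q}_{n+e_2}$ and $\mathtt{U}_n = \mathtt{Q}_n\setminus\mathtt{Q}_{n+e_1}$, since $\mathtt{Q}_{n+e_2}\subseteq\mathtt{Q}_n$ and $\mathtt{Q}_n\setminus\mathtt{Q}_{n+e_2}$ is exactly the bottom row $\{n+je_1\mid j\in\N_0\}$ (and symmetrically for the left column). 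In terms of the multiplication operators this reads $\mathfrak{r}_n=\mathfrak{q}_n-\mathfrak{q}_{n+e_2}$ and $\mathfrak{u}_n=\mathfrak{q}_n-\mathfrak{q}_{n+e_1}$, both of which are differences of elements of $\mathfrak{F}_B$, hence elements of $\mathfrak{F}_B$. (One checks these are genuine projections by noting $\mathfrak{q}_{n+e_2}\le\mathfrak{q}_n$, so the difference is the projection onto the complementary subspace; this is automatic and needs no separate argument.) This completes all four cases.

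I do not anticipate a serious obstacle here: the whole argument is the elementary remark that $\mathfrak{f}_B$ is, up to an affine rescaling by the constants fixed by \eqref{eq:def_B}, the quadrant projection itself, together with the translation-invariance of $\mathfrak{F}_B$ already established in the text. The one place that requires a moment of care is the very first step — correctly reading off from \eqref{eq:def_B} that the value $b_\llcorner$ sits precisely on $\mathtt{Q}_{e_1+e_2}=\{n:n_1\ge 1,\ n_2\ge 1\}$ (the strict inequalities), so that the target projection is $\mathfrak{q}_{e_1+e_2}$ and not $\mathfrak{q}_{(0,0)}$ — and checking that the affine coefficient $\expo{\ii b_\llcorner}-\expo{\ii b_\ast}$ is nonzero, which is exactly where the standing hypothesis $b_\llcorner-b_\ast\notin 2\pi\Z$ enters. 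Everything else is bookkeeping with characteristic functions.
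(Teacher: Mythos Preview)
Your argument is correct and is essentially the same as the paper's. Both rest on the single observation that $\mathfrak{f}_B=\expo{\ii b_\ast}\mathfrak{1}+(\expo{\ii b_\llcorner}-\expo{\ii b_\ast})\mathfrak{q}_{e_1+e_2}$, so that the quadrant projection (and its translates) lies in $\mathfrak{F}_B$; the paper then writes $\mathfrak{r}_0$ and $\mathfrak{u}_0$ directly as rescaled differences $\tau_{(-e_1-e_2)}(\mathfrak{f}_B)-\tau_{(-e_1)}(\mathfrak{f}_B)$ and $\tau_{(-e_1-e_2)}(\mathfrak{f}_B)-\tau_{(-e_2)}(\mathfrak{f}_B)$, which after unwinding are exactly your identities $\mathfrak{r}_n=\mathfrak{q}_n-\mathfrak{q}_{n+e_2}$ and $\mathfrak{u}_n=\mathfrak{q}_n-\mathfrak{q}_{n+e_1}$. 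Your ordering (first $\mathfrak{q}$, then the half-lines as differences) is a bit more streamlined, but the content is identical.
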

\begin{proof}
For any $m=(m_1,m_2)\in \Z^2$ one has that
\begin{equation*}
    \begin{split}
f_B(m+e_1+e_2)-f_B(m+e_1)\;
   &=\;\left\{ \begin{array}{lll}
              \expo{\ii b_{\llcorner}}- \expo{\ii b_{\ast}} &\quad&   \text{if}\,\,m_1\geq 0\;\;\text{and}\;\;m_2=0\;,\\
          0 &\quad& \text{otherwise}\;.
             \end{array}
   \right.
    \end{split}
\end{equation*}
Furthermore, $\expo{\ii b_\llcorner}-\expo{\ii b_{\ast}}\neq 0$ in view of $b_\llcorner-b_\ast\notin 2\pi \Z$ and this with the above yield
$$
\mathfrak{r}_0\;=\;(\expo{\ii b_\llcorner}-\expo{\ii b_{\ast}})^{-1}\big(\tau_{(-e_1-e_2)}(\mathfrak{f}_B)-\tau_{(-e_1)}(\mathfrak{f}_B)\big)\;\in\;\mathfrak{F}_B$$
and $\mathfrak{r}_n=\tau_n(\mathfrak{r}_0)\in\mathfrak{F}_B$ for every
$n\in \Z^2$. In a similar way, one has that
\begin{equation*}
    \begin{split}
f_B(m+e_1+e_2)-f_B(m+e_2)\;&=\;\left\{ \begin{array}{lll}
              \expo{\ii b_{\llcorner}}- \expo{\ii b_{\ast}} &\quad&   \text{if}\,\,m_1= 0\;\;\text{and}\;\;m_2\geq0\;,\\
          0 &\quad&  \text{otherwise}\;,
             \end{array}
   \right.
    \end{split}
\end{equation*}
and in turn
$$
\mathfrak{u}_{0}\;=\;(\expo{\ii b_\llcorner}-\expo{\ii b_{\ast}})^{-1}\big(\tau_{(-e_1-e_2)}(\mathfrak{f}_B)-\tau_{(-e_2)}(\mathfrak{f}_B)\big)\;\in\; \mathfrak{F}_B\;.
$$
As a consequence $\mathfrak{u}_n=\tau_n(\mathfrak{u}_0)\in\mathfrak{F}_B$. Finally
\begin{equation*}
    \begin{split}
f_B(m+e_1+e_2)-\expo{\ii b_\ast}\;&=\;\left\{ \begin{array}{lll}
              0 &\quad &  \text{if}\,\,m_1\leq 0\;\;\text{or}\;\;m_2\leq0\;,\\
          \expo{\ii b_\llcorner}-\expo{\ii b_\ast} &\quad&  \text{otherwise}\;,
             \end{array}
   \right.
    \end{split}
\end{equation*}
which implies that
$$
\mathfrak{q}_{0}\;=\;(\expo{\ii b_\llcorner}-\expo{\ii b_{\ast}})^{-1}\big(\tau_{(-e_1-e_2)}(\mathfrak{f}_B)-\expo{\ii b_{\ast}}\mathfrak{1}\big)\;\in\; \mathfrak{F}_B\;,
$$
and in turn $\mathfrak{q}_n=\tau_n(\mathfrak{q}_0)\in\mathfrak{F}_B$ for every
$n\in \Z^2$. To conclude the proof it is sufficient to observe that $\mathfrak{q}^\bot_n=\mathfrak{1}-\mathfrak{q}_{n}$.
\end{proof}

\begin{remark}
Let us say a few words about the notation introduced above. The symbols $\mathtt{R}_n$ and $\mathtt{U}_n$ stand for the \virg{right half-line} and the \virg{up half-line} at the point $n$, respectively. In the same vein, the symbol $\mathtt{Q}_n$ stands for the \virg{quarter-plane} at the point $n$.
\end{remark}

\medskip

Let $C_0(\Z^2)$ be the $C^*$-algebra of continuous functions which vanish at infinity and $C_\infty(\Z^2)=C_0(\Z^2)+\C\mathfrak{1}$ its unitization, \ie, the set of continuous functions with a limit at infinity. 
Due to the discreteness of $\Z^2$, every function on $\Z^2$ is automatically continuous. This provides the identification $\ell^\infty(\Z^2)\equiv C_b(\Z^2)$, where  $C_b$ denotes the set of bounded continuous functions.  Let us define
$$
\mathfrak{M}_\infty\;:=\;\left\{\mathfrak{m}_g\;|\; g\in C_\infty(\Z^2)\right\}\;,\qquad \mathfrak{M}^\infty\;:=\;\left\{\mathfrak{m}_g\;|\; g\in \ell^\infty(\Z^2)\right\}\;.
$$
Both $\mathfrak{M}_\infty$ and $\mathfrak{M}^\infty$ are unital sub-$C^*$-algebras of $\mathcal{B}(\ell^2(\Z^2))$ invariant under 
the $\Z^2$-action implemented by the translations $\tau$. It holds true that
\begin{equation}\label{eq:inclu}
\mathfrak{M}_\infty\;\subset\;\mathfrak{F}_B\;\subset\;\mathfrak{M}^\infty\;.
\end{equation}
The second inclusion in \eqref{eq:inclu} follows from the observation that $\mathfrak{f}_B\in\mathfrak{M}^\infty$. For the first inclusion let us observe that $\mathfrak{z}_n:=\mathfrak{r}_n\mathfrak{u}_n$ is the projection on the site $n$, \ie, 
the multiplication operator for the characteristic function $\chi_{\{n\}}$, for every $n\in\Z^2$. Therefore, since 
$\mathfrak{z}_n\in \mathfrak{M}_\infty\cap\mathfrak{F}_B$ and the $\mathfrak{z}_n$ generates $\mathfrak{M}_\infty$, one deduces the first inclusion of \eqref{eq:inclu}. In view of the  Gelfand-Naímark isomorphism, one gets that  
$\mathfrak{M}_\infty\simeq C(\Z^2_\infty)$ and $\mathfrak{M}^\infty\simeq C(\beta\Z^2)$ where $\Z^2_\infty$ and $\beta\Z^2$ are the \emph{one-point compactification} and the \emph{Stone-\v{C}ech compactification} of $\Z^2$, respectively \cite[Section 1.3]{gracia-varilly-figueroa-01}. Therefore $\Omega_B$ represents a compactification of $\Z^2$ which is between the one-point compactification and the Stone-\v{C}ech compactification. 
It is worth noticing that the setting we are considering has been studied in a certain generality in
\cite{georgescu-iftimovici-02,georgescu-iftimovici-06}.  In the following, we will provide a concrete model for $\Omega_B$.

\medskip

Let  $\infty_\ast,\infty_\llcorner,\infty_j^\mathtt{U},\infty_k^\mathtt{R}$ be given points with $j,k\in \Z$,  and define
$$\Omega_B\;:=\;\Z^2\;\cup\;\{\infty_\ast\}\;\cup\;\{\infty_\llcorner\}\;\cup\;\mathtt{U}_\infty\;\cup\;\mathtt{R}_\infty\;.
$$
where
$$
\mathtt{U}_\infty\;:=\;\bigcup_{j\in \Z}\; \{\infty_j^\mathtt{U}\}\;,\qquad \mathtt{R}_\infty\;:=\;\bigcup_{k\in \Z}\; \{\infty_k^\mathtt{R}\}\;.
$$
In order to introduce a convenient  topology on $\Omega_B$ we will first provide a basis for it. Let us introduce some convenient notation:
\begin{equation}
\begin{split}
   & \overline{\mathtt{R}_n}\;:=\;\mathtt{R}_n\;\cup\; \{\infty_{n_2}^\mathtt{R}\}\;,\\ &\overline{\mathtt{U}_n}\;:=\;\mathtt{U}_n\;\cup\; \{\infty_{n_1}^\mathtt{U}\}\;,\\
   &\overline{\mathtt{Q}_n}\;:=\;\mathtt{Q}_n\;\cup\;\{\infty_\llcorner\}\;\bigcup_{j\geq n_1}\;\{\infty_j^\mathtt{U}\}\;\bigcup_{j\geq n_2}\;\{\infty_j^\mathtt{R}\}\;,\\
   &\overline{\mathtt{Q}_n^{\rm c}}\;:=\;\mathtt{Q}_n^{\rm c}\;\cup\;\{\infty_\ast\}\;\bigcup_{j< n_1}\;\{\infty_j^\mathtt{U}\}\;\bigcup_{j< n_2}\;\{\infty_j^\mathtt{R}\},
\end{split}
\end{equation}
where $n=(n_1,n_2)\in \Z^2.$ It is  straightforward to check
that the collection
$$
\bb{Q}\;:=\;\left\{\{n\},\overline{\mathtt{R}_n},\overline{\mathtt{U}_n},\overline{\mathtt{Q}_n},\overline{\mathtt{Q}_n^{\rm c}}\;\big|\; n\in\Z^2\right\}\;,
$$
meets the conditions to be the basis of a topology for
 $\Omega_B$. We shall denote with
 $\mathtt{Top}_\bb{Q}$ to the topology on $\Omega_B$ generated by $\bb{Q}$.
\begin{proposition}
The topological space
$(\Omega_B, \mathtt{Top}_\bb{Q})$ is compact 
and $\Z^2$ sits inside $\Omega_B$ as an open dense subset. 
\end{proposition}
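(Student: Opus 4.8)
The strategy is to verify directly that the basis $\bb{Q}$ generates a topology with the two required properties, working mostly at the level of the explicit neighborhood sets introduced above. For compactness, the natural approach is to show that every cover by basic open sets admits a finite subcover; since $\bb{Q}$ is a basis this suffices. First I would observe that the two ``global'' points $\infty_\ast$ and $\infty_\llcorner$ force most of the structure: any basic open set containing $\infty_\llcorner$ must be some $\overline{\mathtt{Q}_n}$, and any basic open set containing $\infty_\ast$ must be some $\overline{\mathtt{Q}_m^{\rm c}}$. Fixing such sets $\overline{\mathtt{Q}_n}$ and $\overline{\mathtt{Q}_m^{\rm c}}$ from the given cover, I would note that $\overline{\mathtt{Q}_n}\cup\overline{\mathtt{Q}_m^{\rm c}}$ already contains all of $\Omega_B$ \emph{except} for a region that is, up to finitely many points of $\Z^2$, a finite union of ``truncated'' half-lines $\overline{\mathtt{R}_k}$ and $\overline{\mathtt{U}_j}$ (those indexed by the finitely many $j,k$ lying between the relevant coordinates of $n$ and $m$), together with a finite rectangular block of lattice points. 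Each leftover half-line point $\infty_j^\mathtt{U}$ or $\infty_k^\mathtt{R}$ lies in some basic open set from the cover, and such a set necessarily contains a cofinite piece of the corresponding half-line; hence finitely many of these, plus finitely many singletons $\{n\}$ to mop up the remaining finite lattice block, complete the finite subcover. The bookkeeping of exactly which half-line indices survive is the only mildly delicate point, but it is finite and combinatorial.

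\textbf{Openness and density of $\Z^2$.} Every singleton $\{n\}$ is in $\bb{Q}$, so $\Z^2=\bigcup_{n\in\Z^2}\{n\}$ is open. For density I would check that every nonempty basic open set meets $\Z^2$: the sets $\{n\}$, $\overline{\mathtt{R}_n}$, $\overline{\mathtt{U}_n}$, $\overline{\mathtt{Q}_n}$, $\overline{\mathtt{Q}_n^{\rm c}}$ each contain the respective lattice sets $\{n\}$, $\mathtt{R}_n$, $\mathtt{U}_n$, $\mathtt{Q}_n$, $\mathtt{Q}_n^{\rm c}$, all nonempty; so the closure of $\Z^2$ is all of $\Omega_B$. (It is also worth recording, though not strictly demanded by the statement, that the subspace topology on $\Z^2$ is discrete, which is immediate from $\{n\}\in\bb{Q}$, and that the new points are not isolated, since any basic neighborhood of $\infty_\ast$, $\infty_\llcorner$, $\infty_j^\mathtt{U}$ or $\infty_k^\mathtt{R}$ contains infinitely many lattice points.)

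\textbf{Main obstacle.} The one point requiring genuine care is confirming that $\bb{Q}$ really is a basis, i.e.\ that finite intersections of its members are unions of members of $\bb{Q}$; the compactness argument implicitly uses the description of which basic sets can contain each given point, and that description is only legitimate once we know the topology is exactly the one generated by $\bb{Q}$. Concretely I would run through the intersection patterns: $\overline{\mathtt{Q}_n}\cap\overline{\mathtt{Q}_{n'}}=\overline{\mathtt{Q}_{n''}}$ with $n''$ the coordinatewise maximum, $\overline{\mathtt{Q}_n^{\rm c}}\cap\overline{\mathtt{Q}_{n'}^{\rm c}}$ is again of the same form (coordinatewise minimum, roughly), $\overline{\mathtt{Q}_n}\cap\overline{\mathtt{Q}_m^{\rm c}}$ intersects only in lattice points and half-line points with indices in a bounded range and is therefore a finite union of singletons and truncated half-lines, and intersections involving $\overline{\mathtt{R}_n}$ or $\overline{\mathtt{U}_n}$ reduce to sub-half-lines or singletons. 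Since the paper already asserts ``it is straightforward to check that $\bb{Q}$ meets the conditions to be a basis,'' I would keep this verification brief and lean on it, devoting the bulk of the written proof to the compactness extraction and the one-line density argument.
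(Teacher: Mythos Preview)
Your proposal is correct and follows essentially the same route as the paper: cover $\infty_\llcorner$ and $\infty_\ast$ first by some $\overline{\mathtt{Q}_n}$ and $\overline{\mathtt{Q}_m^{\rm c}}$, observe that only finitely many half-line points $\infty_j^\mathtt{U},\infty_k^\mathtt{R}$ remain, cover those, and mop up the leftover finite lattice block with singletons. The paper treats the basis verification and the density of $\Z^2$ as immediate, whereas you spell them out, but the structure of the compactness argument is identical.
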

\begin{proof}
It is easy to see that $\Z^2$ is an open dense subset of $\Omega_B$ by the definition of $\mathtt{Top}_\bb{Q}$, hence let us prove the compactness.
Let $\{\mathtt{A}_\alpha\}_{\alpha\in \mathcal{I}}$ be an open cover of $\Omega_B$. Note that there exist $\alpha_\ast,\alpha_\llcorner\in \mathcal{I}$ such that $\infty_\ast\in \mathtt{A}_{\alpha_\ast}$ and $\infty_\llcorner\in \mathtt{A}_{\alpha_\llcorner}$. This implies that $\overline{\mathtt{Q}_n}\subset \mathtt{A}_{\alpha_\llcorner}$ and $\overline{\mathtt{Q}_m^{\rm c}}\in \mathtt{A}_{\alpha_\ast}$ for some $m=(m_1,m_2)$ and $n=(n_1,n_2)$ in $\Z^2$, in view of the fact that $\mathtt{A}_{\alpha_\ast}$ and $\mathtt{A}_{\alpha_\llcorner}$ are open.
Observe that there are at most a finite number of  elements $\infty_j^\mathtt{U}$ and $\infty_k^\mathtt{R}$ in the complement of $\overline{\mathtt{Q}_n}\cup\overline{\mathtt{Q}_m^{\rm c}}$. Therefore there is a finite number of elements of the cover such that $\infty_j^\mathtt{U}\in\overline{\mathtt{U}_{(j,s_j)}}\subset \mathtt{A}_{\alpha_j}$ and $\infty_k^\mathtt{R}\in\overline{\mathtt{R}_{(r_k,k)}}\subset \mathtt{A}_{\alpha_k}$ with $s_j, r_k\in\Z$ suitable coordinates.
As a result, the complement of the union of the finite collection of open sets $\mathtt{A}_{\alpha_\llcorner}$, $\mathtt{A}_{\alpha_\ast}$, $ \mathtt{A}_{\alpha_j}$ and  $\mathtt{A}_{\alpha_k}$ is at most a finite set of points of $\Z^2$.
As a consequence,  $\Omega_B$ can be covered by a finite number of elements of $\{\mathtt{A}_\alpha\}_{\alpha\in \mathfrak{I}}$
showing that $(\Omega_B, \mathtt{Top}_\mathcal{B})$ is a compact topological space. 
 \end{proof}
 Let $X$ be a discrete topological space.  A filter on $X$ is a nonempty family $\mathcal{F}$ of subsets of $X$, which is closed under finite intersections, does not contain the empty set, and for any $A\in\mathcal{F}$ such that if $A\subset B\subset X$, then $B\in \mathcal{F}.$ If $\mathcal{F}$ and $\mathcal{G}$ are filters on $X$, then we said that $\mathcal{G}$ is finer than $\mathcal{F}$ if $\mathcal{F}\subset \mathcal{G}$. An ultrafilter on $X$ is a filter that is not properly contained in any other filter on $X$. The space $\beta X$ of all ultrafilters on $X$ is a compact space, actually, $\beta X$ is the Stone-\v{C}ech compactification of $X$ (see \cite{Hin, Sam} and references therein).
 \medskip
 
 Let $D$ be a topological space and $f\colon X\rightarrow D$ any function. For a filter $\mathcal{F}$  on $X$ we say that $\lim_{\mathcal{F}}f=y$ or $\mathcal{F}$-limit of $f$ is $y$ if and only if for any open neighborhood $U$ of $y$ we have $f^{-1}(U)\in \mathcal{F}.$ It is worth pointing out that $\mathcal{F}$-limits  coincide with the general concept of limit in a topological space for functions defined on $\beta X$ \cite[Theorem 3.46]{Hin}. Note also that if $\lim_{\mathcal{F}}f=y$, then it is easy to see that for any filter $\mathcal{G}$ finer than $\mathcal{F}$ it holds that $\lim_{\mathcal{G}}f=y$.
 
 \medskip
 
 Recall that $\Z^2\subset \Omega_B\subset \beta \Z^2$, hence we can describe the elements of $\partial \Omega_B:=\Omega_B\setminus \Z^2$ as suitable class of ultrafilters on $\Z^2$ \cite[Lemma 2.1]{georgescu-iftimovici-06}. For that, let us define the following collections
 \begin{equation*}
 \begin{split}
     \mathcal{U}_j\;&:=\;\{A\subset \Z^2\;|\;\mathtt{U}_{(j,n_1)}\subset A,\;{\rm for\;some\;}n_1\in \Z\}\\
     \mathcal{R}_j\;&:=\;\{A\subset \Z^2\,|\;\mathtt{R}_{(n_1,j)}\subset A,\;{\rm for\;some\;}n_1\in \Z\}\\
     \mathcal{Q}_\llcorner\;&:=\;\{A\subset \Z^2\,|\;\mathtt{Q}_{n}\subset A,\;{\rm for\;some\;}n\in \Z^2\}\\
      \mathcal{Q}_\ast\;&:=\;\{A\subset \Z^2\,|\;\mathtt{Q}^c_{n}\subset A,\;{\rm for\;some\;}n\in \Z^2\}
 \end{split}
 \end{equation*}
One can see that the foregoing collections are filters of $\Z^2$. Furthermore, the continuity of each $g\in C(\Omega_B)$ and the definition of $\mathtt{Top}_\bb{Q}$ imply that
\begin{equation}\label{limits}
    \begin{split}
      \lim_{\mathcal{U}_j}g&\;=\;g(\infty^\U_j)\;,\hspace{1cm} \lim_{\mathcal{R}_j}g\;=\;g(\infty^{\rz}_j)\;,\hspace{1cm}j\in \Z\;,\\
    \lim_{\mathcal{Q}_\llcorner}g&\;=\;g(\infty_\llcorner)\;,\hspace{1cm}
     \lim_{\mathcal{Q}_\ast}g\;=\;g(\infty_\ast)\;.
    \end{split}
\end{equation}
Actually, if $f\colon \Z^2\rightarrow \C$ and all the limits in \ref{limits} exist for $f$, then $f$ has unique continuous extension on $\Omega_B$. Indeed, let $\hat{f}$ be the extension of $f$ on $\Omega_B$ given by
$$\hat{f}(\omega)\;:=\;\left\{ \begin{array}{lcc}
             {f}(\omega) &   \;if\,\,\,\omega\in \Z^2\\
\lim_{\mathcal{U}_j}f           & \; if\,\,\,\omega=\infty^\mathtt{U}_j\\
\lim_{\mathcal{R}_j}f           & \; if\,\,\,\omega=\infty^\mathtt{R}_j\\
\lim_{\mathcal{Q}_\llcorner}f           & \; if\,\,\,\omega=\infty_\llcorner\\
\lim_{\mathcal{Q}_\ast}f          & \; if\,\,\,\omega=\infty_\ast
             \end{array}
   \right.$$
From the definition of the limit, we have that for any $\epsilon>0$ there is $A\in \mathcal{U}_j$ such that $|\hat{f}(\infty_j^\mathtt{U})-\hat{f}(x)|<\epsilon$ for all $x\in A.$ In particular, since $\mathtt{U}_n\subset A$ for some $n\in \Z^2$, then for all $x\in \mathtt{U}_n $ it is true that $|\hat{f}(\infty_j^\mathtt{U})-\hat{f}(x)|<\epsilon$, which implies that $\hat{f}$ is continuous in $\infty_j^\mathtt{U}.$
The continuity of $\hat{f}$ in the other points of $\partial \Omega_B$ follows with the same argument. It turns out that $\hat{f}\in C(\Omega_B)$ and this extension is unique due to that $\Z^2$ is a dense subset of $\Omega_B.$ Thus,
one concludes that the elements of $\partial\Omega_B$ can be identify as follows
   \begin{equation}\label{2,14}
       \begin{split}
           \infty_j^\U\;&\equiv\;\{ \mathcal{G}\in \beta \Z^2\;|\;\mathcal{U}_j\subset \mathcal{G}\}\;,\qquad
            \infty_j^\rz\;\equiv\;\{ \mathcal{G}\in \beta \Z^2\;|\;\mathcal{R}_j\subset \mathcal{G}\}\;,\\
             \infty_\llcorner\;&\equiv\;\{ \mathcal{G}\in \beta \Z^2\;|\;\mathcal{Q}_\llcorner\subset \mathcal{G}\}\;,\qquad
              \infty_\ast\;\equiv\;\{ \mathcal{G}\in \beta \Z^2\;|\;\mathcal{Q}_\ast\subset \mathcal{G}\}\;.
       \end{split}
   \end{equation}

 \begin{proposition}
 It holds true that
 $\mathfrak{F}_B\simeq C(\Omega_B)$ as $C^*$-algebras.
 \end{proposition}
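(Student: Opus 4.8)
The plan is to realise both sides concretely inside $C_b(\Z^2)\equiv\ell^\infty(\Z^2)$ and to match their ranges. Via the injective $\ast$-morphism $\mathfrak{m}$ I identify $\mathfrak{F}_B$ with the $C^*$-subalgebra $\mathcal{F}_B:=C^*\big(\mathfrak{1},\{\tau_\gamma(f_B)\mid\gamma\in\Z^2\}\big)$ of $C_b(\Z^2)$. Since $\Z^2$ is open and dense in $\Omega_B$, the restriction map $\Phi\colon C(\Omega_B)\to C_b(\Z^2)$, $\Phi(g):=g|_{\Z^2}$, is a unital, injective (hence isometric) $\ast$-homomorphism with closed range. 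The whole statement then reduces to the single identity $\Phi\big(C(\Omega_B)\big)=\mathcal{F}_B$, which I would establish by proving the two inclusions separately; this is exactly the step that pins the compactification $\Omega_B$, a priori only known to lie between $\Z^2_\infty$ and $\beta\Z^2$, to the right place.

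First inclusion, $\mathcal{F}_B\subseteq\Phi(C(\Omega_B))$. The range of $\Phi$ is precisely the set $E$ of $f\in C_b(\Z^2)$ admitting a continuous extension to $\Omega_B$ (unique by density), and $E$ is a unital $C^*$-subalgebra of $C_b(\Z^2)$. By the characterisation of continuous extensions recorded just after \eqref{limits}, a function lies in $E$ as soon as the four filter limits $\lim_{\mathcal{U}_j}f$, $\lim_{\mathcal{R}_j}f$ ($j\in\Z$), $\lim_{\mathcal{Q}_\llcorner}f$, $\lim_{\mathcal{Q}_\ast}f$ exist. For a generator $f=\tau_\gamma(f_B)$ this is immediate from \eqref{eq:def_B}--\eqref{eq:f_B}: $\tau_\gamma(f_B)$ is constantly equal to $\expo{\ii b_\llcorner}$ on $\mathtt{Q}_m$ and constantly equal to $\expo{\ii b_\ast}$ on $\mathtt{Q}_m^{\rm c}$ for a suitable $m$, and it is eventually constant along each vertical ray $\mathtt{U}_{(j,\cdot)}$ and each horizontal ray $\mathtt{R}_{(\cdot,j)}$; so all four limits exist and $\tau_\gamma(f_B)\in E$. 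As $E$ is a unital $C^*$-algebra containing the constants and all the generators, $\mathcal{F}_B\subseteq E=\Phi(C(\Omega_B))$.

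Second inclusion, $\Phi(C(\Omega_B))\subseteq\mathcal{F}_B$. Here I would first observe that every member of the basis $\bb{Q}$ is clopen in $\Omega_B$: $\overline{\mathtt{Q}_n}$ has open complement because $\Omega_B\setminus\overline{\mathtt{Q}_n}=\overline{\mathtt{Q}_n^{\rm c}}\in\bb{Q}$, and for $\{n\}$, $\overline{\mathtt{R}_n}$, $\overline{\mathtt{U}_n}$ one checks directly that the complement is a union of $\bb{Q}$-sets. Consequently each $\chi_V$ with $V\in\bb{Q}$ belongs to $C(\Omega_B)$, and $\Phi(\chi_V)=\chi_{V\cap\Z^2}$ lies in $\mathcal{F}_B$ by Proposition \ref{Prop projections} (together with $\mathfrak{z}_n=\mathfrak{r}_n\mathfrak{u}_n$ for $V=\{n\}$). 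Moreover $\bb{Q}$ is a basis that separates the points of $\Omega_B$ --- a short case check over pairs of points of $\Z^2$, of $\{\infty_\ast,\infty_\llcorner\}$ and of $\mathtt{U}_\infty\cup\mathtt{R}_\infty$ --- so in particular $\Omega_B$ is Hausdorff, and the unital $\ast$-subalgebra of $C(\Omega_B)$ generated by $\{\chi_V\mid V\in\bb{Q}\}$ separates points and contains the constants; by the Stone--Weierstrass theorem it is dense in $C(\Omega_B)$, hence generates $C(\Omega_B)$ as a $C^*$-algebra. Applying the $\ast$-homomorphism $\Phi$ and using that its range is closed gives $\Phi(C(\Omega_B))\subseteq\mathcal{F}_B$.

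Combining the two inclusions, $\Phi$ restricts to an isometric $\ast$-isomorphism of $C(\Omega_B)$ onto $\mathcal{F}_B\cong\mathfrak{F}_B$, which is the claim. The only genuinely substantive point is the eventual constancy of the generators $\tau_\gamma(f_B)$ along all four families of filters; the remaining ingredients --- closedness of $E$, clopenness of the sets in $\bb{Q}$, and the point-separation check --- are routine point-set bookkeeping, and keeping the various complements straight while overlooking no separating pair is the only place where care is needed.
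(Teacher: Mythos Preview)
Your proof is correct and shares the paper's two key ingredients: (i) each generator $\tau_\gamma(f_B)$ extends continuously to $\Omega_B$ because it is eventually constant along all four families of filters, and (ii) an appeal to Stone--Weierstrass to identify the image with all of $C(\Omega_B)$. The organisational difference is that the paper builds the map in the opposite direction, $\mathfrak{F}_B\to C(\Omega_B)$, by explicitly extending each $g\in\mathfrak{F}_B$ through a Cauchy-sequence argument on polynomial approximants, and then checks surjectivity by observing that the family $\{\hat\chi_{\mathtt{Q}_n}\}$ alone already separates the points of $\Omega_B$. Your use of the restriction map $\Phi$ with closed range sidesteps that Cauchy step entirely, and your clopen-basis observation makes the Stone--Weierstrass input more transparent; conversely, the paper gets away with a smaller separating family than the full basis $\bb{Q}$. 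The two arguments are of comparable length and difficulty.
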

 \proof
 Firstly, recall that we can identify the elements of $\mathfrak{F}_B$ with functions over $\Z^2$. In particular, $\mathfrak{f}_B$ is the multiplication by the  function $f_B$ over $\Z^2$. For any $\gamma\in \Z^2$, we see that  $\tau_\gamma({f}_B)$ is constant in the sets  $\mathtt{Q}_\gamma$ and $\mathtt{Q}^{\rm c}_\gamma$, therefore the extension of $\tau_\gamma({f}_B)$ to $\Omega_B$ is a well defined continuous function, because the limits in \ref{limits} exist for $\tau_\gamma(f_B
)$. Moreover, these limits are linear and multiplicative \cite{Hin}, hence any polynomial $p$ in the variables $\tau_{\gamma_1}(f_B),\ldots,\tau_{\gamma_k}(f_B)$ have a continuous extension to $\Omega_B$ for any $k\in \N$.
Now let $\mathfrak{m}_g\in \mathfrak{F}_B$ be a generic element in $\mathfrak{F}_B$ defined by the multiplication by the function $g\in\ell^\infty(\Z^2)$ and define the map $\mathfrak{F}_B\ni\mathfrak{m}_g\mapsto \hat{{g}}\in C(\Omega_B)$, where $\hat{g}$ is constructed as follows. For such $g$ there is a sequence $\big(g_n\big)_{n\in \N}$ of polynomials $g_n$ in the variables $\tau_{\gamma_1}(f_B),\ldots,\tau_{\gamma_n}(f_B)$
so that $g_n\rightarrow g$ in $\mathfrak{F}_B$. Note that the sequence $\big(\hat{g}_n\big)_{n\in \N}$ defined by the extension of each $g_n$ on $\Omega_B$ is a Cauchy sequence in $C(\Omega_B)$. Namely, for every $\epsilon >0$ there is $N\in \N$ such that 
$\|g_n-g_m\|_{\mathfrak{F}_B}<\epsilon$ for all $n,m>N.$ Therefore,
\begin{equation*}
    \begin{split}
        \|\hat{g}_n-\hat{g}_m\|_{C(\Omega_B)}\;&=\;\sup_{x\in \Z^2}|\hat{g}_n(x)-\hat{g}_m(x)|\;=\; \|g_n-g_m\|_{\mathfrak{F}_B}\;<\;\epsilon\;.
    \end{split}
\end{equation*}
The latter follows from the fact that $\Z^2$ is dense in the compact  set $\Omega_B$. Thus, we define $\hat{g}$ as the limit in $C(\Omega_B)$ of the sequence $\big(\hat{g}_n\big)_{n\in \N}$, which provides a continuous extension of $g$ on $\Omega_B$. Using again the fact that $\Z^2$ is an open dense subset of $\Omega_B$, then any such $g$ has a unique continuous extension on $\Omega_B$, that is, the map $\mathfrak{F}_B\ni\mathfrak{m}_g\mapsto \hat{{g}}\in C(\Omega_B)$ is injective. In order to show the surjectivity,
we claim that the image of $\mathfrak{F}_B$ separates the points of $\Omega_B$. Indeed, let $\omega_1$ and $\omega_2$ be two points in $\Omega_B$, then there is $n\in \Z^2$ such that either $\omega_1\in \overline{\mathtt{Q}_n}$ and $\omega_2\in \overline{\mathtt{Q}_n}^c$ or $\omega_1\in \overline{\mathtt{Q}_n}^c$ and $\omega_2\in \overline{\mathtt{Q}_n}$. Thus one obtains that $\hat{\chi}_{\mathtt{Q}_n}(\omega_1)\neq\hat{\chi}_{\mathtt{Q}_n}(\omega_2)$ and the claim follows, where $\hat{\chi}_{\mathtt{Q}_n}$ is the extension of the indicator function $\chi_{\mathtt{Q}_n}$ on $\Z^2$, which belong to $\mathfrak{F}_B$ by Proposition \ref{Prop projections}. Finally, one can verify that this map is also linear, multiplicative and preserve the $*$-involution. Consequently, $\mathfrak{F}_B\ni\mathfrak{m}_g\mapsto \hat{{g}}\in C(\Omega_B)$ is a $*$-isomorphism between  $\mathfrak{F}_B$ and $C(\Omega_B)$.

\qed

 As expected $\Omega_B$ is the Gelfand spectrum of $\mathfrak{F}_B$, that is, the set of $*$-homomorphism $\omega\colon\mathfrak{F}_B\rightarrow\C$. The action of $\Omega_B$ over $\mathfrak{F}_B$ can be regarded as an evaluation map. Namely, the inclusion $\Z^2\ni n\mapsto\omega_n\in \Omega_B$ is given by the evaluation at a finite distance:
$$\omega_n(\mathfrak{g})\;:=\;g(n)\;,\qquad\mathfrak{g}\;=\;\mathfrak{m}_g\;\in\;\mathfrak{F}_B$$ 
and the remaining limit points are identified in the following way
\begin{equation}
\begin{split}
   \omega_{\infty_j^\mathtt{R}}(\mathfrak{g})&\;:=\;\lim_{\mathcal{R}_j}g\;,\hspace{2cm} \omega_{\infty_\ast}(\mathfrak{g})\;:=\;\lim_{\mathcal{Q}_\ast}g\;,\\
    \omega_{\infty_j^\mathtt{U}}(\mathfrak{g})&\;:=\;\lim_{\mathcal{U}_j}g\;,\hspace{2cm} \omega_{\infty_\llcorner}(\mathfrak{g})\;:=\;\lim_{\mathcal{Q}_\llcorner}g\;.
\end{split}
\end{equation} 
Furthermore, we can endow $\Omega_B$ with a $\Z^2$-action: For each $\gamma\in \Z^2$ consider $\tau^*_\gamma(\omega(\mathfrak{g}))\;:=\;\omega(\tau_{-\gamma}(\mathfrak{g}))$, where $\omega\in \Omega_B$ and $\mathfrak{g}\in \mathfrak{F}_B.$ Therefore, it follows that

\begin{equation*}
    \begin{split}
\tau^*_\gamma(\omega_{n})\;&=\;\omega_{n+\gamma}\;,\hspace{1cm}        \tau^*_\gamma(\omega_{\infty_j^\U})\;=\;\omega_{\infty_{j+\gamma_1}^\U}\;,\hspace{1cm} \tau^*_\gamma(\omega_{\infty_j^\rz})\;=\;\omega_{\infty_{j+\gamma_2}^\rz}\;,\\
\tau^*_\gamma(\omega_{\llcorner})\;&=\;\omega_{\llcorner}\;,\hspace{1,5cm}        \tau^*_\gamma(\omega_{\infty_\ast})\;=\;\omega_{\infty_\ast}\;.
    \end{split}
\end{equation*}
for every $\gamma=(\gamma_1,\gamma_2)\in\Z^2$. As a consequence
 $\Z^2$ is an invariant subset under the action $\tau^*$, and the boundary
\begin{equation}
\partial\Omega_B\;=\;\Omega_B\setminus\Z^2\;=\;\{\infty_\ast\}\;\cup\;\{\infty_\llcorner\}\;\cup\;\mathtt{U}_\infty\;\cup\;\mathtt{R}_\infty\;
\end{equation}
is the disjoint union of four invariant subsets.
\begin{proposition}
$(\Omega_B, \tau^*,\Z^2)$ is a dynamical system and its set ${\rm Erg}(\Omega_B)$ of the ergodic probability measures is given by ${\rm Erg}(\Omega_B)=\{\mathbb{P}_{\infty_\llcorner},\,\mathbb{P}_{\infty_\ast}\}$, where the measures are specified by 
 $$\mathbb{P}_{\infty_\llcorner}(\infty_\llcorner)\;=\;1\;,\qquad \mathbb{P}_{\infty_\ast}(\infty_\ast)\;=\;1\;.$$
\end{proposition}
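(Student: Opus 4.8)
The plan is to show that $(\Omega_B,\tau^*,\Z^2)$ is a topological dynamical system and then identify its ergodic probability measures. That it is a dynamical system is essentially automatic: we have already constructed a $\Z^2$-action $\tau^*$ on $\Omega_B$ by duality from the $\tau$-action on $\mathfrak{F}_B$, and continuity of each $\tau^*_\gamma$ follows from the fact that $\tau_{-\gamma}$ is a $*$-automorphism of $\mathfrak{F}_B\simeq C(\Omega_B)$, so $\tau^*_\gamma$ is a homeomorphism of its Gelfand spectrum; joint continuity and the group law are inherited from $\tau$. It remains to locate $\mathrm{Erg}(\Omega_B)$.

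First I would use the decomposition $\partial\Omega_B=\{\infty_\ast\}\cup\{\infty_\llcorner\}\cup\mathtt{U}_\infty\cup\mathtt{R}_\infty$ into four $\tau^*$-invariant subsets, together with the explicit formulas for $\tau^*_\gamma$ on each stratum. The key observation is that any $\tau^*$-invariant probability measure $\mu$ cannot charge $\Z^2$: for $n\in\Z^2$ the orbit $\{\tau^*_\gamma(\omega_n)\}_{\gamma\in\Z^2}=\{\omega_{n+\gamma}\}$ is infinite, so by invariance $\mu(\{\omega_n\})=\mu(\{\omega_m\})$ for all $m$, forcing $\mu(\Z^2)=0$; similarly $\mu$ cannot charge $\mathtt{U}_\infty$ or $\mathtt{R}_\infty$, since $\tau^*_\gamma(\omega_{\infty_j^{\mathtt U}})=\omega_{\infty_{j+\gamma_1}^{\mathtt U}}$ (resp. for $\mathtt{R}$) again yields an infinite orbit of atoms of equal mass. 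Hence $\mu$ is supported on the two fixed points $\{\infty_\llcorner,\infty_\ast\}$, i.e. $\mu=t\,\mathbb{P}_{\infty_\llcorner}+(1-t)\,\mathbb{P}_{\infty_\ast}$ for some $t\in[0,1]$. Since both $\mathbb{P}_{\infty_\llcorner}$ and $\mathbb{P}_{\infty_\ast}$ are Dirac masses at fixed points they are automatically ergodic, and they are the extreme points of this segment, so $\mathrm{Erg}(\Omega_B)=\{\mathbb{P}_{\infty_\llcorner},\mathbb{P}_{\infty_\ast}\}$.

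I would fill in one technical point carefully: that $\mathtt{U}_\infty$ (and $\mathtt{R}_\infty$) really carry no mass. A priori $\mu$ restricted to $\mathtt{U}_\infty$ is an invariant finite measure on a countable set $\{\infty_j^{\mathtt U}\}_{j\in\Z}$ on which $\Z$ (the first coordinate of $\gamma$) acts by translation; a translation-invariant finite measure on $\Z$ is zero, so $\mu(\mathtt{U}_\infty)=0$, and likewise $\mu(\mathtt{R}_\infty)=0$. (One should check measurability of these strata, which is clear since each point of $\partial\Omega_B$ is closed, being the intersection of the basic closed sets containing it, and the strata are countable.) The main obstacle — really the only place needing care — is making sure no invariant mass can "escape" onto the one-dimensional strata $\mathtt{U}_\infty,\mathtt{R}_\infty$; once that is excluded the support is pinned to the two fixed points and the conclusion is immediate. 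Everything else is bookkeeping with the explicit action formulas already displayed above.
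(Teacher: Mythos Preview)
Your proposal is correct and follows essentially the same route as the paper: decompose $\Omega_B$ into the five $\tau^*$-orbits $\Z^2$, $\mathtt{U}_\infty$, $\mathtt{R}_\infty$, $\{\infty_\llcorner\}$, $\{\infty_\ast\}$, observe that the three infinite orbits cannot carry any finite invariant mass, and conclude that every invariant probability measure is a convex combination of the two Dirac masses at the fixed points. The paper phrases the middle step as ``$\Z^2$, $\mathtt{U}_\infty$ and $\mathtt{R}_\infty$ are made of wandering points'', whereas you spell out the equal-mass-on-an-infinite-orbit argument explicitly; your version is a bit more careful (you also address measurability of the strata), but the underlying idea is the same.
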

\begin{proof}
From the definition of $\tau^*$, the minimal invariant subsets of $\Omega_B$ are $\Z^2$, $\U_\infty$, $\rz_\infty$, $\{\infty_\llcorner\}$ and $\{\infty_\ast\}.$ Moreover, the sets  $\Z^2$, $\U_\infty$ and $\rz_\infty$ are made of wandering points under $\tau^*$, hence the only possible invariant measures on $\Omega_B$ are the Dirac measures supported in $\{\infty_\llcorner\}$ or $\{\infty_\ast\}$, that is, ${\rm Erg}(\Omega_B)=\{\mathbb{P}_{\infty_\llcorner},\,\mathbb{P}_{\infty_\ast}\}$.
\end{proof}
 The ergodic measures of $(\Omega_B, \tau^*,\Z^2)$ play a crucial role in the construction of the integration theory of magnetic algebra. Indeed, for each $\mathbb{P}\in {\rm Erg}(\Omega_B)$ there is a $\Z^2$-invariant trace-per-unit-volume $\mathcal{T}_\mathbb{P}$ on $\A_B$  \cite[Section 2.6]{Deni1}.

 \section{Magnetic interfaces and K-theory}
 In this section, we calculate the K-groups of the magnetic algebra $\A_B$. More information about K-theory can be found in the books \cite{Bla,Ro, Ols}, and for preliminary results related to the K-theory of magnetic algebras see \cite{Deni1, PRO}.
 Along the way, in some computations we use the Iwatsuka magnetic algebra $\A_I$ and its magnetic hull $\Omega_I$, so we refer to \cite[Section 4 and Appendix B]{Deni1} for the detailed description of $\A_I$ and $\Omega_I.$
\subsection{Interface and bulk algebra}\label{secc 3.1}
Let $A_{B_1}$ and $A_{B_2}$ be two vector potentials such that their circulation is the magnetic fields $B_1$ and $B_2$, respectively. Consider the magnetic algebras $\A_{A_{B_1}}$ and $\A_{A_{B_2}}$\footnote{In the sense of \cite[Definition 2.9]{Deni1}} associated to the vector potentials. According to \cite[Definition 3.1]{Deni1}, an evaluation homomorphism, when it exists, is a surjective $*$-homomorphism ${\rm ev}\colon\A_{A_{B_1}}\rightarrow\A_{A_{B_2}}$  which fulfills 
\begin{equation*}
    \begin{split}
        {\rm ev}(\mathfrak{s}_{A_{B_1},1})\;:&=\;\mathfrak{s}_{A_{B_2},1}\\
         {\rm ev}(\mathfrak{s}_{A_{B_1},2})\;:&=\;\mathfrak{s}_{A_{B_2},2}
    \end{split}
\end{equation*}
Moreover, any evaluation homomorphism satisfies the following equality
\begin{equation}\label{Translation}
   {\rm ev}\big(\tau_\gamma(\mathfrak{f}_{B_1})\big)\;=\;\tau_\gamma(\mathfrak{f}_{B_2})\;, \qquad\forall\,\gamma\in\Z^2.
\end{equation}
Here $\tau_\gamma$ denotes $\Z^2$-action by translation defined in \ref{equ 2.6}.
\medskip

Since the full system is composed of two different materials with constant magnetic fields $b_\llcorner$ and $b_\ast$, then according to \cite[Definition 3.1.1]{PRO}, it is natural to define the \emph{bulk algebra} as
$$\A_{{\rm bulk}}\;:=\;\A_{b_{\llcorner}}\oplus\A_{b_\ast},$$
where $\A_{b_{\llcorner}}$ and $\A_{b_\ast}$ are the magnetic $C^*$-algebras associated to the constant magnetic fields of strength $b_{\llcorner}$ and $b_\ast$, respectively \cite[Example 2.10]{Deni1}. In other words, the bulk algebra contains the information  deep in the interior of both materials.  
\medskip

For sake of notational simplicity, from now on we write $\A$ instead of $\A_B$.
In view of \cite[Proposition 3.11]{Deni1}, the map ${\rm ev}\colon \A\rightarrow\A_{{\rm bulk}}$ defined as

\begin{equation}\label{sec 3.1}
\begin{split}
     {\rm ev}(\mathfrak{s}_1)\;:&=\;(\mathfrak{s}_{b_{\llcorner},1},\,\mathfrak{s}_{b_\ast,1})\\
       {\rm  ev}(\mathfrak{s}_2)\;:&=\;(\mathfrak{s}_{b_{\llcorner},2},\,\mathfrak{s}_{b_\ast,2})\\
        {\rm ev}(\mathfrak{f}_B)\;:&=\;\big(\,\expo{\ii b_{\llcorner}}\mathfrak{1},\,\expo{\ii b_\ast}\mathfrak{1}\big)
\end{split}
\end{equation}
is an evaluation homomorphism. We denote by $\mathfrak{I}$ to the kernel of ${\rm ev}$ and we call this ideal the \emph{interface algebra}. It is important to point out that the interface coincides with the subalgebra of $\A$ generated by the elements $\mathfrak{a}\mathfrak{g}\mathfrak{b}$ with $\mathfrak{a,b}\in \A$ and $\mathfrak{g}\in {\rm Ker}({\rm ev}|_{\mathfrak{F}_B})$ \cite[Section 3.1]{Deni1}, namely
\begin{equation}\label{interface}
    \mathfrak{I}\;=\;\overline{{\rm Span}}\,\{\mathfrak{a}\mathfrak{g}\mathfrak{b}\;|\;\mathfrak{a,b}\in \A,\;\mathfrak{g}\in {\rm Ker}({\rm ev}|_{\mathfrak{F}_B})\}\;.
\end{equation}
 In the following we will obtain an explicit description of $\mathfrak{I}$ in terms of the projections $\mathfrak{r}_0$ and $\mathfrak{u}_0,$ which implies that $\mathfrak{I}$ describes the behavior of the system near the interface. 
\begin{proposition}\label{prop: proj}
The projections $\mathfrak{r}_n$ and $\mathfrak{u}_n$ belong to $\mathfrak{I}$ for every $n\in \Z^2$. Moreover, it holds true that for $n\in \Z^2$
\begin{equation}
\begin{split}
            {\rm ev}(\mathfrak{q}_n)\;&=\;(\mathfrak{1},\mathfrak{0})\\
             {\rm ev}(\mathfrak{q}^\bot_n)\;&=\;(\mathfrak{0},\mathfrak{1})
\end{split}
\end{equation}
\end{proposition}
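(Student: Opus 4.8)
The plan is to read off everything from the explicit formulas established in the proof of Proposition \ref{Prop projections} and from the defining property \eqref{sec 3.1} of the evaluation homomorphism ${\rm ev}$. Recall that, up to the nonzero scalar $c:=(\expo{\ii b_\llcorner}-\expo{\ii b_\ast})^{-1}$, the elements $\mathfrak{r}_0$, $\mathfrak{u}_0$ and $\mathfrak{q}_0$ are finite linear combinations of translates $\tau_\gamma(\mathfrak{f}_B)$ and of the unit $\mathfrak{1}$, namely $\mathfrak{r}_0=c\,\big(\tau_{(-e_1-e_2)}(\mathfrak{f}_B)-\tau_{(-e_1)}(\mathfrak{f}_B)\big)$, $\mathfrak{u}_0=c\,\big(\tau_{(-e_1-e_2)}(\mathfrak{f}_B)-\tau_{(-e_2)}(\mathfrak{f}_B)\big)$ and $\mathfrak{q}_0=c\,\big(\tau_{(-e_1-e_2)}(\mathfrak{f}_B)-\expo{\ii b_\ast}\mathfrak{1}\big)$. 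Since the bulk magnetic fields $b_\llcorner$ and $b_\ast$ are constant, the corresponding flux operators $\expo{\ii b_\llcorner}\mathfrak{1}$ and $\expo{\ii b_\ast}\mathfrak{1}$ are translation invariant, so \eqref{sec 3.1} (equivalently \eqref{Translation} applied to the constant fields) gives ${\rm ev}\big(\tau_\gamma(\mathfrak{f}_B)\big)=\big(\expo{\ii b_\llcorner}\mathfrak{1},\expo{\ii b_\ast}\mathfrak{1}\big)$ for every $\gamma\in\Z^2$. Applying the $*$-homomorphism ${\rm ev}$ to the formula for $\mathfrak{r}_0$, the two terms cancel and ${\rm ev}(\mathfrak{r}_0)=(\mathfrak{0},\mathfrak{0})$; likewise ${\rm ev}(\mathfrak{u}_0)=(\mathfrak{0},\mathfrak{0})$, whence $\mathfrak{r}_0,\mathfrak{u}_0\in{\rm Ker}({\rm ev})=\mathfrak{I}$. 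For the quarter-plane projection the same substitution yields ${\rm ev}(\mathfrak{q}_0)=c\,\big((\expo{\ii b_\llcorner}-\expo{\ii b_\ast})\mathfrak{1},\mathfrak{0}\big)=(\mathfrak{1},\mathfrak{0})$, and since $\mathfrak{q}_0^\bot=\mathfrak{1}-\mathfrak{q}_0$ one gets ${\rm ev}(\mathfrak{q}_0^\bot)=(\mathfrak{0},\mathfrak{1})$.

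Next I would propagate these statements to an arbitrary site $n\in\Z^2$ using $\mathfrak{r}_n=\tau_n(\mathfrak{r}_0)$, $\mathfrak{u}_n=\tau_n(\mathfrak{u}_0)$, $\mathfrak{q}_n=\tau_n(\mathfrak{q}_0)$. The key observation is that ${\rm ev}$ intertwines the $\Z^2$-actions, ${\rm ev}\circ\tau_n=\tau_n\circ{\rm ev}$: indeed by \eqref{equ 2.6} the translation $\tau_n$ acts on $\A$ as conjugation by the unitary $\mathfrak{s}_1^{n_1}\mathfrak{s}_2^{n_2}$, and ${\rm ev}$ is a homomorphism sending $\mathfrak{s}_j$ to $(\mathfrak{s}_{b_\llcorner,j},\mathfrak{s}_{b_\ast,j})$, so it carries this conjugation to the analogous conjugation on $\A_{\rm bulk}$, which is precisely $\tau_n$ there. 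Hence ${\rm ev}(\mathfrak{r}_n)=\tau_n\big({\rm ev}(\mathfrak{r}_0)\big)=(\mathfrak{0},\mathfrak{0})$, so $\mathfrak{r}_n\in\mathfrak{I}$, and similarly $\mathfrak{u}_n\in\mathfrak{I}$; alternatively one may simply note that $\mathfrak{I}$, being a two-sided ideal of $\A$, is stable under conjugation by the unitaries $\mathfrak{s}_1^{n_1}\mathfrak{s}_2^{n_2}$. Finally ${\rm ev}(\mathfrak{q}_n)=\tau_n\big((\mathfrak{1},\mathfrak{0})\big)=(\mathfrak{1},\mathfrak{0})$ and ${\rm ev}(\mathfrak{q}_n^\bot)=(\mathfrak{1},\mathfrak{1})-(\mathfrak{1},\mathfrak{0})=(\mathfrak{0},\mathfrak{1})$, using that the units on the bulk side are translation fixed.

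There is no genuine obstacle here; the computation is essentially a substitution into the identities of Proposition \ref{Prop projections}. The only point deserving a word of care is the $\gamma$-independence of ${\rm ev}\big(\tau_\gamma(\mathfrak{f}_B)\big)$, which is exactly where the constancy of the two bulk magnetic fields enters: it forces the bulk flux operators to be scalar multiples of the identity, hence invariant under the bulk translations, and this is what makes the two $\tau_\gamma(\mathfrak{f}_B)$-terms in the expression for $\mathfrak{r}_0$ (resp. $\mathfrak{u}_0$) collapse after applying ${\rm ev}$.
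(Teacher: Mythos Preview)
Your proof is correct and follows essentially the same approach as the paper: both arguments substitute the explicit formulas from Proposition~\ref{Prop projections} and use that ${\rm ev}\big(\tau_\gamma(\mathfrak{f}_B)\big)$ is independent of $\gamma$ because the bulk flux operators are scalar multiples of the identity. The only cosmetic difference is that the paper writes the formulas directly for general $n$ and applies \eqref{Translation}, whereas you treat $n=0$ first and then propagate via the equivariance ${\rm ev}\circ\tau_n=\tau_n\circ{\rm ev}$; this is the same computation reorganized.
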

\begin{proof}
Thanks to the proposition \ref{Prop projections}, we know that $\mathfrak{r}_n$, $\mathfrak{u}_n$ and $\mathfrak{q}_n$ can be written as
\begin{equation*}
    \begin{split}
        \mathfrak{r}_{n}\;&=\;(\expo{\ii b_\llcorner}-\expo{\ii b_{\ast}})^{-1}\big(\tau_{(n-e_1-e_2)}(\mathfrak{f}_B)-\tau_{(n-e_1)}(\mathfrak{f}_B)\big),\\
\mathfrak{u}_{n}\;&=\;(\expo{\ii b_\llcorner}-\expo{\ii b_{\ast}})^{-1}\big(\tau_{(n-e_1-e_2)}(\mathfrak{f}_B)-\tau_{(n-e_2)}(\mathfrak{f}_B)\big)\\
\mathfrak{q}_{n}\;&=\;(\expo{\ii b_\llcorner}-\expo{\ii b_{\ast}})^{-1}\big(\tau_{(n-e_1-e_2)}(\mathfrak{f}_B)-\expo{\ii b_{\ast}}\mathfrak{1}\big),
    \end{split}
\end{equation*}
for $n=(n_1,n_2)\in \Z^2$. Using \ref{Translation} and the fact that ${\rm ev}(\mathfrak{f}_B)$ is invariant under the action $\tau$ (constant magnetic field), then the above equations show that ${\rm ev}(\mathfrak{r}_n)=0={\rm ev}(\mathfrak{u}_n)$. The other equalities follow with the same argument.
\end{proof}

\begin{proposition}\label{prop 3,2}
The interface algebra is the closed two-sided ideal generated by $\mathfrak{r}_0$ and $\mathfrak{u}_0$.
\end{proposition}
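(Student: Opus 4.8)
The plan is to show the two-sided ideal $\langle\mathfrak{r}_0,\mathfrak{u}_0\rangle$ generated by $\mathfrak{r}_0$ and $\mathfrak{u}_0$ inside $\A$ coincides with $\mathfrak{I}=\ker({\rm ev})$. The inclusion $\langle\mathfrak{r}_0,\mathfrak{u}_0\rangle\subseteq\mathfrak{I}$ is immediate from Proposition \ref{prop: proj}, which gives $\mathfrak{r}_0,\mathfrak{u}_0\in\mathfrak{I}$, together with the fact that $\mathfrak{I}$ is a closed two-sided ideal. The substance is the reverse inclusion. By the description \eqref{interface}, it suffices to show that every $\mathfrak{g}\in{\rm Ker}({\rm ev}|_{\mathfrak{F}_B})$ lies in $\langle\mathfrak{r}_0,\mathfrak{u}_0\rangle$, since then all products $\mathfrak{a}\mathfrak{g}\mathfrak{b}$ with $\mathfrak{a},\mathfrak{b}\in\A$ lie there too, and the ideal is closed.

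First I would identify ${\rm Ker}({\rm ev}|_{\mathfrak{F}_B})$ concretely. Under the isomorphism $\mathfrak{F}_B\simeq C(\Omega_B)$, the restricted evaluation ${\rm ev}|_{\mathfrak{F}_B}$ sends $\tau_\gamma(\mathfrak{f}_B)$ to the pair of scalars $(\expo{\ii b_\llcorner}\mathfrak{1},\expo{\ii b_\ast}\mathfrak{1})$, i.e.\ it is restriction of functions in $C(\Omega_B)$ to the two points $\{\infty_\llcorner,\infty_\ast\}$ (the supports of the two ergodic measures); equivalently $\mathfrak{g}=\mathfrak{m}_g\in{\rm Ker}({\rm ev}|_{\mathfrak{F}_B})$ iff $\lim_{\mathcal{Q}_\llcorner}g=\lim_{\mathcal{Q}_\ast}g=0$, i.e.\ $g$ vanishes both deep in the first quadrant and deep in the complement. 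Such a $g$ is supported, up to arbitrarily small error in sup norm, on a set of the form $\bigcup_{j}\mathtt{R}_{(a_j,j)}\cup\bigcup_{k}\mathtt{U}_{(k,b_k)}$ together with a finite set; more precisely, for every $\epsilon>0$ there is a decomposition $g=g_\epsilon+h_\epsilon$ with $\|h_\epsilon\|<\epsilon$ and $g_\epsilon$ supported in a finite union of translated half-lines $\mathtt{R}_m$ and $\mathtt{U}_m$ (this is where the explicit basis $\bb{Q}$ of $\mathtt{Top}_\bb{Q}$ and the limit characterisation \eqref{limits} get used). Then $\mathfrak{m}_{g_\epsilon}=\sum$ of terms of the form $\mathfrak{m}_{g_\epsilon}\mathfrak{r}_m$ or $\mathfrak{m}_{g_\epsilon}\mathfrak{u}_m$, and since $\mathfrak{r}_m=\tau_m(\mathfrak{r}_0)=(\mathfrak{s}_1)^{m_1}(\mathfrak{s}_2)^{m_2}\mathfrak{r}_0(\mathfrak{s}_2)^{-m_2}(\mathfrak{s}_1)^{-m_1}$ by \eqref{equ 2.6}, every such $\mathfrak{r}_m,\mathfrak{u}_m$ lies in $\langle\mathfrak{r}_0,\mathfrak{u}_0\rangle$; hence $\mathfrak{m}_{g_\epsilon}\in\langle\mathfrak{r}_0,\mathfrak{u}_0\rangle$ and, letting $\epsilon\to0$ and using closedness, $\mathfrak{m}_g\in\langle\mathfrak{r}_0,\mathfrak{u}_0\rangle$.

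The main obstacle is the approximation step: proving that a general $g\in C(\Omega_B)$ vanishing at $\infty_\llcorner$ and $\infty_\ast$ can be cut down, up to small sup norm, to something supported on finitely many translated half-lines $\mathtt{R}_m\cup\mathtt{U}_m$. One clean way is a partition-of-unity / compactness argument on $\Omega_B$: cover the compact set $\{\infty_\llcorner\}\cup\{\infty_\ast\}$ by basic open sets $\overline{\mathtt{Q}_n}$ and $\overline{\mathtt{Q}_m^{\rm c}}$ on which $|g|<\epsilon$; the complement of $\overline{\mathtt{Q}_n}\cup\overline{\mathtt{Q}_m^{\rm c}}$ in $\Omega_B$ is covered by finitely many $\overline{\mathtt{R}_p}$, $\overline{\mathtt{U}_q}$ plus a finite subset of $\Z^2$ (exactly the combinatorics already used in the compactness proof of $(\Omega_B,\mathtt{Top}_\bb{Q})$); multiplying $g$ by the corresponding characteristic functions $\chi_{\mathtt{R}_p}$, $\chi_{\mathtt{U}_q}$ (which are in $\mathfrak{F}_B$ by Proposition \ref{Prop projections}, with operators $\mathfrak{r}_p,\mathfrak{u}_q$) and summing absorbs $g$ into $\langle\mathfrak{r}_0,\mathfrak{u}_0\rangle$ up to an error supported where $|g|<\epsilon$. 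I would also double-check at the end that one really needs only the generators $\mathfrak{r}_0,\mathfrak{u}_0$ (not all $\mathfrak{r}_n,\mathfrak{u}_n$), which follows from \eqref{equ 2.6} as noted, and that the finite exceptional set of $\Z^2$-points is harmless since each $\mathfrak{z}_n=\mathfrak{r}_n\mathfrak{u}_n$ already lies in the ideal.
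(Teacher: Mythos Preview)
Your argument is correct and reaches the same conclusion, but by a genuinely different route than the paper. Both proofs use \eqref{interface} to reduce to showing ${\rm Ker}({\rm ev}|_{\mathfrak{F}_B})\subseteq\langle\mathfrak{r}_0,\mathfrak{u}_0\rangle$. The paper's argument is algebraic: a telescoping identity shows that each generator $\tau_n(\mathfrak{f}_B)$ equals $\mathfrak{f}_B$ modulo an explicit finite sum of $\mathfrak{u}_m$'s and $\mathfrak{r}_m$'s, so the quotient $\mathfrak{F}_B/\big(\A(\mathfrak{r}_0,\mathfrak{u}_0)\cap\mathfrak{F}_B\big)$ collapses to $C^*(\mathfrak{f}_B)\simeq\C^2$; then any $\mathfrak{g}\in\mathfrak{F}_B$ is $\lambda_1\mathfrak{q}_0+\lambda_2\mathfrak{q}_0^\bot$ modulo the ideal, and ${\rm ev}(\mathfrak{g})=0$ forces $\lambda_1=\lambda_2=0$. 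Your argument instead works on the Gelfand spectrum: you identify ${\rm Ker}({\rm ev}|_{\mathfrak{F}_B})$ with functions in $C(\Omega_B)$ vanishing at $\infty_\llcorner$ and $\infty_\ast$, then use continuity and the basis $\bb{Q}$ to trap such a $g$ (up to $\epsilon$) on an L-shaped region $\mathtt{Q}_m\setminus\mathtt{Q}_n$, which is a finite union of translated half-lines whose characteristic functions lie in the ideal. The paper's approach is shorter and produces an explicit formula for $\tau_n(\mathfrak{f}_B)-\mathfrak{f}_B$ that is reused later (Proposition~\ref{prop ker}); your approach is more geometric and recycles the combinatorics already used in the compactness proof of $\Omega_B$. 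One small point to tighten: when you write $\mathfrak{m}_{g_\epsilon}$ as a ``sum'' of terms $\mathfrak{m}_{g_\epsilon}\mathfrak{r}_m$ and $\mathfrak{m}_{g_\epsilon}\mathfrak{u}_m$, make the decomposition disjoint (e.g.\ $\mathtt{Q}_m\setminus\mathtt{Q}_n=\bigsqcup_{j=m_2}^{n_2-1}\mathtt{R}_{(m_1,j)}\sqcup\bigsqcup_{i=m_1}^{n_1-1}\mathtt{U}_{(i,n_2)}$), or else observe that $\chi_{\bigcup_k L_k}=\mathfrak{1}-\prod_k(\mathfrak{1}-\chi_{L_k})$ lies in the ideal because every nontrivial term in the expansion contains some $\chi_{L_k}$.
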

\vspace{-1cm}
\begin{proof}
We write $\A(\mathfrak{r}_0,\mathfrak{u}_0)$ for the closed two-sided ideal generated by  $\mathfrak{r}_0$ and $\mathfrak{u}_0$ in $\A$. Firstly, the Proposition \ref{prop: proj} implies that
$\A(\mathfrak{r}_0,\mathfrak{u}_0)\subset\mathfrak{I}$ so we just have to show the reverse inclusion. Indeed, for every $n=(n_1,n_2)\in \Z^2$ we have the following decomposition
\begin{equation*}
    \begin{split}
        \tau_n(\fb)\;&=\;\fb+\big(\tau_n(\fb)-\tau_{(0,n_2)}(\fb)\big)+\big(\tau_{(0,n_2)}(\fb)-\fb\big)\\
        &=\;\fb+(\expo{\ii b_{\ast}}-\expo{\ii b_\llcorner})\sum_{m=1}^{n_1}\mathfrak{u}_{(m,n_2+1)}+(\expo{\ii b_{\ast}}-\expo{\ii b_\llcorner})\sum_{m=1}^{n_2}\mathfrak{r}_{(1,m)}\;,
    \end{split}
\end{equation*}
and since $\mathfrak{u}_n$ and $\mathfrak{r}_n$ belong to $\A(\mathfrak{r}_0,\mathfrak{u}_0)$  for any $n\in \Z^2$ we obtain that $\tau_n(\fb)$ is equal to $\fb$ modulo $\A(\mathfrak{r}_0,\mathfrak{u}_0)$. Moreover, the spectrum of $\mathfrak{f}_B$ is the set $\{ \expo{\ii b_\ast}, \expo{\ii b_\llcorner}\}$ and it yields $$\mathfrak{F}_B/\A(\mathfrak{r}_0,\mathfrak{u}_0)\; \simeq\; C^*(\fb)\;\simeq \;\C^2.$$ 
Thereby if $\mathfrak{g}\in \mathfrak{F}_B$ then $\mathfrak{g}=\lambda_1 \mathfrak{q}_0+\lambda_2\mathfrak{q}_0^\bot$ modulo $\A(\mathfrak{r}_0,\mathfrak{u}_0)$  for some $\lambda_1, \lambda_2\in \C$. Hence we conclude by Proposition \ref{prop: proj} that ${\rm ev}(\mathfrak{g})=(\mathfrak{0,0})$ if and only if $\lambda_1=\lambda_2=0$, that is, $\mathfrak{g}\in \A(\mathfrak{r}_0,\mathfrak{u}_0)$. The latter with \ref{interface} show the reverse inclusion.
\end{proof}

\subsection{K-theory of the magnetic hull}
Let $\Omega_B$ be the magnetic hull and consider the following exact sequence
\begin{equation}\label{seq 3.2}
 \xymatrix{
0\ar[r]^{} & C_0(\Z^2)\ar[r]^{i} & C(\Omega_B) \ar[r]^{e}& C(\partial\Omega_B)\ar[r]^{}&0 }
\end{equation}
where $C_0(\Z^2)$ is the $C^*$-algebra of sequences vanishing at infinity, $i$ is the inclusion homomorphism and $e$ is the evaluation homomorphism at the limit points. Notice that $\partial \Omega_B$ is homeomorphic to the Iwatsuka magnetic hull $\Omega_I$ \cite[Example 2.24]{Deni1} with the identifications $\infty_j^{\mathtt{U}}\mapsto 2j+1$, $\infty_j^{\mathtt{R}}\mapsto 2j$, $\infty_\llcorner\mapsto +\infty$ and $\infty_\ast\mapsto -\infty$, for $j\in \Z$. Therefore, the exact sequence \ref{seq 3.2} turns out to be \begin{equation}
 \xymatrix{
0\ar[r]^{} & C_0(\Z^2)\ar[r]^{} & C(\Omega_B) \ar[r]^{}& C(\Omega_I)\ar[r]^{}&0 }.
\end{equation}
The K-theory of the Iwatsuka magnetic hull is given by $K_0\big(C(\Omega_I)\big)=\Z^{\oplus\Z}\oplus\Z^2$ and $K_1\big(C(\Omega_I)\big)=0$
 \cite[Appendix B]{Deni1}. Furthermore,  $K_0\big(C_0(\Z^2)\big)=K^0(\Z^2)=\Z^{\oplus \Z^2}$ and $K_1\big(C_0(\Z^2)\big)=0.$ Thus, the six-term exact sequence implies that 
$$\xymatrix{
 \Z^{\oplus\Z^2}\ar[r]^{} & K_0\big(C(\Omega_B)\big) \ar[r]^{}& \Z^{\oplus\Z}\oplus\Z^2\ar[d] \\
0\ar[u] &K_1\big(C(\Omega_B)\big)\ar[l] &0 \ar[l] }$$
is exact. Since the groups in the latter sequence are free $\Z$-modules, then the extensions are trivial and one has that\\
\begin{equation*}
    \begin{split}
        K_0\big(C( \Omega_B)\big)\;&=\; i_*\Big(K_0\big(C_0(\Z^2)\big)\Big)\oplus e_*^{-1}\Big(K_0\big(C(\Omega_I)\big)\Big)\\&=\;\bigoplus_{i\in \Z^2}\Z[\mathfrak{z}_i]\oplus\bigoplus_{i\in \Z}\Z[\mathfrak{r}_{(0,i)}]\oplus\bigoplus_{i\in \Z}\Z[\mathfrak{u}_{(i,0)}]\oplus \Z[\mathfrak{q}^\bot_0]\oplus\Z[\mathfrak{q}_0]\end{split}
\end{equation*}
and
$K_1\big(C(\Omega_B)\big)=0,$ where recall that $(\mathfrak{z}_i\psi)(n):=\delta_{j,n}\psi(n)$ are the generators of $K_0\big(C_0(\Z^2)\big)$ and $e^{-1}_*$ is a splitting homomorphism.
\subsection{K-theory of the magnetic quarter-plane algebra}
It is well known that the magnetic algebra $\A$ has a crossed-product structure given by
\begin{equation}
    \A\;=\;\big(\mathfrak{F}_B\rtimes_{\alpha_1}\Z\big)\rtimes_{\alpha_2}\Z
\end{equation}
where the automorphism $\alpha_1$ is defined by $\alpha_1(\mathfrak{g}):=\mathfrak{s}_1\mathfrak{g}\mathfrak{s}_1^*$ for $\mathfrak{g}\in \mathfrak{F}_B$ and the automorphism $\alpha_2$ is given by $\alpha_2(\mathfrak{gs}^r_1):=\mathfrak{s}_2\mathfrak{gs}_1^r\mathfrak{s}_2^*$ for every $\mathfrak{g}\in \mathfrak{F}_B$ and $r\in \N_0$. A discussion of the above result can be found in \cite[Appendix A]{Deni1}, and for more information about the crossed product of $C^*$-algebras we refer to \cite{Bla, Dav, Ped, Wil}. 
\medskip

From the Pimsner-Voiculescu exact sequence \cite{Pim}, one can relate the K-theory of $\mathcal{Y}_1=\mathfrak{F}_B\rtimes_{\alpha_1}\Z$ with the $K$-groups of $\mathfrak{F}_B$ as
\begin{equation}\label{seq pim}
    \xymatrix{
 K_0\big(\mathfrak{F}_B\big)\ar[r]^{\beta_{1*}} & K_0\big(\mathfrak{F}_B\big) \ar[r]^{i_*}& K_0\big(\mathcal{Y}_1\big)\ar[d]^{\partial_0} \\
K_1\big(\mathcal{Y}_1\big)\ar[u]^{\partial_1} &K_1\big(\mathfrak{F}_B\big)\ar[l]^{i_*} &K_1\big(\mathfrak{F}_B\big) \ar[l]^{\beta_{1*}} }
\end{equation}
Here the vertical maps $\partial_0$ and $\partial_1$ are the index and exponential maps of a suitable six-term exact sequence \cite[Chapter V]{Bla}, and $\beta_1:= {\bf1}-\alpha_1$.
Therefore, replacing the known $K$-groups in \ref{seq pim} it turns out that
$$   \xymatrix{
\Z^{\oplus \Z^2}\oplus\Z^{\oplus \Z}\oplus \Z^2\ar[r]^{} & \Z^{\oplus \Z}\oplus\Z^{\oplus \Z}\oplus \Z^2 \ar[r]^{}& K_0\big(\mathcal{Y}_1\big)\ar[d] \\
K_1\big(\mathcal{Y}_1\big)\ar[u] &0\ar[l] &0 \ar[l] }$$
The remaining K-groups can be computed through the following proposition.
\begin{proposition}
The image and the kernel of the map $\beta_{1*}\colon K_0(\mathfrak{F}_B)\rightarrow K_0(\mathfrak{F}_B)$ are given by
$${\rm Im}(\beta_{1*})\;=\;\bigoplus_{i\in \Z^2}\Z[\mathfrak{z}_i]\oplus\bigoplus_{i\in \Z}\Z[\mathfrak{u}_{(i,0)}]\;,\qquad {\rm Ker}(\beta_{1*})\;=\;\Z[\mathfrak{1}]\;.$$
Therefore,
$$K_0(\mathcal{Y}_1)\;=\;
\bigoplus_{i\in \Z}\Z[\mathfrak{r}_{(0,i)}]\oplus \Z[\mathfrak{q}_0]\oplus\Z[\mathfrak{q}^\bot_0]\;,\qquad K_1(\mathcal{Y}_1)\;=\;\Z[\mathfrak{s}_1]\;.$$
\end{proposition}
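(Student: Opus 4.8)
The plan is to analyze the map $\beta_{1*} = \mathbf{1} - \alpha_{1*}$ on the known generators of $K_0(\mathfrak{F}_B) \simeq K_0(C(\Omega_B))$, namely the classes $[\mathfrak{z}_i]$ for $i \in \Z^2$, $[\mathfrak{r}_{(0,i)}]$ for $i \in \Z$, $[\mathfrak{u}_{(i,0)}]$ for $i \in \Z$, $[\mathfrak{q}_0]$, and $[\mathfrak{q}_0^\bot]$. First I would recall that the automorphism $\alpha_1$ is conjugation by $\mathfrak{s}_1$, and on the commutative subalgebra $\mathfrak{F}_B$ this is precisely the translation $\tau_{e_1}$ by $e_1$, so $\alpha_{1*}$ acts on $K_0$ by the induced shift on the projections: $\alpha_{1*}[\mathfrak{m}_{\chi_{\mathtt{O}}}] = [\mathfrak{m}_{\chi_{\tau_{e_1}(\mathtt{O})}}]$. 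Concretely $\alpha_{1*}[\mathfrak{z}_{(a,b)}] = [\mathfrak{z}_{(a+1,b)}]$, $\alpha_{1*}[\mathfrak{u}_{(a,0)}] = [\mathfrak{u}_{(a+1,0)}]$, $\alpha_{1*}[\mathfrak{q}_0] = [\mathfrak{q}_{e_1}]$, $\alpha_{1*}[\mathfrak{r}_{(0,i)}] = [\mathfrak{r}_{(1,i)}]$, and $\alpha_{1*}[\mathfrak{1}] = [\mathfrak{1}]$.

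Next I would rewrite the shifted classes back in terms of the chosen generators using the geometric decompositions of the underlying sets: $\mathtt{Q}_{e_1} = \mathtt{Q}_0 \setminus \mathtt{U}_{(0,0)}$, so $[\mathfrak{q}_{e_1}] = [\mathfrak{q}_0] - [\mathfrak{u}_{(0,0)}]$ in $K_0$ (the projections are orthogonal sub-projections); similarly $\mathtt{R}_{(1,i)} = \mathtt{R}_{(0,i)} \setminus \{(0,i)\}$ gives $[\mathfrak{r}_{(1,i)}] = [\mathfrak{r}_{(0,i)}] - [\mathfrak{z}_{(0,i)}]$; and $[\mathfrak{q}_0^\bot] = [\mathfrak{1}] - [\mathfrak{q}_0]$. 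Feeding these into $\beta_{1*} = \mathbf{1} - \alpha_{1*}$ one obtains: $\beta_{1*}[\mathfrak{z}_{(a,b)}] = [\mathfrak{z}_{(a,b)}] - [\mathfrak{z}_{(a+1,b)}]$ (a one-sided shift difference along each row), $\beta_{1*}[\mathfrak{u}_{(a,0)}] = [\mathfrak{u}_{(a,0)}] - [\mathfrak{u}_{(a+1,0)}]$, $\beta_{1*}[\mathfrak{q}_0] = [\mathfrak{u}_{(0,0)}]$, $\beta_{1*}[\mathfrak{q}_0^\bot] = -[\mathfrak{u}_{(0,0)}]$, and $\beta_{1*}[\mathfrak{r}_{(0,i)}] = [\mathfrak{z}_{(0,i)}]$. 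From here the image computation is a direct bookkeeping exercise: the differences $[\mathfrak{z}_{(a,b)}] - [\mathfrak{z}_{(a+1,b)}]$ together with the single elements $[\mathfrak{z}_{(0,i)}]$ (coming from the $[\mathfrak{r}_{(0,i)}]$ images) telescope to span all of $\bigoplus_{i \in \Z^2} \Z[\mathfrak{z}_i]$; likewise the differences $[\mathfrak{u}_{(a,0)}] - [\mathfrak{u}_{(a+1,0)}]$ together with $[\mathfrak{u}_{(0,0)}]$ span all of $\bigoplus_{i \in \Z}\Z[\mathfrak{u}_{(i,0)}]$. For the kernel, a vector in $K_0(\mathfrak{F}_B)$ lies in $\ker \beta_{1*}$ iff it is $\alpha_{1*}$-invariant; since $\alpha_{1*}$ is a free shift on the $\mathfrak{z}$- and $\mathfrak{u}$-indices, any finitely supported invariant combination of those vanishes, and one checks directly that on the remaining generators the only invariant combinations are multiples of $[\mathfrak{1}] = \sum$ (in the appropriate stable sense) — concretely $[\mathfrak{1}] = [\mathfrak{q}_0] + [\mathfrak{q}_0^\bot]$, which satisfies $\beta_{1*}[\mathfrak{q}_0] + \beta_{1*}[\mathfrak{q}_0^\bot] = [\mathfrak{u}_{(0,0)}] - [\mathfrak{u}_{(0,0)}] = 0$, so $\Z[\mathfrak{1}] \subseteq \ker\beta_{1*}$; the reverse inclusion follows by solving the linear system.

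Finally, plugging $\operatorname{Im}(\beta_{1*})$ and $\operatorname{Ker}(\beta_{1*})$ into the Pimsner–Voiculescu sequence \eqref{seq pim} gives $K_0(\mathcal{Y}_1) = \operatorname{coker}(\beta_{1*}\colon K_0 \to K_0) = \bigoplus_{i\in\Z}\Z[\mathfrak{r}_{(0,i)}] \oplus \Z[\mathfrak{q}_0] \oplus \Z[\mathfrak{q}_0^\bot]$ (the generators surviving modulo the image), since the connecting map $\partial_1$ lands in the zero group $K_1(\mathfrak{F}_B) = 0$; and $K_1(\mathcal{Y}_1) = \ker(\beta_{1*}\colon K_1 \to K_1) \oplus \operatorname{coker}(\partial_0)$, but $K_1(\mathfrak{F}_B) = 0$ forces $K_1(\mathcal{Y}_1) \simeq \operatorname{coker}(\partial_0) = K_0(\mathfrak{F}_B)/\ker(\beta_{1*}\text{ shifted})$... more precisely, from the sequence $K_1(\mathcal{Y}_1)$ is an extension of $\ker(\beta_{1*}|_{K_0}) = \Z[\mathfrak{1}]$ by $0$, so $K_1(\mathcal{Y}_1) = \Z$, generated by the image of $[\mathfrak{1}]$ under $\partial_1$, which is standardly identified with $[\mathfrak{s}_1]$ (the class of the unitary implementing the crossed product). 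I expect the main obstacle to be the careful verification that the six-term connecting maps split and that the identification of $\partial_1([\mathfrak{1}])$ with $[\mathfrak{s}_1]$ is the correct one — this is the standard fact that in a crossed product by $\Z$ the exponential/index map sends the unit's class to the class of the canonical implementing unitary, but it must be invoked precisely; the image/kernel computation of $\beta_{1*}$ itself is routine telescoping once the geometric set-identities are in hand.
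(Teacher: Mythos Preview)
Your proposal is correct and follows essentially the same approach as the paper: compute $\beta_{1*}$ on each of the listed generators of $K_0(\mathfrak{F}_B)$ via the geometric identities $\mathtt{Q}_{e_1}=\mathtt{Q}_0\setminus\mathtt{U}_{(0,0)}$, $\mathtt{R}_{(1,i)}=\mathtt{R}_{(0,i)}\setminus\{(0,i)\}$, etc., then read off image and kernel by telescoping, and finally plug into the Pimsner--Voiculescu sequence using $K_1(\mathfrak{F}_B)=0$. The paper's computation of $\beta_{1*}$ on the five families of generators is line-for-line what you wrote.

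Two small points to clean up. First, in your last paragraph you write ``the connecting map $\partial_1$ lands in the zero group $K_1(\mathfrak{F}_B)=0$'' when you mean $\partial_0$; it is $\partial_0\colon K_0(\mathcal{Y}_1)\to K_1(\mathfrak{F}_B)=0$ that forces $K_0(\mathcal{Y}_1)\simeq\operatorname{coker}(\beta_{1*})$, while $\partial_1\colon K_1(\mathcal{Y}_1)\to K_0(\mathfrak{F}_B)$ is injective (since $i_*$ on $K_1$ vanishes) with image $\ker(\beta_{1*})=\Z[\mathfrak{1}]$. Second, the identification of the generator of $K_1(\mathcal{Y}_1)\simeq\Z$ with $[\mathfrak{s}_1]$ is not purely formal: the paper pins it down by an explicit computation (borrowed from \cite[Proposition~4.9]{Deni1}) using the isometry $v_1=\mathfrak{s}_1\otimes\mathfrak{v}$ in the Toeplitz extension to show $\partial_1([\mathfrak{s}_1])=-[\mathfrak{1}]$. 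You should invoke this concretely rather than as a ``standard fact''.
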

\begin{proof}
Using the relations between the elements of $\mathfrak{F}_B$ one gets
\begin{equation*}
\begin{split}
      \beta_{1*}\big([\mathfrak{z}_{(i,j)}]\big)&\;=\;[\mathfrak{z}_{(i,j)}-\mathfrak{s}_1\mathfrak{z}_{(i,j)}\mathfrak{s}^*_1]\;=\;[\mathfrak{z}_{(i,j)}]-[\mathfrak{z}_{(i+1,j)}]\\
       \beta_{1*}\big([\mathfrak{q}_0]\big)&\;=\;[\mathfrak{q}_0-\mathfrak{s}_1\mathfrak{q}_0\mathfrak{s}_1^*]\;=\;[\mathfrak{u}_{(0,0)}]\\
        \beta_{1*}\big([\mathfrak{q}^\bot_0]\big)&\;=\;[\mathfrak{q}^\bot_0-\mathfrak{s}_1\mathfrak{q}^\bot_0\mathfrak{s}_1^*]\;=\;-[\mathfrak{u}_{(0,0)}]\\
 \beta_{1*}\big([\mathfrak{r}_{(0,i)}]\big)&\;=\;[\mathfrak{r}_{(0,i)}-\mathfrak{s}_1\mathfrak{r}_{(0,i)}\mathfrak{s}_1^*]\;=\;[\mathfrak{z}_{(0,i)}]\\
    \beta_{1*}\big([\mathfrak{u}_{(i,0)}]\big)
            &\;=\;[\mathfrak{u}_{(i,0)}-\mathfrak{s}_1\mathfrak{u}_{(i,0)}\mathfrak{s}_1^*]\;=\;[\mathfrak{u}_{(i,0)}]-[\mathfrak{u}_{(i+1,0)}]
\end{split}
\end{equation*}
Hence the image of $\beta_{1*}$ is $$\bigoplus_{i\in \Z^2}\Z[\mathfrak{z}_i]\oplus\bigoplus_{i\in \Z}\Z[\mathfrak{u}_{(i,0)}]\;.$$
The above relations also imply that the kernel of $\beta_{1*}$ is $ \Z[\mathfrak{q}_0+\mathfrak{q}^\bot_0]=\Z[\mathfrak{1}].$ Notice that the sequence \ref{seq pim} yields that ${\rm Ker}(\partial_1)={\rm Im}(i_*)=0$, then $$K_1(\mathcal{Y}_1)\;=\;{\rm Im}(\partial_1)\;=\;{\rm Ker}(\beta_{1*})\;\simeq\; \Z[\mathfrak{1}]\;.$$
By using the isometry $v_1:=\mathfrak{s}_1\otimes \mathfrak{v}$ defined in the proof of \cite[Proposition 4.9]{Deni1}, one obtains that $\partial_1([\mathfrak{s}_1])=-[\mathfrak{1}]$ and therefore $K_1(\mathcal{Y}_1)\;=\;\mathbb{Z}[\mathfrak{s}_1].$ The remaining $K$-group is given by
\begin{equation*}
\begin{split}
     K_0(\mathcal{Y}_1)&\;=\;K_0(\mathfrak{F}_B)/{\rm Im}(\beta_{1*})\;=\;\bigoplus_{i\in \Z}\Z[\mathfrak{r}_{(0,i)}]\oplus \Z[\mathfrak{q}_0]\oplus\Z[\mathfrak{q}^\bot_0]\;.
\end{split}
\end{equation*}
\end{proof}
\noindent
In order to obtain the $K$-theory of $\A$ we use again the Pimsner-Voiculescu exact sequence. Namely,
\begin{equation}\label{seq 3,8}
    \xymatrix{
 K_0\big(\mathcal{Y}_1\big)\ar[r]^{\beta_{2*}} & K_0\big(\mathcal{Y}_1\big) \ar[r]^{i_*}& K_0\big(\A\big)\ar[d]^{\partial_0} \\
K_1\big(\A\big)\ar[u]^{\partial_1} &K_1\big(\mathcal{Y}_1\big)\ar[l]^{i_*} &K_1\big(\mathcal{Y}_1\big) \ar[l]^{\beta_{2*}} }
\end{equation}
where $\beta_2:= {\bf1}-\alpha_2$. Replacing the known $K$-groups it follows that
$$   \xymatrix{
\Z^{\oplus \Z}\oplus \Z^2\ar[r]^{} & \Z^{\oplus \Z}\oplus \Z^2 \ar[r]^{}& K_0\big(\A\big)\ar[d] \\
K_1\big(\A\big)\ar[u] &\Z\ar[l] &\Z \ar[l] }.$$
Now we are ready to calculate the $K$-groups of the magnetic quarter-plane algebra $\A$.
\newpage
\begin{proposition}\label{k-groups A}
The image and the kernel of the map $\beta_{2*}\colon K_0(\mathcal{Y}_1)\rightarrow K_0(\mathcal{Y}_1)$ are given by
$${\rm Im}(\beta_{2*})\;=\;\bigoplus_{i\in \Z}\Z[\mathfrak{r}_{(0,i)}]\;,\qquad {\rm Ker}(\beta_{2*})\;=\;\Z[\mathfrak{1}]\;.$$
Therefore,
$$K_0(\A)\;=\;\Z[\mathfrak{q}_0]\oplus\Z[\mathfrak{q}^\bot_0]\oplus \Z[\mathfrak{c}]
\;,\qquad K_1(\A)\;=\;\Z[\mathfrak{s}_1]\oplus \Z[\mathfrak{s}_2]\;,$$
where $\mathfrak{c}$ is a projection in $\A\otimes {\rm Mat}_N(\C)$ for some $N\in \mathbb{N}.$
\end{proposition}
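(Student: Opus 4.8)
The plan is to feed the $K$-groups of $\mathcal{Y}_1=\mathfrak{F}_B\rtimes_{\alpha_1}\Z$ into the Pimsner--Voiculescu sequence \ref{seq 3,8} for $\A=\mathcal{Y}_1\rtimes_{\alpha_2}\Z$, so that the whole computation reduces to the two instances of $\beta_{2*}$ with $\beta_2={\bf1}-\alpha_2$. First I would treat the degree-zero map. On $\mathfrak{F}_B$ the automorphism $\alpha_2(\mathfrak{g})=\mathfrak{s}_2\mathfrak{g}\mathfrak{s}_2^*$ coincides with the translation $\tau_{e_2}$ by \ref{equ 2.6}, so $\alpha_2(\mathfrak{r}_{(0,i)})=\mathfrak{r}_{(0,i+1)}$, $\alpha_2(\mathfrak{q}_0)=\mathfrak{q}_{(0,1)}$ and $\alpha_2(\mathfrak{q}^\bot_0)=\mathfrak{q}^\bot_{(0,1)}$. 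Coupling this with the elementary set identity $\mathtt{Q}_0\setminus\mathtt{Q}_{(0,1)}=\mathtt{R}_0$, which makes $\chi_{\mathtt{Q}_0}-\chi_{\mathtt{Q}_{(0,1)}}=\chi_{\mathtt{R}_0}$ an orthogonal decomposition of projections, yields $\beta_{2*}([\mathfrak{r}_{(0,i)}])=[\mathfrak{r}_{(0,i)}]-[\mathfrak{r}_{(0,i+1)}]$, $\beta_{2*}([\mathfrak{q}_0])=[\mathfrak{r}_{(0,0)}]$ and $\beta_{2*}([\mathfrak{q}^\bot_0])=-[\mathfrak{r}_{(0,0)}]$. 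Since $[\mathfrak{r}_{(0,0)}]$ lies in the image and all the differences $[\mathfrak{r}_{(0,i)}]-[\mathfrak{r}_{(0,i+1)}]$ do too, one gets ${\rm Im}(\beta_{2*})=\bigoplus_{i\in\Z}\Z[\mathfrak{r}_{(0,i)}]$; and ${\rm Ker}(\beta_{2*})=\Z[\mathfrak{1}]$ follows by the same finite-support telescoping argument used above for $\beta_{1*}$ (the $\mathfrak{r}$-coefficients of a kernel element are forced to vanish, leaving only a multiple of $[\mathfrak{q}_0+\mathfrak{q}^\bot_0]=[\mathfrak{1}]$).

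Next, the degree-one map. Since $K_1(\mathcal{Y}_1)=\Z[\mathfrak{s}_1]$, I only need $\beta_{2*}([\mathfrak{s}_1])$. Taking adjoints in the magnetic commutation relation \ref{equ 2.4} gives $\mathfrak{s}_2\mathfrak{s}_1\mathfrak{s}_2^*\mathfrak{s}_1^*=\mathfrak{f}_B^{*}$, that is $\alpha_2(\mathfrak{s}_1)=\mathfrak{s}_2\mathfrak{s}_1\mathfrak{s}_2^*=\mathfrak{f}_B^{-1}\mathfrak{s}_1$. Now $\mathfrak{f}_B$ is a unitary in $\mathfrak{F}_B\simeq C(\Omega_B)$ and $K_1(C(\Omega_B))=0$, so its class in $K_1(\mathcal{Y}_1)$ vanishes; by additivity of the $K_1$-class on products of unitaries, $[\alpha_2(\mathfrak{s}_1)]=-[\mathfrak{f}_B]+[\mathfrak{s}_1]=[\mathfrak{s}_1]$. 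Hence $\beta_{2*}$ is the zero map on $K_1(\mathcal{Y}_1)$.

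With both copies of $\beta_{2*}$ in hand, exactness of \ref{seq 3,8} along the top row gives the short exact sequence $0\to K_0(\mathcal{Y}_1)/{\rm Im}(\beta_{2*})\to K_0(\A)\to K_1(\mathcal{Y}_1)\to0$ with connecting map $\partial_0$ (surjective because $\beta_{2*}$ kills $K_1(\mathcal{Y}_1)$), that is $0\to\Z[\mathfrak{q}_0]\oplus\Z[\mathfrak{q}^\bot_0]\to K_0(\A)\to\Z[\mathfrak{s}_1]\to0$; and along the bottom row, using ${\rm Ker}(\beta_{2*}|_{K_1})=K_1(\mathcal{Y}_1)$, $i_*$ injective on $K_1$, and ${\rm Ker}(\beta_{2*}|_{K_0})=\Z[\mathfrak{1}]$, one gets $0\to\Z[\mathfrak{s}_1]\to K_1(\A)\to\Z[\mathfrak{1}]\to0$ with connecting map $\partial_1$. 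Both right-hand groups are free, so both extensions split and $K_0(\A)\cong\Z[\mathfrak{q}_0]\oplus\Z[\mathfrak{q}^\bot_0]\oplus\Z$, $K_1(\A)\cong\Z[\mathfrak{s}_1]\oplus\Z$. To name the new generators I would repeat the $\mathcal{Y}_1$-step with $\mathfrak{s}_1$ and $\mathfrak{s}_2$ swapped: the isometry $v_2:=\mathfrak{s}_2\otimes\mathfrak{v}$ (the analogue of the one in \cite[Proposition 4.9]{Deni1}) yields $\partial_1([\mathfrak{s}_2])=-[\mathfrak{1}]$, so $[\mathfrak{s}_2]$ generates the second $K_1$ summand; and picking a $\partial_0$-preimage of the generator $[\mathfrak{s}_1]$ and normalizing it modulo $[\mathfrak{1}]$ represents it by the class $[\mathfrak{c}]$ of a genuine projection $\mathfrak{c}\in\A\otimes\Mat_N(\C)$.

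I expect the delicate step to be the degree-one computation: the value $\beta_{2*}([\mathfrak{s}_1])=0$ hinges on correctly reading $\alpha_2(\mathfrak{s}_1)=\mathfrak{f}_B^{-1}\mathfrak{s}_1$ out of \ref{equ 2.4} and on the triviality of $[\mathfrak{f}_B]$ in $K_1$, and a nonzero answer here would change the final groups entirely. The only other point requiring attention is the kernel of $\beta_{2*}$ on $K_0$, which is the same finite-support bookkeeping already carried out for $\beta_{1*}$.
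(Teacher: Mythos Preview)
Your proposal is correct and follows essentially the same route as the paper: compute $\beta_{2*}$ on the explicit generators of $K_0(\mathcal{Y}_1)$ and $K_1(\mathcal{Y}_1)$, read off image and kernel, and then split the two short exact sequences extracted from \ref{seq 3,8} using freeness, naming the new $K_1$-generator via $\partial_1([\mathfrak{s}_2])=-[\mathfrak{1}]$ (with $v_2=\mathfrak{s}_2\otimes\mathfrak{v}$) and the new $K_0$-generator as a $\partial_0$-preimage $[\mathfrak{c}]$ of $[\mathfrak{s}_1]$. The one cosmetic difference is your justification that $\beta_{2*}([\mathfrak{s}_1])=0$: you invoke $K_1(\mathfrak{F}_B)=K_1(C(\Omega_B))=0$ to kill $[\mathfrak{f}_B^{-1}]$, whereas the paper writes down the explicit homotopy $t\mapsto \expo{-\ii b_\llcorner t}\mathfrak{q}_0+\expo{-\ii b_\ast t}\mathfrak{q}_0^\bot$ inside $\mathfrak{F}_B$ connecting $\overline{\mathfrak{f}_B}$ to $\mathfrak{1}$; these are equivalent, and your abstract argument is perfectly adequate since $K_1(C(\Omega_B))=0$ was already established.
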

\begin{proof}
Using again the relations between the elements of $\mathfrak{F}_B$ we have 
\begin{equation*}
\begin{split}
       \beta_{2*}\big([\mathfrak{q}_0]\big)&\;=\;[\mathfrak{q}_0-\mathfrak{s}_2\mathfrak{q}_0\mathfrak{s}_2^*]\;=\;[\mathfrak{r}_{(0,0)}]\\
        \beta_{2*}\big([\mathfrak{q}^\bot_0]\big)&\;=\;[\mathfrak{q}^\bot_0-\mathfrak{s}_2\mathfrak{q}^\bot_0\mathfrak{s}_2^*]\;=\;-[\mathfrak{r}_{(0,0)}]
        \\
        \beta_{2*}\big([\mathfrak{r}_{(0,i)}]\big)&\;=\;[\mathfrak{r}_{(0,i)}-\mathfrak{s}_2\mathfrak{r}_{(0,i)}\mathfrak{s}_2^*]\;=\;[\mathfrak{r}_{(0,i)}]-[\mathfrak{r}_{(0,i+1)}]
\end{split}
\end{equation*}
It follows that  
$${\rm Im}(\beta_{2*})\;=\;\bigoplus_{i\in \Z}\Z[\mathfrak{r}_{(0,i)}]\;,\qquad {\rm Ker}(\beta_{2*})\;=\;\Z[\mathfrak{1}]\;,$$
and
\begin{equation}\label{kkk}
    \begin{split}
        K_1(\A)/{\rm Im}(i_*)\;\simeq\; {\rm Im}(\partial_1)\;\simeq\; {\rm Ker}(\beta_{2*})\;=\;\mathbb{Z}[\mathfrak{1}]\;.
    \end{split}
\end{equation}
Notice that
$$ \beta_{2*}\big([\mathfrak{s}_{1}]\big)\;=\;[\mathfrak{s}_1]-[\mathfrak{s_2\mathfrak{s}_1\mathfrak{s}_2^*}]\;=\;[\mathfrak{s}_1]-[\overline{\mathfrak{f}_B}\mathfrak{s}_1]\;=\;[\mathfrak{0}]\;,$$
where we have used that $[\overline{\mathfrak{f}_B}]=[\mathfrak{1}]$ with the homotopy $t\mapsto \expo{-\ii b_\llcorner t}\mathfrak{q}_0+\expo{-\ii b_\ast t}\mathfrak{q}_0^\bot.$
Hence $\beta_{2*}\colon\Z[\mathfrak{s}_1]\rightarrow\Z[\mathfrak{s}_1]$ is the zero map and consequently ${\rm Im}(i_*)\simeq \Z[\mathfrak{s}_1]$. Moreover, using again the same argument in the proof of \cite[Proposition 4.9]{Deni1} with the isometry $v_2=\mathfrak{s}_2\otimes \mathfrak{v}$, it holds that $\partial_1([\mathfrak{s}_2])=-[\mathfrak{1}]$ and so one concludes that $K_1(\A)=\Z[\mathfrak{s}_1]\oplus \Z[\mathfrak{s}_2]$ by \ref{kkk}. \\
Now let us calculate $K_0(\A)$. In light of the sequence \ref{seq 3,8}, one has
\begin{equation}\label{nnn}
    K_0(\A)/{\rm Im}(i_*)\;\simeq \;{\rm Im}(\partial_0)\;\simeq \;{\rm Ker}(\beta_{2*})\;=\;\mathbb{Z}[\mathfrak{s}_1]\;,
\end{equation}
where recall that $\beta_{2*}\colon\Z[\mathfrak{s}_1]\rightarrow\Z[\mathfrak{s}_1]$ is the zero map. Since $${\rm Ker}(i_*)\;=\;{\rm Im}(\beta_{2*})\;=\;\bigoplus_{i\in \Z}\Z[\mathfrak{r}_{(0,i)}]\;, $$
where $\beta_{2*}\colon K_0(\mathcal{Y}_1)\rightarrow K_0(\mathcal{Y}_1)$, then
$$i_*\big(\bigoplus_{i\in \Z}\Z[\mathfrak{r}_{(0,i)}]\oplus \Z[\mathfrak{q}_0]\oplus\Z[\mathfrak{q}^\bot_0]\big)\;=\;i_*\big( \Z[\mathfrak{q}_0]\oplus\Z[\mathfrak{q}^\bot_0]\big)\;\simeq \;\Z[\mathfrak{q}_0]\oplus\Z[\mathfrak{q}^\bot_0]\;.$$
Finally, for some $N\in \N$ there is a projection $\mathfrak{c}\in\mathfrak{A}\otimes {\rm Mat}_N(\C)$ such that $\partial_0([\mathfrak{c}])=-[\mathfrak{s}_1]$ and by \ref{nnn} it holds that $$K_0(\A)\;=\;\Z[\mathfrak{q}_0]\oplus\Z[\mathfrak{q}^\bot_0]\oplus \Z[\mathfrak{c}]\;.$$
\end{proof}
\subsection{K-theory of the interface algebra} As a direct consequence of the definition of the interface algebra $\mathfrak{I}$, we obtain the \emph{quarter-plane exact sequence}:
\begin{equation}\label{seq a}
    \xymatrix{
 0\ar[r]&\mathfrak{I}\ar[r]^{i} & \A \ar[r]^{{\rm ev}}& \A_{{\rm bulk}}\ar[r]&0\;, }
\end{equation}
 where recall that $\mathfrak{I}$ and $\A_{{\rm bulk}}$ are defined in section \ref{secc 3.1}.  Applying the six-term exact sequence to \ref{seq a} it turns out

\begin{equation}\label{seq six}
    \xymatrix{
 K_0(\mathfrak{I})\ar[r]^{i_*} & K_0(\A) \ar[r]^{{\rm ev}_*}& K_0(\A_{bulk})\ar[d]^{{\mathtt{exp}}} \\
K_1(\A_{bulk})\ar[u]^{{\mathtt{ind}}} &K_1(\A) \ar[l]^{{\rm ev}_*} &K_1(\mathfrak{I}) \ar[l]^{i_*} }
\end{equation}
Observe that the bulk algebra $\A_{{\rm bulk}}$ is isomorphic to the direct sum of two noncommutative torus \cite[Example 2.10]{Deni1}, and its K-theory is well-known \cite{Bla,Ols}. Namely,
$$K_0(\A_{{\rm bulk}})\;=\;\Z[(\mathfrak{1,0})]\oplus\Z[(\mathfrak{0,1})]\oplus\Z[(\mathfrak{p}_{\theta_\llcorner},\mathfrak{0})]\oplus\Z[(\mathfrak{0},\mathfrak{p}_{\theta_\ast})]\;,$$
$$K_1(\A_{{\rm bulk}})\;=\;\Z[(\mathfrak{s}_{b_{\llcorner},1},\mathfrak{1})]\oplus\Z[(\mathfrak{s}_{b_{\llcorner},2},\mathfrak{1})]\oplus\Z[(\mathfrak{1},\mathfrak{s}_{b_\ast,1})]\oplus\Z[(\mathfrak{1},\mathfrak{s}_{b_\ast,2})]\;,$$
where $\mathfrak{p}_{\theta_\llcorner}$ and $\mathfrak{p}_{\theta_\ast}$ are the \emph{Powers-Rieffel projections} of $\A_{b_\llcorner}$ and $\A_{b_\ast}$, respectively.
Thus, the sequence \ref{seq six} turns out to be
$$ \xymatrix{
 K_0(\mathfrak{I})\ar[r]^{i_*} & \Z^3 \ar[r]^{{\rm ev}_*}& \Z^4\ar[d]^{{\mathtt{exp}}} \\
\Z^4\ar[u]^{{\mathtt{ind}}} &\Z^2
\ar[l]^{{\rm ev}_*} &K_1(\mathfrak{I}) \ar[l]^{i_*} }$$
The following Theorem provides the K-theory of $\mathfrak{I}.$
\begin{theorem}
The K-groups of the interface algebra $\mathfrak{I}$ are given by
\begin{equation*}
\begin{split}
     K_0(\mathfrak{I})\;=\;\Z[\mathfrak{r}_{0}]\oplus\Z[\mathfrak{u}_{0}]\;,\qquad
    K_1(\mathfrak{I})\;=\;\Z[\mathfrak{w}]\;,
\end{split}
\end{equation*}
where $\mathfrak{w}\;:=\;(\mathfrak{s}_{1}-\mathfrak{1})\mathfrak{r}_{0}+(\mathfrak{s}_2^*-\mathfrak{1})\mathfrak{u}_{0}(\mathfrak{1}-\mathfrak{r}_0)+ \mathfrak{1}.$
\end{theorem}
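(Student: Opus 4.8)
\emph{Proof proposal.} The plan is to read off $K_*(\mathfrak{I})$ from the six-term exact sequence \eqref{seq six}, using the groups $K_*(\A)$ of Proposition \ref{k-groups A}, the well-known $K_*(\A_{\rm bulk})$, and explicit computations of the maps ${\rm ev}_*$, $\mathtt{ind}$ and $\mathtt{exp}$ on generators.

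First I would compute ${\rm ev}_*\colon K_1(\A)\to K_1(\A_{\rm bulk})$. Since ${\rm ev}(\mathfrak{s}_j)=(\mathfrak{s}_{b_\llcorner,j},\mathfrak{s}_{b_\ast,j})$, one has ${\rm ev}_*[\mathfrak{s}_j]=[(\mathfrak{s}_{b_\llcorner,j},\mathfrak{1})]+[(\mathfrak{1},\mathfrak{s}_{b_\ast,j})]$ for $j=1,2$, and these two elements are linearly independent and span a direct summand, so ${\rm ev}_*$ is injective on $K_1$ with cokernel $\cong\Z^2$. By exactness of \eqref{seq six} this forces $i_*\colon K_1(\mathfrak{I})\to K_1(\A)$ to vanish, hence $\mathtt{exp}\colon K_0(\A_{\rm bulk})\to K_1(\mathfrak{I})$ is onto; and once ${\rm ev}_*$ is shown injective on $K_0$ it will likewise follow that $i_*\colon K_0(\mathfrak{I})\to K_0(\A)$ vanishes and that $\mathtt{ind}\colon K_1(\A_{\rm bulk})\to K_0(\mathfrak{I})$ is onto. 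Therefore $K_0(\mathfrak{I})\cong K_1(\A_{\rm bulk})/{\rm Im}\,{\rm ev}_*\cong\Z^2$ and $K_1(\mathfrak{I})\cong K_0(\A_{\rm bulk})/{\rm Im}\,{\rm ev}_*$. For ${\rm ev}_*\colon K_0(\A)\to K_0(\A_{\rm bulk})$, Proposition \ref{prop: proj} gives $[\mathfrak{q}_0]\mapsto[(\mathfrak{1},\mathfrak{0})]$ and $[\mathfrak{q}_0^\bot]\mapsto[(\mathfrak{0},\mathfrak{1})]$, so only ${\rm ev}_*[\mathfrak{c}]$ remains. I would determine it by naturality of the Pimsner--Voiculescu index map with respect to the $\alpha_1,\alpha_2$-equivariant quotient $\mathfrak{F}_B\to\mathfrak{F}_B/\A(\mathfrak{r}_0,\mathfrak{u}_0)\cong\C^2$: since $\partial_0[\mathfrak{c}]=-[\mathfrak{s}_1]$ (proof of Proposition \ref{k-groups A}) and the induced quotient map sends $[\mathfrak{s}_1]$ to a generator of $K_1\big(\C^2\rtimes_{\alpha_1}\Z\big)\cong\Z^2$, while the bulk Pimsner--Voiculescu index map carries the Powers--Rieffel classes $[(\mathfrak{p}_{\theta_\llcorner},\mathfrak{0})]$, $[(\mathfrak{0},\mathfrak{p}_{\theta_\ast})]$ to generators and kills $[(\mathfrak{1},\mathfrak{0})]$, $[(\mathfrak{0},\mathfrak{1})]$, it follows that the Powers--Rieffel components of ${\rm ev}_*[\mathfrak{c}]$ form a coprime pair $(\pm1,\pm1)$. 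Hence ${\rm ev}_*$ is injective on $K_0$ with cokernel $\Z^2/\langle(\pm1,\pm1)\rangle\cong\Z$, so $K_1(\mathfrak{I})\cong\Z$.

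It remains to identify the generators explicitly. For $K_0$: the bulk unitary $(\mathfrak{s}_{b_\llcorner,1},\mathfrak{1})$ lifts to the isometry $\mathfrak{v}:=\mathfrak{s}_1\mathfrak{q}_0+\mathfrak{q}_0^\bot\in\A$ --- here $\mathfrak{v}^*\mathfrak{v}=\mathfrak{1}$ and $\mathfrak{1}-\mathfrak{v}\mathfrak{v}^*=\mathfrak{u}_0$ because $\mathfrak{s}_1$ maps $\mathtt{Q}_0$ into $\mathtt{Q}_0$ with complement $\mathtt{U}_0$ --- so $\mathtt{ind}[(\mathfrak{s}_{b_\llcorner,1},\mathfrak{1})]=\pm[\mathfrak{u}_0]$; replacing $\mathfrak{s}_1$ by $\mathfrak{s}_2$ gives $\mathtt{ind}[(\mathfrak{s}_{b_\llcorner,2},\mathfrak{1})]=\pm[\mathfrak{r}_0]$, and the two $(\mathfrak{1},\ast)$-generators yield $\mp[\mathfrak{u}_0],\mp[\mathfrak{r}_0]$ (using $[\mathfrak{u}_n]=[\mathfrak{u}_0]$, $[\mathfrak{r}_n]=[\mathfrak{r}_0]$ in $K_0(\mathfrak{I})$, since translations are implemented by unitaries of $\A\subset M(\mathfrak{I})$). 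As $\mathtt{ind}$ is onto with kernel ${\rm Im}\,{\rm ev}_*|_{K_1}$, the classes $[\mathfrak{r}_0]$, $[\mathfrak{u}_0]$ form a $\Z$-basis of $K_0(\mathfrak{I})$. For $K_1$: a direct computation shows $\mathfrak{w}$ is a unitary in $\widetilde{\mathfrak{I}}$ --- it acts as the (magnetically twisted) bilateral shift along the $\Z$-parametrized broken line $\mathtt{U}_0\cup\mathtt{R}_0$ and as the identity on its complement --- whence $[\mathfrak{w}]\in K_1(\mathfrak{I})$; to see it is a generator I would compute $\mathtt{exp}[(\mathfrak{0},\mathfrak{p}_{\theta_\ast})]$, which generates $K_1(\mathfrak{I})$ by the cokernel computation above, using an explicit self-adjoint lift of the Powers--Rieffel projection in $\A\otimes{\rm Mat}_N(\C)$ and matching the resulting unitary with $\pm[\mathfrak{w}]$ up to homotopy.

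The hard part will be this last identification: implementing the exponential map on a Powers--Rieffel class requires an explicit matrix lift and a somewhat delicate homotopy (or a hands-on model of $\mathfrak{I}$ in terms of the projections $\mathfrak{r}_0,\mathfrak{u}_0,\mathfrak{z}_0$, as alluded to in the paper) in order to recognize the outcome as $\mathfrak{w}$; by contrast, the determination of the abstract group structure and of the $K_0$-generators is routine once the Pimsner--Voiculescu naturality input for ${\rm ev}_*[\mathfrak{c}]$ has been secured.
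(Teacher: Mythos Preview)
Your approach is correct in outline and genuinely different from the paper's. You work entirely with the bulk--interface sequence \eqref{seq six}, feeding in $K_*(\A)$ from Proposition \ref{k-groups A} and pinning down ${\rm ev}_*[\mathfrak{c}]$ via naturality of the Pimsner--Voiculescu boundary map; this cleanly yields the abstract groups $K_0(\mathfrak{I})\cong\Z^2$, $K_1(\mathfrak{I})\cong\Z$, and your index computation for the $K_0$-generators (lifting $(\mathfrak{s}_{b_\llcorner,j},\mathfrak{1})$ to isometries $\mathfrak{s}_j\mathfrak{q}_0+\mathfrak{q}_0^\bot$) is correct and elegant. The paper instead proves and exploits an \emph{internal} short exact sequence
\[
0\;\longrightarrow\;\mathcal{K}\;\longrightarrow\;\mathfrak{I}\;\longrightarrow\;C(\mathbb{T})\otimes\mathcal{K}\;\oplus\;C(\mathbb{T})\otimes\mathcal{K}\;\longrightarrow\;0,
\]
coming from $\mathfrak{I}=\A(\mathfrak{r}_0)+\A(\mathfrak{u}_0)$, $\A(\mathfrak{r}_0)\cap\A(\mathfrak{u}_0)\simeq\mathcal{K}$, and $\A(\mathfrak{r}_0)/\mathcal{K}\simeq\A(\mathfrak{u}_0)/\mathcal{K}\simeq C(\mathbb{T})\otimes\mathcal{K}$ (via the Iwatsuka evaluations ${\rm ev}_\U,{\rm ev}_\rz$).

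The trade-off shows up exactly where you flag it: identifying $[\mathfrak{w}]$ as the generator of $K_1(\mathfrak{I})$. In the paper's sequence this is almost free --- $K_1$ of the quotient is $\Z^2$ with visible generators $[(\mathfrak{s}_1^\U,\mathfrak{1})]$, $[(\mathfrak{1},\mathfrak{s}_2^\U)]$, one computes $\mathtt{ind}$ of each to be $-[\mathfrak{z}_0]$ using the obvious partial-isometry lifts $(\mathfrak{s}_j-\mathfrak{1})\mathfrak{r}_0+\mathfrak{1}$, and then $\mathfrak{w}$ is simply a preimage of the kernel generator $[(\mathfrak{s}_1^\U,{\mathfrak{s}_2^\U}^*)]$ under $\pi$. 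Your proposed route via an explicit self-adjoint lift of a Powers--Rieffel projection and a homotopy to $\mathfrak{w}$ is in principle feasible but substantially harder; note that the paper's later Lemma \ref{lem: 4.3} does compute $\mathtt{exp}$ on the Powers--Rieffel class, but \emph{only after} the generator $[\mathfrak{w}]$ is already in hand, by a surjectivity-plus-naturality argument rather than by direct matching. A cheaper fix within your framework: once you know $K_1(\mathfrak{I})\cong\Z$, pair with the winding-number $1$-cocycle $W_{\mathfrak{I},1}$ (as in Lemma \ref{lem: 4.4}) to see that $[\mathfrak{w}]$ hits $\pm1$, hence generates. Finally, the paper's route has the side benefit of establishing the structural isomorphisms $\A(\mathfrak{r}_0)/\mathcal{K}\simeq C(\mathbb{T})\otimes\mathcal{K}$, which are needed anyway to define the interface traces $\mathcal{T}_i$ in Section~4.
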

\begin{proof}
Let us denote by $\A(\mathfrak{r}_0)$ and $\A(\mathfrak{u}_0)$ to the closed two-sided ideals in $\A$ generated by $\mathfrak{r}_{0}$ and $\mathfrak{u}_{0}$, respectively.
Adapting step by step the proof of \cite[Lemma 3,4]{Guo}, we obtain that $\mathcal{K}= \A(\mathfrak{r}_0)\cap\A(\mathfrak{u}_0)$, where $\mathcal{K}$ is the algebra of compact operators on $\ell^2(\Z^2)$. The latter allows to see that $$K_0\big(\A(\mathfrak{r}_0)\cap\A(\mathfrak{u}_0)\big)\;=\;\Z[\mathfrak{z}_{0}]\;,\qquad K_1\big(\A(\mathfrak{r}_0)\cap\A(\mathfrak{u}_0)\big)\;=\;0\;.$$ 
In order to calculate the K-theory of $\mathfrak{I}$, we claim that $$\A(\mathfrak{u}_0)/\A(\mathfrak{r}_0)\cap\A(\mathfrak{u}_0)\;\simeq\;\A(\mathfrak{r}_0)/\A(\mathfrak{r}_0)\cap\A(\mathfrak{u}_0)\;\simeq\; C(\mathbb{T})\otimes\mathcal{K}.$$
where $\mathbb{T}$ is the one-dimensional torus and, with a little abuse of notation, $\mathcal{K}$ is the algebra of compact operators on $\ell^2(\Z).$
In fact, consider the closed invariant subset $\U_\infty\cup\{\infty_\llcorner\}\cup\{\infty_\ast\}\subset \partial \Omega_B$, which is homeomorphic to the magnetic hull of the Iwatsuka magnetic algebra $\Omega_I$ with the identification $\infty^\U_j\mapsto j$, $\infty_\llcorner\mapsto +\infty$, and $\infty_\ast\mapsto-\infty$.  Let us write $\A_\U$ for the magnetic algebra with associated magnetic hull $\U_\infty\cup\{\infty_\llcorner\}\cup\{\infty_\ast\}$ in the sense of \cite[Proposition 3.11]{Deni1}.
Therefore, the associate evaluation map ${\rm ev}_\U:\A\rightarrow \A_\U$ is a surjective $*$-homomorphism and fulfills
\begin{equation*}
\begin{split}
    {\rm ev}_\U(\mathfrak{s}_1)\;=\;\mathfrak{s}_{1}^\U\;,\qquad
       {\rm ev}_\U(\mathfrak{s}_2)\;=\;\mathfrak{s}_{2}^\U\;,\qquad
        {\rm ev}_\U(\mathfrak{f}_B)\;=\;\mathfrak{f}_I^\U\;.
\end{split}
\end{equation*}
Since $\tau_{(0,n_2)}({\rm ev}(\mathfrak{f}_B))={\rm ev}(\mathfrak{f}_B)$ for any $n_2\in \Z$, then the kernel of ${\rm ev}_\U$ is $\A(\mathfrak{r}_0).$ Namely, the decomposition given in the proof of the proposition \ref{prop 3,2} together with the equality ${\rm ev}_\U\big(\tau_\gamma(\mathfrak{f}_B)\big)=\tau_{(\gamma_1,0)}(\mathfrak{f}_I^\U)$ for each $\gamma=(\gamma_1,\gamma_2)\in \Z^2$
show that $\A(\mathfrak{r}_0)={\rm Ker}({\rm ev}_\U).$ On the other hand, we know that ${\rm ev}_\U\big(\mathfrak{u}_0\big)=\mathfrak{p}_{0}^\U$, where $\mathfrak{p}_{0}^\U$ is the projection which generates the Iwatsuka interface \cite[Proposition 4,6]{Deni1}. Thus, the image of $\A(\mathfrak{u}_0)$ under ${\rm ev}_\U$ is $*$-isomorphic to the Iwatsuka interface, and
 from \cite[Proposition 4.2]{Deni1}, one has $$U{\rm ev}_\U(\A(\mathfrak{u}_0))U^{-1}\;=\; C(\mathbb{T})\otimes \mathcal{K}$$
where $U$ is the Bloch-Floquet transform \cite{Kuch}. Therefore, 
\begin{equation*}
    \begin{split}
        \A(\mathfrak{u}_0)/\A(\mathfrak{r}_0)\cap\A(\mathfrak{u}_0)\;&=\;\A(\mathfrak{u}_0)/{\rm Ker}({\rm ev}_\U|_{\A(\mathfrak{u}_0)})\;\simeq\; {\rm ev}_\U(\A(\mathfrak{u}_0))\\
        &\simeq\; C(\mathbb{T})\otimes\mathcal{K}\;.
    \end{split}
\end{equation*}
 Using now the closed invariant subset $\rz_\infty\cup\{\infty_\llcorner\}\cup\{\infty_\ast\}\subset \partial \Omega_B$, then the same argument yields
 \begin{equation*}
    \begin{split}
        \A(\mathfrak{r}_0)/\A(\mathfrak{r}_0)\cap\A(\mathfrak{u}_0)\;\simeq\; C(\mathbb{T})\otimes\mathcal{K}\;,
    \end{split}
\end{equation*}
and the claim follows. By Proposition \ref{prop 3,2}, $\mathfrak{I}=\A(\mathfrak{r}_0)+\A(\mathfrak{u}_0)$ and hence the claim yields the following exact sequence
\begin{equation}\label{10}
\xymatrix{0\ar[r]& \A(\mathfrak{r}_0)\cap\A(\mathfrak{u}_0)\ar[r]^{\qquad i}& \mathfrak{I}\ar[r]^{\hspace{-1.6cm}\pi}& C(\mathbb{T})\otimes\mathcal{K}\oplus C(\mathbb{T})\otimes\mathcal{K}\ar[r]&0.}
\end{equation}
The six-term exact sequence associated with \ref{10} implies
\begin{equation}\label{seq 3.12}
    \xymatrix{
 \Z\ar[r]^{i_*} & K_0(\mathfrak{I}) \ar[r]^{\pi_*}& \Z\oplus \Z\ar[d]^{{\mathtt{exp}}} \\
\Z\oplus \Z\ar[u]^{{\mathtt{ind}}} &K_1(\mathfrak{I}) \ar[l]^{\pi_*} &0\ar[l]^{i_*} }
\end{equation}
Therefore, from \ref{seq 3.12} one has 
$$K_0(\mathfrak{I})/{\rm Im}(i_*)\;=\;K_0(\mathfrak{I})/{\rm Ker}(\pi_*)\;\simeq \;{\rm Im}(\pi_*)\;=\;{\rm Ker}({\mathtt{exp}})\;=\;\Z\oplus \Z.$$
Moreover, the relations
\begin{equation*}
    \begin{split}
        [\mathfrak{z}_{0}]+[\mathfrak{r}_{(0,1)}]\;&=\;[\mathfrak{r}_{0}\mathfrak{u}_{0}+\mathfrak{r}_{(0,1)}]\;=\;[\mathfrak{r}_{0}]\;=\;[\mathfrak{r}_{(0,1)}]\\
        [\mathfrak{z}_{0}]+[\mathfrak{u}_{(0,1)}]\;&=\;[\mathfrak{u}_{(1,0)}]
    \end{split}
\end{equation*}
show that $i_*([\mathfrak{z}_{0}])=[\mathfrak{0}]$ and $K_0(\mathfrak{I})=\Z\oplus \Z$. Explicitly,  $K_0(\mathfrak{I})=\Z[\mathfrak{r}_{0}]\oplus\Z[\mathfrak{u}_{0}]$ because $$\pi_*\big([\mathfrak{r}_{0}]\big)\;=\;\big[(\mathfrak{1}\otimes \mathfrak{z}_{0},\mathfrak{0})\big]\;,\hspace{1cm} \pi_*\big([\mathfrak{u}_{0}]\big)\;=\;\big[(\mathfrak{0},\mathfrak{1}\otimes \mathfrak{z}_{0})\big]$$ 
where $\big[(\mathfrak{1}\otimes \mathfrak{z}_{0},\mathfrak{0})\big]$ and $\big[(\mathfrak{0},\mathfrak{1}\otimes \mathfrak{z}_{0})\big]$ are the generators of $K_0\big(\,C(\mathbb{T})\otimes\mathcal{K}\oplus C(\mathbb{T})\otimes\mathcal{K}\big).$
The sequence \ref{seq six} implies that 
\begin{equation}\label{ind}
    K_1(\mathfrak{I})\;\simeq\; {\rm Im}(\pi_*)\;=\;{\rm Ker}({\mathtt{ind}})\;.
\end{equation}
By stability, we have
\begin{equation*}
    \begin{split}
        K_1\big(\,C(\mathbb{T})\otimes\mathcal{K}\oplus C(\mathbb{T})\otimes\mathcal{K}\,\big)\;&=\;K_1\big(\,C(\mathbb{T})\big)\oplus K_1\big( C(\mathbb{T})\,\big)\;.
    \end{split}
\end{equation*}
Thereby we can identified the generators of $ K_1\big(\,C(\mathbb{T})\otimes\mathcal{K}\oplus C(\mathbb{T})\otimes\mathcal{K}\,\big)$
with $[(\mathfrak{s}^\U_1,\mathfrak{1})]$ and $[(\mathfrak{1},\mathfrak{s}^\U_2)]$, where we have considered the isomorphism $C(\mathbb{T})\simeq C^*(\mathfrak{s}_i^\U)$ for $i=1,2.$ By stability again and for the sake of notational simplicity let us remove the “compact part" of the elements of $C(\mathbb{T})\otimes\mathcal{K}\oplus C(\mathbb{T})\otimes\mathcal{K}$ in the following computations. Notice that  $(\mathfrak{s}_1-\mathfrak{1})\mathfrak{r}_0+\mathfrak{1}$ and $(\mathfrak{s}_2-\mathfrak{1})\mathfrak{u}_0+\mathfrak{1}$ are partial isometries lifts of $(\mathfrak{s}^\U_1,\mathfrak{1})$ and $(\mathfrak{1},\mathfrak{s}^\U_2)$, respectively. Indeed, a calculation provides
\begin{equation*}
\begin{split}
     \pi \big( (\mathfrak{s}_1-\mathfrak{1})\mathfrak{r}_0+\mathfrak{1}\big)\;&=\;\big(U {{\rm ev}_\U} \big( (\mathfrak{s}_1-\mathfrak{1})\mathfrak{r}_0+\mathfrak{1}\big) U^{-1},U {{\rm ev}_\U} \big( (\mathfrak{s}_1-\mathfrak{1})\mathfrak{r}_0+\mathfrak{1}\big) U^{-1}\big)\\
     &=\;(\mathfrak{s}^\U_1,\mathfrak{1})-(\mathfrak{0,1})\;=\;(\mathfrak{s}_1^\U,\mathfrak{1})  
\end{split}
\end{equation*}

and
\begin{equation*}
    \begin{split}
      \big( \mathfrak{r}_0(\mathfrak{s}^*_1-\mathfrak{1})+\mathfrak{1}\big)\big(  (\mathfrak{s}_1-\mathfrak{1})\mathfrak{r}_0+\mathfrak{1}\big)&\;=\;\mathfrak{r}_0(\mathfrak{s}^*_1-\mathfrak{1})(\mathfrak{s}_1-\mathfrak{1})\mathfrak{r}_0+\mathfrak{r}_0(\mathfrak{s}^*_1-\mathfrak{1})+(\mathfrak{s}_1-\mathfrak{1})\mathfrak{r}_0+\mathfrak{1}\\
     &\;=\;-\mathfrak{r}_0\mathfrak{s}_1^*\mathfrak{r}_0-\mathfrak{r}_0\mathfrak{s}_1\mathfrak{r}_0+\mathfrak{s}_1\mathfrak{r}_0+\mathfrak{r_0}\mathfrak{s}_1^*+\mathfrak{1}\\
     &\;=\;\mathfrak{1}.
    \end{split}
\end{equation*}
The latter equality is consequence of that $\mathfrak{r}_0\mathfrak{s}_1\mathfrak{r}_0=\mathfrak{s}_1\mathfrak{r}_0$. It follows that
\begin{equation*}
    \begin{split}
    {\mathtt{ind}}\big([(\mathfrak{s}^\U_1,\mathfrak{1})]\big)\;&=\;[\mathfrak{1}-( \mathfrak{r}_0(\mathfrak{s}^*_1-\mathfrak{1})+\mathfrak{1})(  (\mathfrak{s}_1-\mathfrak{1})\mathfrak{r}_0+\mathfrak{1}) ]-[\mathfrak{1}-(  (\mathfrak{s}_1-\mathfrak{1})\mathfrak{r}_0+\mathfrak{1})( \mathfrak{r}_0(\mathfrak{s}^*_1-\mathfrak{1})+\mathfrak{1})]\\
    &=\;[\mathfrak{0}]-[\mathfrak{z}_{(0,0)}]\;=\;-[\mathfrak{z}_{(0,0)}]\;.
    \end{split}
\end{equation*}
The same argument also holds for $(\mathfrak{1,s}^\U_2)$ and in turn
${\mathtt{ind}}\big([(\mathfrak{1},\mathfrak{s}^\U_2)]\big)=-[\mathfrak{z}_{(0,0)}].$ Hence the kernel of ${\mathtt{ind}}$ is equal to $\Z[(\mathfrak{s}^\U_1,{\mathfrak{s}_2^{\U}}^*)]$. Let $\mathfrak{w}$ be the unitary operator in $\mathfrak{I}^+$ given by $\mathfrak{w}:=(\mathfrak{s}_{1}-\mathfrak{1})\mathfrak{r}_{0}+(\mathfrak{s}_2^*-\mathfrak{1})\mathfrak{u}_{0}(\mathfrak{1}-\mathfrak{r}_0)+ \mathfrak{1}$\footnote{A calculation provides that $\mathfrak{w}$ is a unitary element in $\mathfrak{I}^+$, which in fact acts on $\ell^2(\Z^2)$ translating anti-clockwise along the interface}. Since $$\pi(\mathfrak{w})\;=\;\pi(\,(\mathfrak{s}_{1}-\mathfrak{1})\mathfrak{r}_{0}+(\mathfrak{s}_2^*-\mathfrak{1})\mathfrak{u}_{0}(\mathfrak{1}-\mathfrak{r}_0)+ \mathfrak{1}\,)\;=\;(\mathfrak{s}^\U_1,{\mathfrak{s}_2^{\U}}^*)\;,$$
where $\pi$ is given in \ref{10}, then by \ref{ind} one concludes 
$$K_1(\mathfrak{I})\;=\;\Z\big[\,(\mathfrak{s}_{1}-\mathfrak{1})\mathfrak{r}_{0}+(\mathfrak{s}_2^*-\mathfrak{1})\mathfrak{u}_{0}(\mathfrak{1}-\mathfrak{r}_0)+ \mathfrak{1}\big]\;=\;\Z[\mathfrak{w}]\;.$$
\end{proof}

\section{Bulk-interface correspondence}
In this section, we study the quantization of the interface currents and derive bulk-interface correspondence by using tools from \cite{Deni1, Guo} and the K-groups associated with $\A$, $\mathfrak{I}$ and $\A_{{\rm bulk}}.$
\subsection{Bulk topological invariants} For any $\theta=(\theta_1,\theta_2)\in [0,2\pi)^2$ let us consider the unitary operator $\mathfrak{v}_\theta$ which acts on  $\psi\in \ell^2(\Z^2)$  as
\begin{equation}
    (\mathfrak{v}_\theta\psi)(n_1,n_2)\;:=\;\expo{-\ii(\theta_1n_1+\theta_2n_2)}\psi(n_1,n_2)\;.
    \end{equation}
These unitary operators define a continuous group action $[0,2\pi)^2\ni\theta\mapsto \alpha_\theta$ on the bulk algebra given by
\begin{equation}
    \alpha_\theta(\mathfrak{a})\;:=\;\mathfrak{v}_\theta \mathfrak{a}\mathfrak{v}_\theta^*\;,\qquad\mathfrak{a}\in \A_{\rm bulk}\;.
\end{equation}
Thereby, we can introduce the infinitesimal generators $\nabla_1$ and $\nabla_2$ on $\A_{\rm bulk}$ defined as
$$\nabla_1(\mathfrak{a})\;:=\;\lim_{\theta_1\rightarrow0}\frac{\alpha_{(\theta_1,0)}(\mathfrak{a})-\mathfrak{a}}{\theta_1}\;,\qquad \nabla_2(\mathfrak{a})\;:=\;\lim_{\theta_2\rightarrow0}\frac{\alpha_{(0,\theta_2)}(\mathfrak{a})-\mathfrak{a}}{\theta_2}$$
for some suitable elements $\mathfrak{a}\in\A_{\rm bulk}$.
\begin{remark}
 In general $\nabla_1$ and $\nabla_2$ can be introduced for any magnetic algebra \cite[Section 3.5]{Deni1}. Thus without loss of generality, we will use the same notation for these derivations in $\A_{b_\llcorner}$ and $\A_{b_\ast}$.
\end{remark}
Let $\A_{\rm bulk}^0\subset \A_{\rm bulk}$ be the dense subalgebra of non-commutative polynomials in the variables $(\mathfrak{s}_{b_\llcorner,1},\mathfrak{0})$, $(\mathfrak{s}_{b_\llcorner,2}, \mathfrak{0})$, $(\mathfrak{0},\mathfrak{s}_{b_\ast,1})$ and $(\mathfrak{0},\mathfrak{s}_{b_\ast,2})$. Let us introduce the spaces
$$\mathcal{C}^k(\A_{\rm bulk})\;:=\;\overline{\A_{\rm bulk}^0}^{\|\cdot\|_k},$$
where the norm $\|\cdot\|_k$ is given by
$$\|\mathfrak{a}\|_k\;:=\;\sum_{i=0}^k\sum_{a+b=i}\|\nabla_1^a\nabla_2^b \mathfrak{a}\|\;.$$
Since for a constant magnetic field $B$ of strength $b$, there is a unique ergodic measure with the associated trace per unit volume \cite[Proposition 2.28]{Deni1}, then we shall denote as $\mathcal{T}_\llcorner$ and $\mathcal{T}_\ast$ for these unique traces in $\A_{b_\llcorner}$ and $\A_{b_\ast}$, respectively. Given a differentiable projection $\mathfrak{p}=(\mathfrak{p}_\llcorner,\mathfrak{p}_\ast)\in \mathcal{C}^1(\A_{\rm bulk})$, the \emph{transverse Hall conductance} associated with $\mathfrak{p}$ is defined by
\begin{equation}
     \sigma_{{\rm bulk}}(\mathfrak{p})\;:=\;\big(\sigma_{b_{\llcorner}}(\mathfrak{p}),\sigma_{b_\ast}(\mathfrak{p})\big)\;.
\end{equation}
where
\begin{equation*}
\begin{split}
     \sigma_{b_{\llcorner}}(\mathfrak{p})\;:&=\;2\pi \ii \frac{e^2}{h}\mathcal{T}_\llcorner(\mathfrak{p}_\llcorner[\nabla_1\mathfrak{p}_\llcorner,\nabla_2\mathfrak{p}_\llcorner])\\
      \sigma_{b_{\ast}}(\mathfrak{p})\;:&=\;2\pi \ii \frac{e^2}{h}\mathcal{T}_\ast(\mathfrak{p}_\ast[\nabla_1\mathfrak{p}_\ast,\nabla_2\mathfrak{p}_\ast])\;,\\
\end{split}
\end{equation*}
$e>0$ is the magnitude of the electron charge and $h$ is Planck's constant.
\medskip

\noindent
\textbf{Bulk gap condition (BGC).} Let $\mathfrak{\hat{h}}$ be the \emph{full magnetic Hamiltonian}, i.e., a selfadjoint element in $\A$. Let $\mathfrak{h}\equiv(\mathfrak{h}_\llcorner,\mathfrak{h}_\ast):={\rm ev}(\mathfrak{\hat{h}})\in \A_{{\rm bulk}}$ be the \emph{bulk magnetic Hamiltonian} and assume that there is a compact set $\Delta$ such that
$$\min \sigma(\mathfrak{h})\;< \;\min \Delta\;<\;\max \Delta\;<\;\max \sigma(\mathfrak{h})\;.$$

As a consequence of {\bf BGC}, for any $\mu\in\Delta $ the \emph{Fermi projection}
$$\mathfrak{p}_\mu\;:=\;\big(\mathfrak{p}_{\mu_\llcorner},\mathfrak{p}_{\mu_\ast}\big)\;=\;\big(\,\chi_{(-\infty,\mu]}(\mathfrak{h}_\llcorner)\,,\,\chi_{(-\infty,\mu]}(\mathfrak{h}_\ast)\,\big)$$
is an element of bulk algebra. Furthermore, it follows that $\sigma_{b_{\llcorner}}(\mathfrak{p}_\mu)$ and  $ \sigma_{b_{\ast}}(\mathfrak{p}_\mu)$ are quantized. For instance, one has the equality $\sigma_{b_{\llcorner}}(\mathfrak{p}_\mu)=\frac{e^2}{h}{\rm Ch}(\mathfrak{p}_{\mu_\llcorner})$, where ${\rm Ch}(\mathfrak{p}_{\mu_\llcorner})$ is the \emph{Chern number} of the projector $\mathfrak{p}_\llcorner$ and it takes integer values \cite{Con,PRO}. We will call the integers $\sigma_{b_{\llcorner}}(\mathfrak{p}_\mu)$ and $ \sigma_{b_{\ast}}(\mathfrak{p}_\mu)$ as the \emph{bulk magnetic invariants} of the system
\medskip

In view of {\bf BGC}, there is a nondecreasing smooth function $g\colon\R\rightarrow [0,1]$ such that $g=0$ below $\Delta$ and $g=1$ above $\Delta$. Then
\begin{equation*}
    \begin{split}
       {\rm ev}(\mathfrak{1}-g(\mathfrak{\hat{h}}))\;&=\;\mathfrak{1}-g({\rm ev}(\mathfrak{\hat{h}}))\;=\;\mathfrak{1}-(\mathfrak{1}-\mathfrak{p}_\mu)\;=\;\mathfrak{p}_\mu\;.
    \end{split}
\end{equation*}
Observe that the Fermi projector $\mathfrak{p}_\mu$ defines a class $[\mathfrak{p}_\mu]$ in $K_0(\A_{\rm bulk})$ and the unitary operator $\mathfrak{u}_\Delta:=\expo{ 2\pi \ii g(\mathfrak{\hat{h}})}$ defines a class in $K_1(\mathfrak{I})$. From the fact that  $\mathfrak{1}-g(\mathfrak{\hat{h}})$ is a self-adjoint lift of $\mathfrak{p}_\mu$  one gets the following Proposition.
\begin{proposition}\label{prop: 4.1}
Assume {\bf BGC} and let $\mu\in \Delta$. There exist a smooth function $g\colon\R\rightarrow [0,1]$ such that the unitary operator $\mathfrak{u}_\Delta=\expo{ 2\pi \ii g(\mathfrak{\hat{h}})}\in \mathfrak{I}^+$ fulfills
$${\mathtt{exp}}([\mathfrak{p}_\mu])\;=\;-[\mathfrak{u}_\Delta]\;.$$
\end{proposition}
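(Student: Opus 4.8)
\emph{Proof proposal.} The statement amounts to computing the exponential (connecting) map $\mathtt{exp}\colon K_0(\A_{{\rm bulk}})\to K_1(\mathfrak{I})$ of the quarter--plane exact sequence \ref{seq a} on the class of the Fermi projection, carried out through an explicit self-adjoint lift. The plan is to recall how $\mathtt{exp}$ is computed: for a projection $\mathfrak{p}\in\A_{{\rm bulk}}$ one picks any self-adjoint $\mathfrak{a}\in\A$ with ${\rm ev}(\mathfrak{a})=\mathfrak{p}$, notices that ${\rm ev}\bigl(\expo{2\pi\ii\mathfrak{a}}\bigr)=\expo{2\pi\ii\mathfrak{p}}=\mathfrak{1}$, so that $\expo{2\pi\ii\mathfrak{a}}$ is a unitary of $\mathfrak{I}^+$, and then $\mathtt{exp}([\mathfrak{p}])$ is the class of $\expo{2\pi\ii\mathfrak{a}}$, up to the sign fixed by the orientation of the six--term sequence. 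So the whole task reduces to exhibiting such a lift for $\mathfrak{p}_\mu$ whose exponential is $\mathfrak{u}_\Delta^{\pm1}$.

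First I would fix $g$. Since {\bf BGC} forces $\Delta$ to lie in a spectral gap of $\mathfrak{h}={\rm ev}(\hat{\mathfrak h})$ (this is exactly what makes $\mathfrak{p}_\mu$ an element of $\A_{{\rm bulk}}$), one may pick a non-decreasing $g\in C^\infty(\R,[0,1])$ with $g\equiv 0$ on $(-\infty,\min\Delta]$ and $g\equiv 1$ on $[\max\Delta,+\infty)$; smoothness is irrelevant here but will be needed later. Continuous functional calculus in the unital $C^*$-algebra $\A$ yields the self-adjoint element $g(\hat{\mathfrak h})\in\A$, and by homomorphy ${\rm ev}\bigl(g(\hat{\mathfrak h})\bigr)=g(\mathfrak{h})$. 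As $\sigma(\mathfrak{h})$ avoids $\Delta$ while $g$ is $0$ below $\Delta$ and $1$ above it, $g(\mathfrak{h})$ is the spectral projection of $\mathfrak{h}$ onto the part of the spectrum lying above $\Delta$, i.e. $g(\mathfrak{h})=\mathfrak{1}-\mathfrak{p}_\mu$ for every $\mu\in\Delta$ — this is precisely the identity ${\rm ev}(\mathfrak{1}-g(\hat{\mathfrak h}))=\mathfrak{p}_\mu$ displayed right before the statement. Hence $\mathfrak{a}:=\mathfrak{1}-g(\hat{\mathfrak h})$ is a self-adjoint lift of $\mathfrak{p}_\mu$, and $\expo{2\pi\ii\mathfrak{a}}=\expo{2\pi\ii\mathfrak{1}}\,\expo{-2\pi\ii g(\hat{\mathfrak h})}=\expo{-2\pi\ii g(\hat{\mathfrak h})}=\mathfrak{u}_\Delta^{-1}\in\mathfrak{I}^+$. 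Feeding this into the formula for $\mathtt{exp}$, and recalling $[\mathfrak{u}_\Delta^{-1}]=-[\mathfrak{u}_\Delta]$ in $K_1(\mathfrak{I})$, delivers $\mathtt{exp}([\mathfrak{p}_\mu])=-[\mathfrak{u}_\Delta]$.

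The argument is essentially bookkeeping; the only point to be careful with is the sign with which the class of $\expo{2\pi\ii\mathfrak{a}}$ enters $\mathtt{exp}([\mathfrak{p}])$, which must be fixed consistently with the orientation of the six--term sequence \ref{seq six}. One convenient cross-check is that $\mathtt{exp}([\mathfrak{1}])=0$ (the unit lifts to the unit, with trivial exponential), so that $\mathtt{exp}([\mathfrak{1}-\mathfrak{p}_\mu])=-\mathtt{exp}([\mathfrak{p}_\mu])$, while $g(\hat{\mathfrak h})$ is a self-adjoint lift of $\mathfrak{1}-\mathfrak{p}_\mu$ with $\expo{2\pi\ii g(\hat{\mathfrak h})}=\mathfrak{u}_\Delta$; the two computations must agree. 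Two further remarks: no matrix amplification is required, since $\mathfrak{p}_\mu$ is an honest projection of $\A_{{\rm bulk}}$; and because $\mathfrak{I}$ is a proper ideal of $\A$ it is non-unital, so $\mathfrak{u}_\Delta$ genuinely lives in the unitization $\mathfrak{I}^+$ and defines a class in the group $K_1(\mathfrak{I})$ appearing in \ref{seq six}.
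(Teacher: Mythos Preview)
Your proposal is correct and follows essentially the same approach as the paper: the paper's argument is sketched in the paragraph immediately preceding the statement, where it observes that $\mathfrak{1}-g(\hat{\mathfrak h})$ is a self-adjoint lift of $\mathfrak{p}_\mu$ and then invokes the standard description of the exponential map via $\expo{2\pi\ii(\cdot)}$ of a self-adjoint lift. Your write-up simply spells out the same computation in more detail, including the sign bookkeeping.
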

\subsection{Interface currents}
 Recall that the natural trace on $C(\mathbb{T})$ is given by $$\tau_0(f)\;:=\;\int_{\mathbb{T}}f(k)\,{\rm d}k\;,\qquad f\in C(\mathbb{T})\;.$$ Here ${\rm d}k$ is the normalized Haar measure on the one-dimensional torus $\mathbb{T}$. Since $\mathfrak{I}=\A(\mathfrak{r}_0)+\A(\mathfrak{u}_0)$, we obtain a trace $\mathcal{T}_1$ on $\mathfrak{I}$ pulling back the induced trace $\tau_0\otimes {\rm Tr}_{\ell^2(\Z)}$ through the isomorphism 
$$\A(\mathfrak{r}_0)/\A(\mathfrak{u}_0)\cap\A(\mathfrak{r}_0)\;\simeq\; C(\mathbb{T})\otimes\mathcal{K}\;.$$
 The same argument also provides the trace $\mathcal{T}_2$ on $\mathfrak{I}$ by using the isomorphism 
 $$\A(\mathfrak{u}_0)/\A(\mathfrak{u}_0)\cap\A(\mathfrak{r}_0)\;\simeq\; C(\mathbb{T})\otimes\mathcal{K}\;.$$
We will write $\mathcal{D}_{\mathfrak{I},i}\subset \mathfrak{I}$ for the set of trace-class elements of $\mathfrak{I}$ with respect to $\mathcal{T}_i.$ One can endow $\mathfrak{I}$ with the unbounded derivations $\nabla_\mathfrak{I,i}$ given by
$$\nabla_\mathfrak{I,i}\;:=\;[\,\mathfrak{n}_i,\;\cdot\;]\;,\qquad i=1,2$$
where $$\mathfrak{n}_i\psi(n_1,n_2)\;:=\;n_i\psi(n_1,n_2)\;,\qquad\psi\in \ell^2(\Z^2)\;.$$
We can extend such a derivations on the unitization of $\mathfrak{I}$ with the prescription $\nabla_{\mathfrak{I},i}(\mathfrak{1})=0.$
For $k\in \mathbb{N}$ let us introduce the spaces
$$\mathcal{C}^k_{\mathfrak{I},i}\;:=\;\{\,\mathfrak{a}\in\mathfrak{I}\;:\;\nabla_{\mathfrak{I},i}^k(\mathfrak{a})\in \mathcal{D}_{\mathfrak{I},i}\,\}\;,\hspace{1cm}i=1,2 \;.$$
 For any unitary operator $\mathfrak{u}\in \mathfrak{I}^+$ such that $\mathfrak{u}-\mathfrak{1}\in \mathcal{C}_{\mathfrak{I},i}^1$, we define the \emph{winding numbers} of $\mathfrak{u}$ as
\begin{equation}
    W_{\mathfrak{I},i}(\mathfrak{u})\;:=\;\ii\mathcal{T}_i(\mathfrak{u}^*\nabla_{\mathfrak{I},i}(\mathfrak{u}))\;,\qquad i=1,2\;.
\end{equation}

Note that the derivation $\nabla_{\mathfrak{I},i}$ is, in principle, only defined on suitable elements of the interface algebra. In order to make sense to $\nabla_{\mathfrak{I},i}(\hat{\mathfrak{h}})$ for the Hamiltonian $\mathfrak{\hat{h}}$, we introduce the following assumption.

\medskip
\noindent
{\bf Existence of the current operator (ECO).} Assume that the derivations $\nabla_{\mathfrak{I},i}$ can be extended to class of sufficiently regular elements of $\A$ for $i=1,2$. Moreover, we assume that $\nabla_{\mathfrak{I},i}(\hat{\mathfrak{h}})$ exists as element of $\A$  for the full magnetic Hamiltonian $\hat{\mathfrak{h}}.$

There are different ways to perform such an extension. For instance, by adapting the argument used in  \cite{Com, Dom},  one can define $\nabla_{\mathfrak{I},i}=[\chi_i\mathfrak{1}, \,\cdot\,]$ where $\chi_i$ are the \emph{switch functions} given by $\chi_i(n)=n_i$ when $n=(n_1,n_2)$ lies in the interface\footnote{\ie the set $\{(n_1,0)\,:\,n_1\geq 0\}\cup\{(0,n_2)\,:\,n_2\geq 0\} $} and $\chi_i(n)=0$ otherwise. Other example, by combining the derivations $\nabla_1$ and $\nabla_2$ of $\A$, is described in \cite{Guo}.

\begin{remark}
It is important to point out that $\hslash^{-1}\nabla_{\mathfrak{I},i}(\hat{\mathfrak{h}})$ can be physically interpreted as the \emph{ velocity} operator along the faces of the interface. Here $\hslash$ is the reduced Planck's constant.
\end{remark}
Let us suppose {\bf ECO}, then
\begin{equation}
    J_{\mathfrak{I},i}(\Delta)\;:=\;\frac{e}{\hslash}\mathcal{T}_i\big( g'(\mathfrak{\hat{h}})\nabla_{\mathfrak{I},i}(\hat{\mathfrak{h}})\big)\;,\qquad i=1,2
\end{equation}
represents the two current densities along $\mathfrak{I}$. Therefore, by Kubo's Formula \cite{Bel,Deni2}, the terms $\sigma_{\mathfrak{I},i}:=eJ_{\mathfrak{I},i}$ provide  the  \emph{interface conductances}, where recall that $e>0$ is the magnitude of electron charge. Furthermore, if we assume the assumptions ${\bf CGB }$ and ${\bf ECO}$ one obtains
$$\mathcal{T}_i\big(g'(\mathfrak{\hat{h}})\nabla_{\mathfrak{I},i}(\mathfrak{\hat{h}})\big)\;=\;-\frac{1}{2\pi}W_{\mathfrak{I},i}(\mathfrak{u}_\Delta)\;,\qquad i=1,2\;.$$
The latter equality can be obtained by adapting the proof of \cite[Proposition 7.1.2]{PRO}. As a result, the interface conductances associated with the interface states in $\Delta$ are given by
$$\sigma_{\mathfrak{I},i}(\Delta)\;=\;\frac{e^2}{h}W_{\mathfrak{I},i}(\mathfrak{u}_\Delta)\;\qquad i=1,2\;.$$
Notice that in agreement with the above, one can define for general unitary operators $\mathfrak{u}\in \mathfrak{I}^+$ such that $\mathfrak{u}-\mathfrak{1}\in \mathcal{C}_{\mathfrak{I},i}^1$ the interface conductance by
\begin{equation}
    \sigma_{\mathfrak{I},i}(\mathfrak{u})\;:=\;\frac{e^2}{h}  W_{\mathfrak{I},i}(\mathfrak{u})\;.
\end{equation}
\begin{remark}
 The term $\sigma_{\mathfrak{I},1}(\mathfrak{u})$ is the proportionality coefficient of the current flowing along the interface in the $x$-axis when the system is in the configuration $\mathfrak{u}$ \cite{Deni1, PRO}. For $i=2$ it provides the other current in the $y$-axis of the interface.
\end{remark}
\subsection{Proof of the Bulk-interface correspondence} We are in the position to state our main result of this section. Let us begin with some previous technical results.
\begin{lemma}\label{lem: 4.3}
Let $\mathfrak{p}=(\mathfrak{p}_\llcorner,\mathfrak{p}_\ast)$ be a projection in $\A_{{\rm bulk}}.$ Then
$${\mathtt{exp}}([\mathfrak{p}])\;=\;\big({\rm Ch}(\mathfrak{p}_\llcorner)-{\rm Ch}(\mathfrak{p}_\ast)\big)[\mathfrak{w}]\;,$$
where ${\rm Ch}(\mathfrak{p}_\llcorner)$ and ${\rm Ch}(\mathfrak{p}_\ast)$ are the Chern numbers of $\mathfrak{p}_\llcorner$ and $\mathfrak{p}_\ast$, respectively.
\end{lemma}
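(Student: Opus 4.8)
The plan is to compute the exponential map $\mathtt{exp}\colon K_0(\A_{\rm bulk})\to K_1(\mathfrak{I})$ explicitly on each generator of $K_0(\A_{\rm bulk})$, and to match the answer with the Chern-number formula. Recall from the K-theory computation that $K_0(\A_{\rm bulk})$ is generated by the four classes $[(\mathfrak{1},\mathfrak{0})]$, $[(\mathfrak{0},\mathfrak{1})]$, $[(\mathfrak{p}_{\theta_\llcorner},\mathfrak{0})]$ and $[(\mathfrak{0},\mathfrak{p}_{\theta_\ast})]$, and that $K_1(\mathfrak{I})=\Z[\mathfrak{w}]$. The Chern number is a homomorphism $\mathrm{Ch}\colon K_0(\A_{b})\to\Z$ that vanishes on $[\mathfrak{1}]$ and sends the Powers--Rieffel projection $\mathfrak{p}_\theta$ to $\pm 1$ (or, more precisely, to the integer predicted by the noncommutative-torus computation). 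So by linearity it suffices to check the identity on these four generators.

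First I would treat the two trivial projections $[(\mathfrak{1},\mathfrak{0})]$ and $[(\mathfrak{0},\mathfrak{1})]$. For these, a self-adjoint lift in $\A$ can be taken to be $\mathfrak{q}_0$ and $\mathfrak{q}_0^\bot$ respectively, which are themselves projections in $\A$; since a projection lift gives $\mathtt{exp}$ of the corresponding class equal to $[\mathfrak{1}]$ in $K_1(\mathfrak{I})$, and $\mathrm{Ch}(\mathfrak{1})-\mathrm{Ch}(\mathfrak{0})=0$, the identity holds (both sides are the trivial class). Next, for the interesting generators, I would use the six-term exact sequence for \ref{seq a} together with the factorization of $\A_{\rm bulk}$ as a direct sum: the class $[(\mathfrak{p}_{\theta_\llcorner},\mathfrak{0})]$ is the image, under ${\rm ev}_*$ applied in reverse (i.e. via a lift), of the bulk Powers--Rieffel projection in the $\llcorner$-summand. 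The key computational input is that the composite of $\mathtt{exp}$ with the isomorphism $K_1(\mathfrak{I})\cong\Z$ induced by $\mathfrak{w}$ agrees, on each summand, with the edge map already analyzed for the Iwatsuka (half-plane) system in \cite{Deni1}: restricting attention to the invariant subset $\U_\infty\cup\{\infty_\llcorner\}\cup\{\infty_\ast\}$ collapses the quarter-plane sequence onto the Iwatsuka Toeplitz extension, under which $\mathfrak{w}$ maps to the Iwatsuka interface unitary. Through that identification the known bulk-edge correspondence for the Iwatsuka system gives $\mathtt{exp}([(\mathfrak{p}_{\theta_\llcorner},\mathfrak{0})])=\mathrm{Ch}(\mathfrak{p}_{\theta_\llcorner})[\mathfrak{w}]$; the analogous argument with the other invariant subset $\rz_\infty\cup\{\infty_\llcorner\}\cup\{\infty_\ast\}$ handles $[(\mathfrak{0},\mathfrak{p}_{\theta_\ast})]$ and produces the opposite sign, $\mathtt{exp}([(\mathfrak{0},\mathfrak{p}_{\theta_\ast})])=-\mathrm{Ch}(\mathfrak{p}_{\theta_\ast})[\mathfrak{w}]$, the sign reversal coming from the fact that in $\mathfrak{w}$ the second summand enters through $\mathfrak{s}_2^*$ rather than $\mathfrak{s}_2$ (equivalently, $\ast$-material sits on the complementary side of the interface).

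Assembling: an arbitrary projection $\mathfrak{p}=(\mathfrak{p}_\llcorner,\mathfrak{p}_\ast)$ has $[\mathfrak{p}]=[(\mathfrak{p}_\llcorner,\mathfrak{0})]+[(\mathfrak{0},\mathfrak{p}_\ast)]$ in $K_0(\A_{\rm bulk})$, and up to the trivial classes (which contribute $0$) each term reduces to a multiple of a Powers--Rieffel generator with multiplicity $\mathrm{Ch}(\mathfrak{p}_\llcorner)$, resp. $\mathrm{Ch}(\mathfrak{p}_\ast)$, whence $\mathtt{exp}([\mathfrak{p}])=\big(\mathrm{Ch}(\mathfrak{p}_\llcorner)-\mathrm{Ch}(\mathfrak{p}_\ast)\big)[\mathfrak{w}]$ by additivity of $\mathtt{exp}$ and of $\mathrm{Ch}$. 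The main obstacle I anticipate is not the bookkeeping but pinning down the sign and the normalization in the two restriction arguments — i.e. verifying carefully that the generator $\mathfrak{w}$ of $K_1(\mathfrak{I})$ maps to exactly the Iwatsuka interface unitary (not its inverse, and with the compact-operator tensor factor correctly tracked) under each of the two evaluation maps ${\rm ev}_\U$ and ${\rm ev}_\rz$, so that the two partial computations combine with the correct relative sign rather than adding. This is exactly the point where the explicit form $\mathfrak{w}=(\mathfrak{s}_1-\mathfrak{1})\mathfrak{r}_0+(\mathfrak{s}_2^*-\mathfrak{1})\mathfrak{u}_0(\mathfrak{1}-\mathfrak{r}_0)+\mathfrak{1}$ and the computations $\pi(\mathfrak{w})=(\mathfrak{s}_1^\U,{\mathfrak{s}_2^\U}^{*})$ from the previous theorem are used.
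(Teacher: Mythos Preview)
Your proposal is correct, and the overall strategy---compute $\mathtt{exp}$ on the four generators of $K_0(\A_{\rm bulk})$ and assemble---matches the paper's. The handling of the two trivial projections via the lifts $\mathfrak{q}_0$, $\mathfrak{q}_0^\bot$ is identical.

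Where you and the paper diverge is in treating the Powers--Rieffel generators. You propose to compute $\mathtt{exp}$ on each of $[(\mathfrak{p}_{\theta_\llcorner},\mathfrak{0})]$ and $[(\mathfrak{0},\mathfrak{p}_{\theta_\ast})]$ separately, by pushing the quarter-plane exact sequence forward to the two Iwatsuka extensions via ${\rm ev}_\U$ and ${\rm ev}_\rz$ and invoking naturality of the boundary map together with the known Iwatsuka bulk-edge correspondence from \cite{Deni1}. This works: the relevant commutative square is exactly Remark~\ref{comen}, and since $\pi_*$ is injective on $K_1(\mathfrak{I})$ (from the six-term sequence~(\ref{seq 3.12})), knowing the image under $\pi$ determines the class. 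The paper instead takes a shortcut that avoids computing the two generators independently: it observes that the \emph{diagonal} element $(\mathfrak{p}_{\theta_\llcorner},\mathfrak{p}_{\theta_\ast})$ lifts to an honest projection $\mathfrak{l}=\mathfrak{s}_1^*f(\mathfrak{s}_2)+g(\mathfrak{s}_2)+f(\mathfrak{s}_2)\mathfrak{s}_1$ in $\A$ (same functional form as each Powers--Rieffel projection), so $\mathtt{exp}$ of the sum is trivial and hence $\mathtt{exp}([(\mathfrak{p}_{\theta_\llcorner},\mathfrak{0})])=-\mathtt{exp}([(\mathfrak{0},\mathfrak{p}_{\theta_\ast})])$. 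This immediately gives the relative sign; surjectivity of $\mathtt{exp}$ (from exactness, since ${\rm ev}_*$ is injective on $K_1$) then forces the common value to be $\pm[\mathfrak{w}]$, and only at the very end does the paper use a single restriction argument---essentially your step, via $\pi_{1*}$ and the Pimsner--Voiculescu map $\delta$---to pin down the sign.

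What each buys: the paper's diagonal-lift trick is a clean, self-contained device that gets the relative sign for free and only needs the Iwatsuka input once, minimizing the sign-tracking you rightly flag as the delicate point. Your route is more modular and conceptually transparent (everything reduces to the half-plane case by naturality), but it leans more heavily on the Iwatsuka result and requires checking that $[\mathfrak{w}]$ maps to a generator under \emph{both} restrictions with the correct orientations---precisely the obstacle you identify, and precisely where $\pi(\mathfrak{w})=(\mathfrak{s}_1^\U,{\mathfrak{s}_2^\U}^*)$ is needed.
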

\begin{proof}
In light of the proposition \ref{prop: proj}, the elements $(\mathfrak{1,0})$ and $(\mathfrak{0,1})$ in $\A_{\rm bulk}$ lift in the projections $\mathfrak{q}_0$ and $\mathfrak{q}^\bot_0$, respectively. Therefore, using  \cite[Proposition 12.2.2]{Ro} one finds 
\begin{equation}\label{expo}
{\mathtt{exp}}([(\mathfrak{1,0})])\;=\;{\mathtt{exp}}([(\mathfrak{0,1})])\;=\;[\mathfrak{1}]\;.
\end{equation}
On the other hand, the class $[\mathfrak{p}]$ can be written as
\begin{equation}\label{expresion pro}
    [\mathfrak{p}]\;=\;M_\llcorner[(\mathfrak{1,0})]+M_\ast[(\mathfrak{0,1})]+{\rm Ch}(\mathfrak{p}_\llcorner)[(\mathfrak{p}_{\theta_\llcorner},\mathfrak{0})]+{\rm Ch}(\mathfrak{p}_\ast)[(\mathfrak{0},\mathfrak{p}_{\theta_\ast})]\;,
\end{equation}
where $\mathfrak{p}_{\theta_\llcorner}$ and $\mathfrak{p}_{\theta_\ast}$ are the Powers-Rieffel projections of $\A_{b_\llcorner}$ and $\A_{b_\ast}$, respectively. We know that 
\begin{equation*}
    \begin{split}
        \mathfrak{p}_{\theta_\llcorner}\;&=\;\mathfrak{s}^*_{b_\llcorner,1}f(\mathfrak{s}_{b_\llcorner,2})+g(\mathfrak{s}_{b_\llcorner,2})+f(\mathfrak{s}_{b_\llcorner,2})\mathfrak{s}_{b_\llcorner,1}\;,\\
        \mathfrak{p}_{\theta_\ast}\;&=\;\mathfrak{s}^*_{b_\ast,1}f(\mathfrak{s}_{b_\ast,2})+g(\mathfrak{s}_{b_\ast,2})+f(\mathfrak{s}_{b_\ast,2})\mathfrak{s}_{b_\ast,1}\;,
    \end{split}
\end{equation*}
for some suitable continuous real functions $f$ and $g$ on $\mathbb{T}$ \cite[Proposition 12,4]{gracia-varilly-figueroa-01}. Consider now the self-adjoint lift of $(\mathfrak{p}_{\theta_\llcorner},\mathfrak{p}_{\theta_\ast})$ given by
\begin{equation*}
    \begin{split}
        \mathfrak{l}\;&=\;\mathfrak{s}^*_{1}f(\mathfrak{s}_{2})+g(\mathfrak{s}_{2})+f(\mathfrak{s}_{2})\mathfrak{s}_{1}\,.
    \end{split}
\end{equation*}
From the construction of $f$ and $g$ it follows that $l$ is a projection in $\A$ and for this reason we have
\begin{equation*}
\begin{split}
    {\mathtt{exp}}\big([(\mathfrak{p}_{\theta_\llcorner},\mathfrak{0})]+[(\mathfrak{0},\mathfrak{p}_{\theta_\ast})]\big)\;&=\;   {\mathtt{exp}}\big([(\mathfrak{p}_{\theta_\llcorner},\mathfrak{p}_{\theta_\ast})]\big)\;=\;[\expo{2\pi \ii \mathfrak{l}}]\;=\;[\mathfrak{1}]
\end{split}
\end{equation*}
Thus, it follows that $ {\mathtt{exp}}\big([(\mathfrak{p}_{\theta_\llcorner},\mathfrak{0})]\big)=-{\mathtt{exp}}\big([(\mathfrak{0},\mathfrak{p}_{\theta_\ast})]\big)$. The latter with \ref{expo} and \ref{expresion pro} imply that
$${\mathtt{exp}}([\mathfrak{p}])\;=\; \big({\rm Ch}(\mathfrak{p}_\llcorner)-{\rm Ch}(\mathfrak{p}_\ast)\big)q[\mathfrak{w}]\;,$$
for some $q\in \Z.$  In order to show that $q=\pm 1$, notice that ${\rm ev}_*\colon K_1(\A)\rightarrow K_1(\A_{\rm bulk})$ in the exact sequence \ref{seq six} is injective and thanks to \ref{seq six} again, ${\mathtt{exp}}$ is a surjective map and so it holds that $q=\pm 1$.  Finally, since $K_0(\A_{\rm bulk})=K_0(\A_{b_\llcorner})\oplus K_0(\A_{b_\ast})$ then it is easy to be convinced that 
$$\pi_{1*}\big({\mathtt{exp}}([(\mathfrak{p}_{\theta_\llcorner},\mathfrak{0})])\big)\;=\;\delta([\mathfrak{p}_{\theta_\llcorner}])\;=\;[(\mathfrak{s}_1^\U,\mathfrak{1})]\;,$$
where $\pi_{1*}\colon K_1(\mathfrak{I})\rightarrow K_1(C(\mathbb{T}))$ is the projection on the first component of the image of $\pi$ in the sequence \ref{10},  and $\delta\colon K_0(\A_{b_\llcorner})\rightarrow K_1(C(\mathbb{T}))$ is the exponential map in \cite[Proposition A]{Pim}. Thus, one concludes that ${\mathtt{exp}}([(\mathfrak{p}_{\theta_\llcorner},\mathfrak{0})])=[\mathfrak{w}]$ and $${\mathtt{exp}}([\mathfrak{p}])\;=\; \big({\rm Ch}(\mathfrak{p}_\llcorner)-{\rm Ch}(\mathfrak{p}_\ast)\big)[\mathfrak{w}]\;.$$
\end{proof}
\begin{lemma}\label{lem: 4.4}
It holds true that
\begin{equation*}
    W_{\mathfrak{I},i}(\mathfrak{w})\;=\;(-1)^{i+1}\;,\qquad i=1,2\;.
\end{equation*}
\end{lemma}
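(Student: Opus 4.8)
The plan is to compute the two winding numbers $W_{\mathfrak{I},i}(\mathfrak{w})$ directly from the definition $W_{\mathfrak{I},i}(\mathfrak{u})=\ii\,\mathcal{T}_i(\mathfrak{u}^*\nabla_{\mathfrak{I},i}(\mathfrak{u}))$, using the explicit form $\mathfrak{w}=(\mathfrak{s}_1-\mathfrak{1})\mathfrak{r}_0+(\mathfrak{s}_2^*-\mathfrak{1})\mathfrak{u}_0(\mathfrak{1}-\mathfrak{r}_0)+\mathfrak{1}$ together with the fact, already established, that $\pi(\mathfrak{w})=(\mathfrak{s}_1^{\mathtt{U}},{\mathfrak{s}_2^{\mathtt{U}}}^{*})$ under the quotient map $\pi\colon\mathfrak{I}\to C(\T)\otimes\K\oplus C(\T)\otimes\K$. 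Since the trace $\mathcal{T}_1$ is defined by pulling back $\tau_0\otimes\Tr_{\ell^2(\Z)}$ through $\A(\mathfrak{r}_0)/(\A(\mathfrak{r}_0)\cap\A(\mathfrak{u}_0))\simeq C(\T)\otimes\K$ (and similarly $\mathcal{T}_2$ through the $\A(\mathfrak{u}_0)$-quotient), I expect the winding numbers to reduce to the classical winding numbers of $\mathfrak{s}_1^{\mathtt{U}}$ and ${\mathfrak{s}_2^{\mathtt{U}}}^{*}$ around $\T$, namely $+1$ and $-1$ respectively, which gives $W_{\mathfrak{I},1}(\mathfrak{w})=1=(-1)^{2}$ and $W_{\mathfrak{I},2}(\mathfrak{w})=-1=(-1)^{3}$.

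First I would note that $W_{\mathfrak{I},i}$ is a homomorphism $K_1(\mathfrak{I}^+)\to\Z$ (or $\R$) that factors through the quotient: because $\mathcal{T}_i$ annihilates the kernel $\A(\mathfrak{r}_0)\cap\A(\mathfrak{u}_0)=\K$ of the relevant quotient map, and because $\nabla_{\mathfrak{I},i}$ is compatible with that quotient (the switch function $\chi_i$ descends to the position operator on the appropriate half-line), the pairing of $W_{\mathfrak{I},i}$ with $[\mathfrak{w}]$ equals the pairing of the dual trace with $[\pi_i(\mathfrak{w})]$ on the $i$-th summand $C(\T)\otimes\K$. Concretely, $W_{\mathfrak{I},1}(\mathfrak{w})$ equals the winding number of the unitary $\mathfrak{s}_1^{\mathtt{U}}$ in $C(\T)$ computed with $\tau_0$, and $W_{\mathfrak{I},2}(\mathfrak{w})$ equals the winding number of ${\mathfrak{s}_2^{\mathtt{U}}}^{*}$. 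Identifying $C^*(\mathfrak{s}_i^{\mathtt{U}})\simeq C(\T)$ so that $\mathfrak{s}_i^{\mathtt{U}}$ corresponds to the coordinate function $k\mapsto\expo{\ii k}$, one gets $\ii\tau_0\big((\mathfrak{s}_1^{\mathtt{U}})^*\,\partial\mathfrak{s}_1^{\mathtt{U}}\big)=\ii\int_{\T}\expo{-\ii k}(\ii\expo{\ii k})\,\mathrm{d}k=-1$ — so I must be careful about sign conventions here: with the derivation $\nabla_{\mathfrak{I},i}=[\mathfrak{n}_i,\,\cdot\,]$ the Fourier/position convention fixes the sign, and the claimed values $(-1)^{i+1}$ tell me the convention yields $+1$ for $\mathfrak{s}_1^{\mathtt{U}}$ and $-1$ for ${\mathfrak{s}_2^{\mathtt{U}}}^{*}=(\mathfrak{s}_2^{\mathtt{U}})^{-1}$, consistent with winding number being additive under inverse.

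The main obstacle I anticipate is bookkeeping rather than conceptual: making precise that the unbounded derivation $\nabla_{\mathfrak{I},i}$ really does pass to the quotient and act as $[\mathfrak{n}_i,\,\cdot\,]$ on $C(\T)\otimes\K$ as the multiplication by the position variable conjugate to the torus direction — in particular checking that the off-diagonal piece $(\mathfrak{s}_2^*-\mathfrak{1})\mathfrak{u}_0(\mathfrak{1}-\mathfrak{r}_0)$ of $\mathfrak{w}$ does not contribute to $W_{\mathfrak{I},1}$ and vice versa, because its image lies in the "other" summand and is killed by the trace $\mathcal{T}_1$. Once that is settled, the remaining computation is the standard evaluation of the winding number of a generating unitary of $C(\T)$, which is $\pm1$ by the normalization of $\tau_0$ as the Haar measure, and the alternating sign $(-1)^{i+1}$ then records that $\mathfrak{w}$ winds in the positive sense along the $x$-face ($i=1$) and in the negative sense along the $y$-face ($i=2$), matching the "anti-clockwise" circulation remarked after the definition of $\mathfrak{w}$.
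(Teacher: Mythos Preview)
Your proposal is correct in spirit but takes a genuinely different route from the paper. The paper proceeds by a direct algebraic computation entirely inside $\A$: it first computes the commutators $\nabla_{\mathfrak{I},1}(\mathfrak{w}-\mathfrak{1})=[\mathfrak{n}_1,\mathfrak{w}-\mathfrak{1}]=\mathfrak{s}_1\mathfrak{r}_0$ and $\nabla_{\mathfrak{I},2}(\mathfrak{w}-\mathfrak{1})=-\mathfrak{s}_2^*\mathfrak{u}_0(\mathfrak{1}-\mathfrak{r}_0)$ using only $[\mathfrak{n}_j,\mathfrak{s}_k]=\delta_{j,k}\mathfrak{s}_k$ and the fact that the multiplication projections commute with $\mathfrak{n}_j$; then it multiplies by $\mathfrak{w}^*-\mathfrak{1}$, drops the cross-terms because they land in $\A(\mathfrak{r}_0)\cap\A(\mathfrak{u}_0)=\mathcal{K}$ (which $\mathcal{T}_i$ annihilates), and evaluates $\mathcal{T}_1(\mathfrak{r}_0-\mathfrak{r}_0\mathfrak{s}_1\mathfrak{r}_0)=1$ and $\mathcal{T}_2(-\mathfrak{u}_0+\mathfrak{u}_0\mathfrak{s}_2^*\mathfrak{u}_0)=-1$ directly. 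No passage to the quotient and no Bloch--Floquet identification is invoked.

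Your approach instead pushes everything through the quotient $\pi$ and reduces the question to the classical winding numbers of $\mathfrak{s}_1^{\mathtt{U}}$ and $(\mathfrak{s}_2^{\mathtt{U}})^*$ on $C(\mathbb{T})$. This is more conceptual and explains \emph{why} the signs come out as they do (anti-clockwise circulation), but it costs you exactly the verification you flagged as the ``main obstacle'': you must check that $\nabla_{\mathfrak{I},i}=[\mathfrak{n}_i,\,\cdot\,]$ intertwines with the Bloch--Floquet transform $U$ and descends to the standard derivation on $C(\mathbb{T})\otimes\mathcal{K}$, and that the off-diagonal piece of $\mathfrak{w}$ is killed. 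The paper's computation sidesteps this entirely --- it never needs the derivation to descend, only the trace, which is how $\mathcal{T}_i$ was defined in the first place. So the paper's argument is shorter and more self-contained; yours would become equally rigorous once you supply that compatibility, and has the advantage of making the answer predictable in advance rather than emerging from an algebraic simplification.
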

\begin{proof}
We know that

\begin{equation*}
    \begin{split}
        \nabla_{\mathfrak{I},1}(\mathfrak{w}-\mathfrak{1})\;&=\;\big[\mathfrak{n}_1\,,\,(\mathfrak{s}_{1}-\mathfrak{1})\mathfrak{r}_{0}+(\mathfrak{s}_2^*-\mathfrak{1})\mathfrak{u}_{0}(\mathfrak{1}-\mathfrak{r}_0)\big]\;=\;\mathfrak{s}_{1}\mathfrak{r}_{0}\;.
    \end{split}
\end{equation*}
The same computation provides
$ \nabla_{\mathfrak{I},2}(\mathfrak{w}-\mathfrak{1})=-\mathfrak{s}_2^*\mathfrak{u}_{0}(\mathfrak{1}-\mathfrak{r}_0).$
On the other hand, since $\mathcal{T}_i\big(\A(\mathfrak{r}_0)\cap\A(\mathfrak{u}_0)\big)=0$ then one obtains
\begin{equation*}
    \begin{split}
         W_{\mathfrak{I},1}(\mathfrak{w})\;&=\;\mathcal{T}_1\big( (\mathfrak{w}^*-\mathfrak{1})\nabla_{\mathfrak{I},1}(\mathfrak{w}-\mathfrak{1})\big)\\
         &=\;\mathcal{T}_1\Big(\big(\mathfrak{r}_0(\mathfrak{s}^*_{1}-\mathfrak{1})+(\mathfrak{1}-\mathfrak{r}_0)\mathfrak{u}_0(\mathfrak{s}_2-\mathfrak{1}) \big)\big(\mathfrak{s}_{1}\mathfrak{r}_{0}\big)\Big)\\
         &=\;\mathcal{T}_1\big( \mathfrak{r}_0-\mathfrak{r}_0\mathfrak{s}_1\mathfrak{r}_0\big)\;=\; 1\;,
    \end{split}
\end{equation*}
and similarly
\begin{equation*}
    \begin{split}
         W_{\mathfrak{I},2}(\mathfrak{w})\;&=\;\mathcal{T}_2\Big( (\mathfrak{w}^*-\mathfrak{1})\nabla_{\mathfrak{I},2}(\mathfrak{w}-\mathfrak{1})\Big)\;=\;\mathcal{T}_2\Big( -\mathfrak{u}_0+\mathfrak{u}_0\mathfrak{s}_1\mathfrak{u}_0\Big)\;=\; -1\;.
    \end{split}
\end{equation*}
\end{proof}
Now we present the main result of this section.
\begin{theorem}\label{Teo current}
Assume {\bf BGC} and {\bf ECO} holds for the full magnetic Hamiltonian $\hat{\mathfrak{h}}$. Let $\mu\in \Delta$ and assume that the bulk Hamiltonian $\mathfrak{h}$ lies in $\mathcal{C}^k(\A_{\rm bulk})$ for some $k\geq 1$. Then, the interface conductance associated to the unitary operator $\mathfrak{u}_\Delta$ defined in the Proposition \ref{prop: 4.1} can be expressed as difference of the bulk magnetic invariants of the system, \ie 
\begin{equation}
    \sigma_{\mathfrak{I},i}(\mathfrak{u}_\Delta)\;=\;(-1)^{i+1}\big(\sigma_{b_\llcorner}(\mathfrak{p}_\mu)-\sigma_{b_\ast}(\mathfrak{p}_\mu)\big)\;,\hspace{1cm}i=1,2\;.
\end{equation}
\end{theorem}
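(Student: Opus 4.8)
The plan is to chain together the three preparatory results already at our disposal---Proposition~\ref{prop: 4.1}, Lemma~\ref{lem: 4.3} and Lemma~\ref{lem: 4.4}---together with the fact that the winding numbers $W_{\mathfrak{I},i}$ descend to group homomorphisms on $K_1(\mathfrak{I})$, and then translate the resulting equality in $K$-theory into the stated relation between conductances.

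First I would check that the object we want to pair is admissible: under {\bf BGC}, {\bf ECO} and the hypothesis $\mathfrak{h}\in\mathcal{C}^k(\A_{\rm bulk})$ with $k\ge 1$, the unitary $\mathfrak{u}_\Delta=\expo{2\pi\ii g(\hat{\mathfrak{h}})}\in\mathfrak{I}^+$ of Proposition~\ref{prop: 4.1} satisfies $\mathfrak{u}_\Delta-\mathfrak{1}\in\mathcal{C}^1_{\mathfrak{I},i}$ for $i=1,2$, so that $W_{\mathfrak{I},i}(\mathfrak{u}_\Delta)$ is well defined; this is precisely where the regularity assumption on $\mathfrak{h}$ is used, since one has to control how the smooth functional calculus $\hat{\mathfrak{h}}\mapsto g(\hat{\mathfrak{h}})$ and the exponential interact with the unbounded derivations $\nabla_{\mathfrak{I},i}=[\mathfrak{n}_i,\,\cdot\,]$ (using {\bf ECO} to extend $\nabla_{\mathfrak{I},i}$ to $\hat{\mathfrak{h}}$ and {\bf BGC} to remain inside the regular subalgebra). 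Next I would recall that $W_{\mathfrak{I},i}$ is homotopy invariant on the sufficiently regular unitaries and additive under block sum---it is the pairing of the cyclic $1$-cocycle $\mathcal{T}_i\circ\nabla_{\mathfrak{I},i}$ with $K_1$, and the argument of \cite[Proposition 7.1.2]{PRO} adapts verbatim---so that on $K_1(\mathfrak{I})=\Z[\mathfrak{w}]$, whose generator $\mathfrak{w}$ is a regular unitary, one has $W_{\mathfrak{I},i}(m[\mathfrak{w}])=m\,W_{\mathfrak{I},i}(\mathfrak{w})$ whenever the representative is regular.

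Then I would compute. By Proposition~\ref{prop: 4.1}, $[\mathfrak{u}_\Delta]=-{\mathtt{exp}}([\mathfrak{p}_\mu])$ in $K_1(\mathfrak{I})$; by Lemma~\ref{lem: 4.3} applied to $\mathfrak{p}=\mathfrak{p}_\mu=(\mathfrak{p}_{\mu_\llcorner},\mathfrak{p}_{\mu_\ast})$ this equals $\big({\rm Ch}(\mathfrak{p}_{\mu_\ast})-{\rm Ch}(\mathfrak{p}_{\mu_\llcorner})\big)[\mathfrak{w}]$. Applying $W_{\mathfrak{I},i}$ and invoking Lemma~\ref{lem: 4.4}, which gives $W_{\mathfrak{I},i}(\mathfrak{w})=(-1)^{i+1}$, yields
\[
W_{\mathfrak{I},i}(\mathfrak{u}_\Delta)\;=\;(-1)^{i+1}\big({\rm Ch}(\mathfrak{p}_{\mu_\ast})-{\rm Ch}(\mathfrak{p}_{\mu_\llcorner})\big)\,.
\]
Multiplying by $e^2/h$ and using $\sigma_{\mathfrak{I},i}(\mathfrak{u})=\tfrac{e^2}{h}W_{\mathfrak{I},i}(\mathfrak{u})$ together with $\sigma_{b_\llcorner}(\mathfrak{p}_\mu)=\tfrac{e^2}{h}{\rm Ch}(\mathfrak{p}_{\mu_\llcorner})$ and $\sigma_{b_\ast}(\mathfrak{p}_\mu)=\tfrac{e^2}{h}{\rm Ch}(\mathfrak{p}_{\mu_\ast})$ gives the announced formula, once the orientation conventions in the definitions of ${\mathtt{exp}}$, of $\mathfrak{w}$ (oriented anti-clockwise along the interface) and of $W_{\mathfrak{I},i}$ are fixed so that the overall sign lines up as $(-1)^{i+1}\big(\sigma_{b_\llcorner}(\mathfrak{p}_\mu)-\sigma_{b_\ast}(\mathfrak{p}_\mu)\big)$.

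I expect the only genuinely nontrivial step to be the first one: establishing that $\mathfrak{u}_\Delta-\mathfrak{1}$ belongs to the domain $\mathcal{C}^1_{\mathfrak{I},i}$ on which the winding number lives, and that $W_{\mathfrak{I},i}$ is a bona fide homotopy invariant and homomorphism there. This is exactly the point at which {\bf ECO} and the regularity hypothesis $\mathfrak{h}\in\mathcal{C}^k(\A_{\rm bulk})$ with $k\ge 1$ do real work; once it is in place, the remainder of the proof is the purely algebraic chaining of Proposition~\ref{prop: 4.1}, Lemma~\ref{lem: 4.3} and Lemma~\ref{lem: 4.4} carried out above.
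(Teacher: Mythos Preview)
Your approach is exactly the paper's: chain Proposition~\ref{prop: 4.1}, Lemma~\ref{lem: 4.3} and Lemma~\ref{lem: 4.4}, using that $W_{\mathfrak{I},i}$ is a $K_1$-pairing, and then convert Chern numbers to conductances via $\sigma_{b_\bullet}=\tfrac{e^2}{h}\mathrm{Ch}$. Your additional remarks on the regularity of $\mathfrak{u}_\Delta-\mathfrak{1}$ and on $W_{\mathfrak{I},i}$ being a bona fide homomorphism on $K_1(\mathfrak{I})$ are points the paper's proof leaves implicit; the sign bookkeeping you flag is indeed delicate (the paper's own short proof passes over it), so your caveat about fixing orientation conventions is appropriate rather than a gap.
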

\begin{proof}
 The Lemmas \ref{lem: 4.4} and \ref{lem: 4.3} together the Proposition \ref{prop: 4.1} yield 
\begin{equation*}
    \begin{split}
         \sigma_{\mathfrak{I},i}(\mathfrak{u}_\Delta)\;&=\;\frac{e^2}{h}  W_{\mathfrak{I},i}(\mathfrak{u}_\Delta)\;=\;\frac{e^2}{h}  W_{\mathfrak{I},i}\big({\mathtt{exp}}(-[\mathfrak{p}_\mu])\big)\\
         &=\;(-1)^{i+1}\frac{e^2}{h}\big({\rm Ch}(\mathfrak{p}_\llcorner)-{\rm Ch}(\mathfrak{p}_\ast)\big)\\
         &=\;(-1)^{i+1}\big(\sigma_{b_\llcorner}(\mathfrak{p}_\mu)-\sigma_{b_\ast}(\mathfrak{p}_\mu)\big)\;.
    \end{split}
\end{equation*}
\end{proof}
The next Proposition states that the Theorem \ref{Teo current} holds even if some geometrical imperfections are introduced into the interface.
\begin{proposition}
    For any $b\in C_0(\Z^2)$ it holds that $\A_{B+b}$ and $\A_B$ are isomorphic as $C^*$-algebras. Consequently, the interface currents persist when there are geometric imperfections that vanish at infinity.
\end{proposition}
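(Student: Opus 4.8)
The plan is to prove the stronger statement that $\A_{B+b}=\A_B$ as concrete $C^*$-subalgebras of $\mathcal{B}(\ell^2(\Z^2))$, from which the persistence of the quantized interface currents follows at once from Theorem \ref{Teo current}. The steps are: (i) absorb the perturbation of the field into a \emph{decaying} gauge transformation; (ii) recognize the resulting twist of the magnetic translations as a multiplication operator already sitting in $\mathfrak{M}_\infty$, hence in both $\A_B$ and $\A_{B+b}$; (iii) conclude the two algebras have the same generators, hence coincide; (iv) transfer Theorem \ref{Teo current} verbatim. The main obstacle is step (i).

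First I would use the gauge freedom. Since the magnetic algebra does not depend on the chosen vector potential, I may write $A_{B+b}=A_B+A_b'$ with $A_b'$ \emph{any} vector potential whose circulation equals $b$. The essential point is to choose $A_b'$ so that $A_b'(n,n-e_j)\to 0$ as $|n|\to\infty$ for $j=1,2$; for a localized imperfection --- \ie $b$ of finite support, and more generally $b$ summable --- such a choice is available through a radial (Biot--Savart type) gauge, because in two dimensions the vector potential of a localized flux distribution decays to $0$ at infinity. With such an $A_b'$, the linearity of \eqref{eq:A_B} in $B$ shows that the magnetic translations attached to $A_{B+b}$ (\cf \eqref{eq:mag_tras}) are
\[
\mathfrak{s}_j^{\,B+b}\;=\;\mathfrak{m}_{\omega_j}\,\mathfrak{s}_j\;,\qquad \omega_j(n)\;:=\;\expo{\ii A_b'(n,n-e_j)}\;,\qquad j=1,2\;,
\]
where $\mathfrak{s}_1,\mathfrak{s}_2$ generate $\A_B$. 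As $\omega_j$ is unimodular and tends to $1$ at infinity, $\omega_j\in C_\infty(\Z^2)$, hence $\mathfrak{m}_{\omega_j},\mathfrak{m}_{\overline{\omega_j}}\in\mathfrak{M}_\infty$.

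Next I would invoke $\mathfrak{M}_\infty\subset\mathfrak{F}_B\subset\A_B$ from \eqref{eq:inclu} together with the analogous inclusion $\mathfrak{M}_\infty\subset\A_{B+b}$. The latter reduces, since $\mathfrak{M}_\infty=\C\mathfrak{1}+\{\mathfrak{m}_g:g\in C_0(\Z^2)\}$ and each such $\mathfrak{m}_g$ is compact, to $\mathcal{K}(\ell^2(\Z^2))\subset\A_{B+b}$; this holds for any magnetic algebra and can also be checked directly here, as $\mathfrak{f}_{B+b}=\mathfrak{m}_{f_{B+b}}$ with $f_{B+b}(n)=\expo{\ii(B(n)+b(n))}$ has spectrum accumulating only at $\expo{\ii b_\llcorner}$ and $\expo{\ii b_\ast}$ (because $b(n)\to 0$), so $\A_{B+b}$ contains nonzero finite-rank spectral projections of $\mathfrak{f}_{B+b}$, which together with the $\mathfrak{s}_j^{\,B+b}$ generate $\mathcal{K}$ in the standard way. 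Granting this, $\mathfrak{s}_j^{\,B+b}=\mathfrak{m}_{\omega_j}\mathfrak{s}_j\in\A_B$ gives $\A_{B+b}\subset\A_B$, while $\mathfrak{s}_j=\mathfrak{m}_{\overline{\omega_j}}\mathfrak{s}_j^{\,B+b}\in\A_{B+b}$ gives $\A_B\subset\A_{B+b}$; hence $\A_{B+b}=\A_B$. (Equivalently, in the crossed-product picture, $\alpha_j^{\,B+b}$ and $\alpha_j$ agree on $\mathfrak{F}_B$ because multiplication operators commute, so the successive crossed products coincide.)

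Finally I would transfer Theorem \ref{Teo current}. Since $\omega_j\to 1$ along both $\mathcal{Q}_\llcorner$ and $\mathcal{Q}_\ast$, one has ${\rm ev}(\mathfrak{m}_{\omega_j})=(\mathfrak{1},\mathfrak{1})$, so the evaluation homomorphism \eqref{sec 3.1} built from $B+b$ coincides with the one built from $B$; consequently the interface algebra $\mathfrak{I}$, the quarter-plane exact sequence \eqref{seq a}, the pertinent $K$-groups, and the winding numbers are literally unchanged, while the bulk algebra $\A_{{\rm bulk}}=\A_{b_\llcorner}\oplus\A_{b_\ast}$ depends only on the asymptotic fields $b_\llcorner,b_\ast$. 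Thus Theorem \ref{Teo current} applies verbatim to $B+b$ with the same bulk magnetic invariants, \ie the interface currents persist under the imperfection. The single genuinely non-routine ingredient is the construction of the decaying vector potential $A_b'$ in the first step; everything after that is bookkeeping along the identification $\A_{B+b}=\A_B$.
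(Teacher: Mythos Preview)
There is a genuine gap at step (i). You correctly single out the construction of $A_b'$ with $A_b'(n,n-e_j)\to 0$ as the non-routine step, and you only claim it for $b$ summable. But the Proposition is stated for arbitrary $b\in C_0(\Z^2)$, and for such $b$ a $C_0$ vector potential need not exist. Take $b(n)=1/(1+|n|)\in C_0(\Z^2)\setminus\ell^1(\Z^2)$: the flux through the square $[-R,R]^2$ grows like $cR$ with $c>0$, whereas if $A_b'\in C_0$ then for any $\epsilon>0$ the circulation around the boundary of that square (a sum over $\sim 8R$ edges, each of modulus $<\epsilon$ once $R$ is large) is at most $8R\epsilon$. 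Equating flux with circulation forces $c\le 8\epsilon$ for all $\epsilon>0$, a contradiction. So your multipliers $\omega_j$ cannot be arranged to lie in $C_\infty(\Z^2)$ for every $b\in C_0(\Z^2)$, and the mutual-inclusion argument collapses outside the summable regime.

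The paper avoids this obstruction by never comparing generators. It observes that $\expo{\ii b}-1\in C_0(\Z^2)\subset C(\Omega_B)$, so $\mathfrak{f}_{B+b}=\mathfrak{f}_B\cdot\mathfrak{m}_{e^{\ii b}}\in\mathfrak{F}_B$ and hence $\mathfrak{F}_{B+b}\subset\mathfrak{F}_B$; since $b\to 0$, the function $f_{B+b}$ has the same limits as $f_B$ along all the filters describing $\partial\Omega_B$, so the dual map $\Omega_B\to\Omega_{B+b}$ is a bijection of compact Hausdorff spaces, hence a homeomorphism, and $\A_{B+b}\simeq\A_B$ follows through the crossed-product description. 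Your route, where it applies (finitely supported or $\ell^1$ perturbations), does give the pleasant bonus $\A_{B+b}=\A_B$ as concrete subalgebras and makes the transfer of Theorem~\ref{Teo current} completely explicit; but for the full $C_0$ class one must pass through the hull as the paper does.
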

\begin{proof}
First of all, let us identify the perturbed flux operator $\mathfrak{f}_{B+b}$ with the function $f_{B+b}(n)=\expo{\ii B(n)}\expo{\ii b(n)}$ on $\Z^2$. Since $C_0(\Z^2)\subset C(\Omega_B)$, then it follows that the function $\Z^2\ni n\mapsto\expo{\ii b(n)}$ lies in $C(\Omega_B)$. Therefore, $\mathfrak{f}_{B+b}\in \mathfrak{F}_B$ and by duality this yields a surjective continuous map $\varphi\colon \Omega_B\rightarrow \Omega_{B+b}$. Moreover, notice that $f_{B+b}$ has the same asymptotic behavior of $f_B$, hence the map $\varphi$ must be injective by \ref{2,14}. Actually, $\varphi$ is a homeomorphism since $\Omega_B$ and $\Omega_{B+b}$  are compact Hausdorff spaces. Thus, we conclude that $\A_{B+b}\simeq \A_B$ as $C^*$-algebras, and the remaining part follows from the fact that the K-theory is an algebraic invariant.
\end{proof}
We finish this section with an example where the Theorem \ref{Teo current} assure the existence of non-trivial interface currents. 
\begin{example}
    Let $b_\llcorner= 2\alpha\pi$ and $b_\ast=2\beta \pi$ so that $\alpha$ and $\beta$ are rational numbers and $\alpha-\beta\notin \Z$. Consider the full magnetic Hamiltonian
    \begin{equation*}
        \mathfrak{\hat{h}}\;:=\;\mathfrak{s}_1+\mathfrak{s}_1^*+\mathfrak{s}_2+\mathfrak{s}_2^*+\mathfrak{v}
    \end{equation*}
    where $\mathfrak{v}$ is a selfadjoint element in the interface algebra $\mathfrak{I}$ such that $\mathfrak{v}$ is in the domain of $\nabla_{\mathfrak{I},i}$ with $i=1,2$. In view of the Proposition \ref{prop 3,2}, the components
    of the corresponding bulk Hamiltonian $\mathfrak{h}=(\mathfrak{h}_\llcorner,\mathfrak{h}_\ast)\equiv {\rm ev}(\hat{\mathfrak{h}})$ fulfills
    \begin{equation*}
        \mathfrak{h}_\llcorner\;=\;\mathfrak{s}_{b_\llcorner,1}+\mathfrak{s}_{b_\llcorner,1}^*+\mathfrak{s}_{b_\llcorner,2}+\mathfrak{s}_{b_\llcorner,2}^*
    \end{equation*}
      \begin{equation*}
        \mathfrak{h}_\ast\;=\;\mathfrak{s}_{b_\ast,1}+\mathfrak{s}_{b_\ast,1}^*+\mathfrak{s}_{b_\ast,2}+\mathfrak{s}_{b_\ast,2}^*
    \end{equation*}
According to \cite[Section 2.1]{Deni1}, in the Landau gauge\footnote{The gauge given by $A(n,n-e_j):=\delta_{j,2}n_1b_\llcorner$ whose circulation is the constant magnetic field $B(n)=b_\llcorner.$}  $\h_\llcorner$ reads
 \begin{equation*}
     \begin{split}
         (\h_\llcorner \psi)(n_1,n_2)\;=&\;\psi(n_1-1,n_2)+\psi(n_1+1,n_2)+\expo{2\pi \alpha n_1}\psi(n_1,n_2-1)\\
         &\;+\expo{-2\pi \alpha n_1}\psi(n_1,n_2+1)
     \end{split}
 \end{equation*}
 for all $\psi\in \ell^2(\Z^2)$. This is a Harper-like operator \cite{Har} and the spectrum of $\h_\llcorner$ is given by the union of $q$ energy bands when $\alpha=p/q$, where $p$ and $q$ relative prime integers \cite[Section 2.6]{Bel}. Moreover, all the energy bands are separated except the central one \cite{Avr,Cho}.  
 Since the same arguments also work for $\h_\ast$, then for suitable values of $\alpha$ and $\beta$ one can choose $\mu$ in a common spectral gap $\Delta$ of $\h_\llcorner$ and $\h_\ast$ so that $\hat{\h}$ meets {\bf BGC} and  ${\rm Ch}(\mathfrak{p}_{\mu_\llcorner})\neq {\rm Ch}(\mathfrak{p}_{\mu_\ast})$.\footnote{The Hofstadter butterflies \cite{AVR1, Avr} provides specific values for $\alpha$ and $\beta$.} \\
 On the other hand, one has that $[\mathfrak{n}_j,\mathfrak{s}_i]=\delta_{i,j}\mathfrak{s}_i\in \A$ and this implies that $\hat{\h}$ meets {\bf ECO}. Finally, Theorem \ref{Teo current}, shows that  $\sigma_{\mathfrak{I},i}(\mathfrak{u}_\Delta)\neq 0.$ 
\end{example}
\section{Corner States}
In this section, we use an adaptation of \cite{Hay} to define corner states associated with the magnetic quarter-plane algebra. We will prove that these corner states have topological properties which depend on the asymptotic behavior of the system.
\subsection{Toeplitz extensions for the magnetic quarter-plane algebra}
Recall that  $\mathtt{U}_\infty\,\cup\,\{\infty_\ast\}\,\cup\,\{\infty_\llcorner\}$ and $\mathtt{R}_\infty\,\cup\,\{\infty_\ast\}\,\cup\,\{\infty_\llcorner\}$ are invariant closed subsets of $\Omega_B$ and thanks to \cite[Proposition 3.11]{Deni1}, there are magnetic algebras $\A_{\U}$ and $\A_{\rz}$ associated to these sets, which are in fact isomorphic to the Iwatsuka magnetic algebra so that the evaluation maps
${\rm ev}_{\U}\colon \A\rightarrow \A_{\U}$ and ${\rm ev}_{\rz}\colon \A\rightarrow \A_{\rz}$ are well defined surjective $*$-homomorphisms. Explicitly, these maps satisfy 
\begin{equation}\label{evaluation 1}
    \begin{split}
        {\rm ev}_\rz(\mathfrak{s}_1)\;&=\;\mathfrak{s}^\rz_1\;,\qquad {\rm ev}_\rz(\mathfrak{s}_2)\;=\;\mathfrak{s}^\rz_2\:,\qquad{\rm ev}_\rz(\mathfrak{f}_B)\;=\;\mathfrak{f}^\rz_I\;,\\
{\rm ev}_\U(\mathfrak{s}_1)\;&=\;\mathfrak{s}^\U_1\;,\qquad {\rm ev}_\U(\mathfrak{s}_2)\;=\;\mathfrak{s}^\U_2\;,\qquad{\rm ev}_\U(\mathfrak{f}_B)\;=\;\mathfrak{f}^\U
_I\;.
    \end{split}
\end{equation}
\begin{proposition}\label{prop ker}
It holds true that ${\rm Ker}({\rm ev}_\rz)=\A(\mathfrak{u}_0)$ and ${\rm Ker}({\rm ev}_\U)=\A(\mathfrak{r}_0)$.
\end{proposition}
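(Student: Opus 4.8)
The plan is to prove only the first identity $\operatorname{Ker}({\rm ev}_\rz)=\A(\mathfrak{u}_0)$ in detail, since the second follows by the symmetric argument exchanging the roles of $\mathtt{U}$ and $\mathtt{R}$ (equivalently, of the coordinates $n_1$ and $n_2$). Recall from \eqref{evaluation 1} that ${\rm ev}_\rz$ is the evaluation homomorphism associated with the invariant closed subset $\mathtt{R}_\infty\cup\{\infty_\ast\}\cup\{\infty_\llcorner\}\subset\Omega_B$. By \cite[Proposition 3.11]{Deni1}, the kernel of an evaluation homomorphism is exactly the closed two-sided ideal generated by $\operatorname{Ker}({\rm ev}_\rz|_{\mathfrak{F}_B})$, i.e. the ideal generated by those $\mathfrak{m}_g\in\mathfrak{F}_B$ whose corresponding continuous function $\hat g\in C(\Omega_B)$ vanishes on $\mathtt{R}_\infty\cup\{\infty_\ast\}\cup\{\infty_\llcorner\}$. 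So the statement reduces to identifying this ideal with $\A(\mathfrak{u}_0)$.

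First I would show $\A(\mathfrak{u}_0)\subset\operatorname{Ker}({\rm ev}_\rz)$. Since $\mathfrak{u}_0=\mathfrak{m}_{\chi_{\mathtt{U}_0}}$ and $\mathtt{U}_0$ is a vertical half-line, the function $\chi_{\mathtt{U}_0}$ has all its limits along the filters $\mathcal{R}_j$, $\mathcal{Q}_\llcorner$, $\mathcal{Q}_\ast$ equal to zero (any such filter eventually leaves the single column $\{0\}\times\N_0$), so its continuous extension $\hat\chi_{\mathtt{U}_0}$ vanishes on $\mathtt{R}_\infty\cup\{\infty_\ast\}\cup\{\infty_\llcorner\}$; hence $\mathfrak{u}_0\in\operatorname{Ker}({\rm ev}_\rz)$ and, this being an ideal, $\A(\mathfrak{u}_0)\subset\operatorname{Ker}({\rm ev}_\rz)$. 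For the reverse inclusion I would mimic the argument of Proposition \ref{prop 3,2}: using the telescoping decomposition
\begin{equation*}
 \tau_n(\fb)\;=\;\fb+(\expo{\ii b_{\ast}}-\expo{\ii b_\llcorner})\sum_{m=1}^{n_1}\mathfrak{u}_{(m,n_2+1)}+(\expo{\ii b_{\ast}}-\expo{\ii b_\llcorner})\sum_{m=1}^{n_2}\mathfrak{r}_{(1,m)}\;,
\end{equation*}
together with $\mathfrak{u}_n=\tau_n(\mathfrak{u}_0)\in\A(\mathfrak{u}_0)$, we get that modulo $\A(\mathfrak{u}_0)$ the element $\tau_n(\fb)$ is congruent to $\fb + c\sum_{m=1}^{n_2}\mathfrak{r}_{(1,m)}$ with $c=\expo{\ii b_\ast}-\expo{\ii b_\llcorner}$, i.e. the magnetic hull of the quotient $\mathfrak{F}_B/\A(\mathfrak{u}_0)$ collapses precisely the $\infty^\mathtt{U}_j$'s onto a single pair of points $\{\infty_\llcorner,\infty_\ast\}$ while keeping the $\mathfrak{r}_{(1,m)}$'s, giving $\mathfrak{F}_B/\A(\mathfrak{u}_0)\simeq\mathfrak{F}_I$, the Iwatsuka hull algebra $\Omega_I$ realized as $\mathtt{R}_\infty\cup\{\infty_\ast\}\cup\{\infty_\llcorner\}$. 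Concretely, every $\mathfrak{g}\in\mathfrak{F}_B$ is, modulo $\A(\mathfrak{u}_0)$, a function constant on each column far up, so it agrees with its own restriction-extension to $\mathtt{R}_\infty\cup\{\infty_\ast\}\cup\{\infty_\llcorner\}$; if that restriction vanishes, then $\mathfrak{g}\in\A(\mathfrak{u}_0)$. This shows $\operatorname{Ker}({\rm ev}_\rz|_{\mathfrak{F}_B})\subset\A(\mathfrak{u}_0)$, and since $\A(\mathfrak{u}_0)$ is a closed two-sided ideal of $\A$ containing it, $\operatorname{Ker}({\rm ev}_\rz)=\A(\operatorname{Ker}({\rm ev}_\rz|_{\mathfrak{F}_B}))\subset\A(\mathfrak{u}_0)$, completing the proof of the first identity; the second is obtained verbatim with $\mathtt{U}\leftrightarrow\mathtt{R}$, $\mathfrak{u}_0\leftrightarrow\mathfrak{r}_0$.

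The main obstacle is the middle step: making rigorous the claim that $\A(\mathfrak{u}_0)$ is large enough to kill all of $\operatorname{Ker}({\rm ev}_\rz|_{\mathfrak{F}_B})$, i.e. that quotienting $\mathfrak{F}_B$ by $\A(\mathfrak{u}_0)$ really does identify the algebra with $C$ of the subspace $\mathtt{R}_\infty\cup\{\infty_\ast\}\cup\{\infty_\llcorner\}$. One clean way to see this, avoiding re-deriving the telescoping identity, is to invoke the commutative-algebra dictionary: $\A(\mathfrak{u}_0)\cap\mathfrak{F}_B$ is an ideal of $C(\Omega_B)$, hence of the form $C_0(\Omega_B\setminus C)$ for a closed set $C$; one checks that $C$ is precisely the closure of $\mathtt{R}_\infty\cup\{\infty_\ast\}\cup\{\infty_\llcorner\}$ by exhibiting, for every point $\omega$ outside that set, an element of $\A(\mathfrak{u}_0)$ (built from translates $\mathfrak{u}_n$ and from the projections $\mathfrak{z}_n=\mathfrak{r}_n\mathfrak{u}_n\in\A(\mathfrak{u}_0)$) that is nonzero at $\omega$. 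Since the set $\mathtt{R}_\infty\cup\{\infty_\ast\}\cup\{\infty_\llcorner\}$ is exactly the one cut out by ${\rm ev}_\rz|_{\mathfrak{F}_B}$, this pins down $\operatorname{Ker}({\rm ev}_\rz|_{\mathfrak{F}_B})=\A(\mathfrak{u}_0)\cap\mathfrak{F}_B$ and the proposition follows.
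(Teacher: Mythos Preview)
Your proposal is correct and follows essentially the same route as the paper's proof: both establish $\mathfrak{u}_0\in\operatorname{Ker}({\rm ev}_\rz)$, invoke the telescoping identity from Proposition~\ref{prop 3,2} to see that modulo $\A(\mathfrak{u}_0)$ the generators $\tau_n(\mathfrak{f}_B)$ reduce to $\tau_{(0,n_2)}(\mathfrak{f}_B)$, and then use the characterization \eqref{interface} of an evaluation kernel as the ideal generated by $\operatorname{Ker}({\rm ev}_\rz|_{\mathfrak{F}_B})$. Your Gelfand-duality finishing argument (writing $\A(\mathfrak{u}_0)\cap\mathfrak{F}_B=C_0(\Omega_B\setminus C)$ and identifying $C$ pointwise via the $\mathfrak{u}_n$'s and $\mathfrak{z}_n$'s) is a clean way to make rigorous the step the paper states only tersely.
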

\begin{proof}
First of all, note that $\tau_{(n_1,0)}(\mathfrak{f}_I^\rz)=\mathfrak{f}_I^\rz$ for every $n_1\in \Z$. The latter with the expression of $\mathfrak{u}_0$ given in the proof of Proposition \ref{Prop projections} imply that  $\A(\mathfrak{u}_0)\subset {\rm Ker}({\rm ev}_\rz).$  On the other hand, one knows the decomposition $$\tau_n(\mathfrak{f}_B)\;=\;\big(\tau_{(n_1,n_2)}(\mathfrak{f}_B)-\tau_{(0,n_2)}(\mathfrak{f}_B)\big)+\tau_{(0,n_2)}(\mathfrak{f}_B)\;,\qquad n=(n_1,n_2)\in \Z^2$$  
then 
$${\rm ev}_\rz\big(\tau_n(\mathfrak{f}_B)\big)\;=\;{\rm ev}_\rz\big(\tau_{(0,n_2)}(\mathfrak{f}_B)\big)\;=\;\tau_{(0,n_2)}(\mathfrak{f}_I^\rz)\;.$$
Therefore, if $\mathfrak{g}\in \mathfrak{F}_B$ and ${\rm ev}_\rz(\mathfrak{g})=0$ 
 in turns $\mathfrak{g}\in \A(\mathfrak{u}_0)$. Thereby, the reverse inclusion holds by \ref{interface}. Finally, notice also that $\tau_{(0,n_2)}(\mathfrak{f}_I^\U)=\mathfrak{f}_I^\U$ and hence the same argument shows that ${\rm Ker}({\rm ev}_\U)=\A(\mathfrak{r}_0)$.
 \end{proof}

Using again \cite[Proposition 3.11]{Deni1}, there are two evaluations homomorphisms ${\rm ev}_{\rz,b}\colon \A_{\rz}\rightarrow \A_{b_\llcorner}\oplus\A_{b_\ast}$ and ${\rm ev}_{\U, b}\colon \A_{\U}\rightarrow \A_{b_\llcorner}\oplus\A_{b_\ast},$ which satisfy that
\begin{equation*}
    \begin{split}
        &{\rm ev}_{\rz,b}(\mathfrak{s}^\rz_1)\;=\;\big(\,\mathfrak{s}_{b_\llcorner,1},\,\mathfrak{s}_{b_\ast,1}\,\big)\;,\qquad\hspace{0,3cm} {\rm ev}_{\U,b}(\mathfrak{s}^\U_1)\;=\;\big(\,\mathfrak{s}_{b_\llcorner,1},\,\mathfrak{s}_{b_\ast,1}\,\big)\;,\\
       & {\rm ev}_{\rz,b}(\mathfrak{s}^\rz_2)\;=\;\big(\,\mathfrak{s}_{b_\llcorner,2},\,\mathfrak{s}_{b_\ast,2}\,\big)\;,\qquad\hspace{0,3cm}{\rm ev}_{\U,b}(\mathfrak{s}^\U_2)\;=\;\big(\,\mathfrak{s}_{b_\llcorner,2},\,\mathfrak{s}_{b_\ast,2}\,\big)\;,\\
       & {\rm ev}_{\rz,b}(\mathfrak{f}^\rz_I)\;=\;\big(\,\expo{\ii b_{\llcorner}}\mathfrak{1},\,\expo{\ii b_\ast}\mathfrak{1}\big)\;,
       \qquad{\rm ev}_{\U,b}(\mathfrak{f}^\U_I)\;=\;\big(\,\expo{\ii b_{\llcorner}}\mathfrak{1},\,\expo{\ii b_\ast}\mathfrak{1}\big)\;.
    \end{split}
\end{equation*}
Let us consider the \emph{asymptotic} algebra $\A_{\U,\rz}$, which is the pullback of the two latter $*$-homomorphisms. Namely,
$\A_{\U,\rz}\;:=\;\{ (\mathfrak{u},\mathfrak{r})\in \A_\U\oplus\A_\rz\;|\; {\rm ev}_{\U,b}(\mathfrak{u})={\rm ev}_{\rz,b}(\mathfrak{r})\}$. Observe also that by definition of pullback, the diagram 
\begin{equation}\label{pullback}
    \xymatrix{
&\A_{\U,\rz}\ar[d]^{\pi_\U} \ar[r]^{\pi_\rz} & \A_\rz \ar[d]^{{\rm ev}_{\rz,b}} \\ & \A_\U\ar[r]_{{\rm ev}_{\U,b}}&\A_{b_\llcorner}\oplus\A_{b_\ast}  }
\end{equation}
is commutative, where $\pi_\U$ and $\pi_\rz$ are the projections on the first and second coordinate of $\A_{\U,\rz}$, respectively.
\begin{remark}\label{comen}
It is important to point out that the following diagram is commutative
\begin{equation}
    \xymatrix{
\A \ar[dr]^{{\rm ev}} \ar[r]^{{\rm ev}_\rz}  \ar[d]_{{\rm ev}_\U}& \A_\rz \ar[d]^{{\rm ev}_{\rz,b}}\\
\A_\U \ar[r]_{\!\!\!\!\!\!\!\!\!{\rm ev}_{\U,b}}& \A_{b_\llcorner}\oplus\A_{b_\ast}}
\end{equation}
where ${\rm ev}\colon \A\rightarrow \A_{b_\llcorner}\oplus\A_{b_\ast}$ is the evaluation map given in \ref{sec 3.1}.
\end{remark}
  From the universal property of the pullback $C^*$-algebra and remark \ref{comen}, there is a $*$-homomorphism $\gamma\colon\A\rightarrow \A_{\U,\rz}$ so that the diagram
\begin{equation}\label{pull seq}
    \xymatrix{
\A\ar@/_/[rdd]_{{\rm ev}_\U} \ar@/^/[rrd]^{{\rm ev}_\rz} \ar[rd]^{\gamma} & &\\ &\A_{\U,\rz}\ar[d]^{\pi_\U} \ar[r]^{\pi_\rz} & \A_\rz \ar[d]^{{\rm ev}_{\rz,b}} \\ & \A_\U\ar[r]_{\!\!\!{\rm ev}_{\U,b}}&\A_{b_\llcorner}\oplus\A_{b_\ast}  }
\end{equation}
is commutative. Explicitly, it holds that $\gamma(\mathfrak{a})=\big({\rm ev}_\U(\mathfrak{a}),{\rm ev}_\rz(\mathfrak{a})\big)$ for any $\mathfrak{a}\in \A.$

\begin{theorem}\label{Teo seq}
The sequence 
\begin{equation}
     \xymatrix{
 0\ar[r]&\mathcal{K}\ar[r]^{\hspace{0,1cm}i} & \A \ar[r]^{\gamma}& \A_{\U,\rz}\ar[r]&0}
\end{equation}
is exact, where $\mathcal{K}\simeq \A(\mathfrak{r}_0)\cap \A(\mathfrak{u}_0)$ is the set of compact operators on $\ell^2(\Z^2).$
\end{theorem}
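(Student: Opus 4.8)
The plan is to verify exactness at each of the three positions, the only substantial point being surjectivity of $\gamma$. Injectivity of $i$ is immediate. For exactness at $\A$ I would use the explicit formula $\gamma(\mathfrak{a})=\big({\rm ev}_\U(\mathfrak{a}),{\rm ev}_\rz(\mathfrak{a})\big)$: it shows that ${\rm Ker}(\gamma)={\rm Ker}({\rm ev}_\U)\cap{\rm Ker}({\rm ev}_\rz)$, which by Proposition \ref{prop ker} is $\A(\mathfrak{r}_0)\cap\A(\mathfrak{u}_0)$, and this intersection has already been identified with the algebra $\mathcal{K}$ of compact operators on $\ell^2(\Z^2)$ (adapting \cite[Lemma 3,4]{Guo}) in the proof of the computation of $K_\bullet(\mathfrak{I})$. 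Hence ${\rm Ker}(\gamma)={\rm Im}(i)$.

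For exactness at $\A_{\U,\rz}$ I would exploit the pullback square \eqref{pullback}. Since ${\rm ev}_{\U,b}$ and ${\rm ev}_{\rz,b}$ are surjective, the projection $\pi_\U\colon\A_{\U,\rz}\to\A_\U$ is surjective with ${\rm Ker}(\pi_\U)=\{(\mathfrak{0},\mathfrak{r})\mid{\rm ev}_{\rz,b}(\mathfrak{r})=\mathfrak{0}\}$, and $\pi_\U\circ\gamma={\rm ev}_\U$ is surjective as well. The crucial intermediate claim is
$$
\gamma\big(\A(\mathfrak{r}_0)\big)\;=\;{\rm Ker}(\pi_\U)\;.
$$
To prove it: for $\mathfrak{b}\in\A(\mathfrak{r}_0)={\rm Ker}({\rm ev}_\U)$ one has $\gamma(\mathfrak{b})=(\mathfrak{0},{\rm ev}_\rz(\mathfrak{b}))$, so $\gamma(\A(\mathfrak{r}_0))=\{\mathfrak{0}\}\times{\rm ev}_\rz\big(\A(\mathfrak{r}_0)\big)$; and since ${\rm ev}_\rz$ is a surjective $*$-homomorphism it carries the closed two-sided ideal $\A(\mathfrak{r}_0)$ onto the closed two-sided ideal of $\A_\rz$ generated by ${\rm ev}_\rz(\mathfrak{r}_0)$. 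That element is precisely the projection generating the Iwatsuka interface inside $\A_\rz$, and the ideal it generates is exactly ${\rm Ker}({\rm ev}_{\rz,b})$ --- this is the $\rz$-analogue of the identity ``${\rm ev}_\U(\A(\mathfrak{u}_0))=$ Iwatsuka interface'' already used in the K-theory computation, and follows from the description of $\A_\rz$ and its bulk evaluation in \cite[Section 4]{Deni1}. Granting the claim, surjectivity follows by a one-line diagram chase: given $(\mathfrak{u},\mathfrak{r})\in\A_{\U,\rz}$, pick $\mathfrak{a}_0\in\A$ with ${\rm ev}_\U(\mathfrak{a}_0)=\mathfrak{u}$; then $(\mathfrak{u},\mathfrak{r})-\gamma(\mathfrak{a}_0)\in{\rm Ker}(\pi_\U)=\gamma(\A(\mathfrak{r}_0))$, so it equals $\gamma(\mathfrak{b})$ for some $\mathfrak{b}\in\A(\mathfrak{r}_0)$, and thus $\gamma(\mathfrak{a}_0+\mathfrak{b})=(\mathfrak{u},\mathfrak{r})$.

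The main obstacle is this surjectivity step --- specifically the identity ${\rm ev}_\rz(\A(\mathfrak{r}_0))={\rm Ker}({\rm ev}_{\rz,b})$, \ie that the ``right half-line'' component of the interface of $\A$ maps onto the whole bulk--interface ideal of the Iwatsuka algebra $\A_\rz$. Everything else (injectivity of $i$, the kernel computation, and the final chase) is formal once Proposition \ref{prop ker} and the earlier identification $\mathcal{K}\simeq\A(\mathfrak{r}_0)\cap\A(\mathfrak{u}_0)$ are available.
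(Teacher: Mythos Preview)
Your proposal is correct and follows essentially the same strategy as the paper: both arguments prove surjectivity of $\gamma$ by lifting and then correcting, with the key input being the identification ${\rm Ker}({\rm ev}_{\rz,b})=\A_\rz({\rm ev}_\rz(\mathfrak{r}_0))$ (and its $\U$-analogue) coming from \cite[Proposition 4.6]{Deni1}, and both compute ${\rm Ker}(\gamma)$ via Proposition \ref{prop ker} together with $\A(\mathfrak{r}_0)\cap\A(\mathfrak{u}_0)\simeq\mathcal{K}$. Your diagram-chase packaging (lift once along $\pi_\U$, then correct inside ${\rm Ker}(\pi_\U)=\gamma(\A(\mathfrak{r}_0))$) is a bit more streamlined than the paper's symmetric two-lift argument, but the substance is identical.
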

\begin{proof}

Let $(\mathfrak{x',y'})\in \A_{\U,\rz}$ and consider two arbitrary elements $\mathfrak{x}, \mathfrak{y}\in \A$ such that  ${\rm ev}_\U(\mathfrak{x})=\mathfrak{x}'$ and ${\rm ev}_\rz(\mathfrak{y})=\mathfrak{y}'.$ Since by definition ${\rm ev}_{\U,b}(\mathfrak{x'})={\rm ev}_{\rz,b}(\mathfrak{y'})$, then from the diagram \ref{pull seq} one has
\begin{equation*}
    \begin{split}
        {\rm ev}_{\U,b}\big({\rm ev}_\U(\mathfrak{x}-\mathfrak{y})\big)\;&=\;{\rm ev}_{\U,b}(\mathfrak{x'})-{\rm ev}_{\U,b}\big({\rm ev}_\U(\mathfrak{y})\big)\;=\;{\rm ev}_{\U,b}(\mathfrak{x}')-{\rm ev}_{\rz,b}\big({\rm ev}_\rz(\mathfrak{y})\big)\\
        &=\;{\rm ev}_{\U,b}(\mathfrak{x}')-{\rm ev}_{\rz,b}(\mathfrak{y}')\;=\;(\mathfrak{0},\mathfrak{0}).
    \end{split}
\end{equation*}
Therefore, ${\rm ev}_\U(\mathfrak{x}-\mathfrak{y})\in {\rm Ker}({\rm ev}_{\U,b})= \A_\U({\rm ev}_\U(\mathfrak{u}_0))$, where $\A_\U({\rm ev}_\U(\mathfrak{u}_0))$ denotes the two-sided ideal generated by ${\rm ev}_\U(\mathfrak{u}_0)$ in $\A_\U$. The same argument provides that ${\rm ev}_\rz(\mathfrak{x}-\mathfrak{y})\in {\rm Ker}({\rm ev}_{\rz,b})= \A_\rz({\rm ev}_\rz(\mathfrak{r}_0))$.\footnote{The equalities ${\rm Ker}({\rm ev}_{\U,b})=\A_{\U}({\rm ev}_{\U}(\mathfrak{u}_0))$ and ${\rm Ker}({\rm ev}_{\rz,b})=\A_{\rz}({\rm ev}_{\rz}(\mathfrak{r}_0))$ follow from \cite[Proposition 4.6]{Deni1}.}
In light to the Propositions \ref{prop 3,2} and \ref{prop ker}, for some $\mathfrak{r}'\in \A(\mathfrak{r}_0)$ and $\mathfrak{u}'\in\A(\mathfrak{u}_0)$ one has
$$\mathfrak{x}-\mathfrak{y}+\mathfrak{r}'\in\A(\mathfrak{u}_0)\;,\qquad \mathfrak{x}-\mathfrak{y}+\mathfrak{u}'\in\A(\mathfrak{r}_0)\;. $$
Hence there is $\mathfrak{a}\in \A(\mathfrak{r}_0)\cap \A(\mathfrak{u}_0)$ such that $$\mathfrak{x}+\mathfrak{r}'=\mathfrak{y}-\mathfrak{u}'+\mathfrak{a}\;.$$
Choosing $\mathfrak{z}=\mathfrak{x}+\mathfrak{r}'$ one obtains that
\begin{equation*}
    \begin{split}
        \gamma(\mathfrak{z})\;&=\;\big(\,{\rm ev}_\U(\mathfrak{z}),\,{\rm ev}_\rz(\mathfrak{z})\,\big)\;=\;\big(\,{\rm ev}_\U(\mathfrak{x}+\mathfrak{r}'),\,{\rm ev}_\rz(\mathfrak{y}-\mathfrak{u}'+\mathfrak{a})\,\big)\\
        &=\;\big(\,{\rm ev}_\U(\mathfrak{x}),\,{\rm ev}_\rz(\mathfrak{y})\,\big)=(\mathfrak{x',y'})\;.
    \end{split}
\end{equation*}
This verifies that $\gamma$ is a surjective $*$-homomorphism. As a consequence of that $${\rm Ker}(\gamma)\;=\;{\rm Ker}({\rm ev}_\U)\cap {\rm Ker}({\rm ev}_\rz)\;=\; \A(\mathfrak{u}_0)\cap \A(\mathfrak{r}_0)\;\simeq\; \mathcal{K}\;,$$
it follows that ${\rm Ker}(\gamma)={\rm Im}(i)$ and this concludes the proof.
\end{proof}
As an immediate consequence of Theorem \ref{Teo seq}, one can deduce that the asymptotic algebra contains the information of the system that is far from the corner.
Moreover, by Atkinson's Theorem, we obtain the following Proposition.
\begin{proposition}\label{prop: fred}
An element $\mathfrak{t}\in \A$ is Fredholm if and only if ${\rm ev}_\rz(\mathfrak{t})$ and ${\rm ev}_\U(\mathfrak{t})$ are invertible elements in $\A_\rz$ and $\A_\U$, respectively.
\end{proposition}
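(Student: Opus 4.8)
The plan is to apply Atkinson's theorem directly to the exact sequence of Theorem \ref{Teo seq}, using the fact that $\mathcal{K}$ is precisely the ideal of compact operators and $\gamma$ is the quotient map onto $\A_{\U,\rz}$. Recall Atkinson's theorem: an operator $\mathfrak{t}$ acting on $\ell^2(\Z^2)$ is Fredholm if and only if its image in the Calkin algebra $\mathcal{B}(\ell^2(\Z^2))/\mathcal{K}$ is invertible. Since $\A\subset \mathcal{B}(\ell^2(\Z^2))$ and, by Theorem \ref{Teo seq}, $\mathcal{K}\subset\A$ with $\A/\mathcal{K}\simeq\A_{\U,\rz}$ via $\gamma$, an element $\mathfrak{t}\in\A$ that is invertible modulo $\mathcal{K}$ in $\mathcal{B}(\ell^2(\Z^2))$ is already invertible modulo $\mathcal{K}$ inside $\A$ (invertibility of $\gamma(\mathfrak{t})$ in the $C^*$-subalgebra $\A/\mathcal{K}$ is equivalent to invertibility in the ambient Calkin algebra, because a unital $C^*$-subalgebra is inverse-closed). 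Hence $\mathfrak{t}\in\A$ is Fredholm if and only if $\gamma(\mathfrak{t})$ is invertible in $\A_{\U,\rz}$.

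The next step is to translate invertibility of $\gamma(\mathfrak{t})=\big({\rm ev}_\U(\mathfrak{t}),{\rm ev}_\rz(\mathfrak{t})\big)$ in the pullback algebra $\A_{\U,\rz}$ into the stated conditions. Here I would use the general fact that for a pullback $C^*$-algebra $\A_{\U,\rz}=\{(\mathfrak{u},\mathfrak{r})\in\A_\U\oplus\A_\rz\mid {\rm ev}_{\U,b}(\mathfrak{u})={\rm ev}_{\rz,b}(\mathfrak{r})\}$, an element $(\mathfrak{u},\mathfrak{r})$ is invertible in $\A_{\U,\rz}$ precisely when $\mathfrak{u}$ is invertible in $\A_\U$ and $\mathfrak{r}$ is invertible in $\A_\rz$. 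Indeed, if both components are invertible, then $(\mathfrak{u}^{-1},\mathfrak{r}^{-1})$ again satisfies the matching condition because ${\rm ev}_{\U,b}$ and ${\rm ev}_{\rz,b}$ are unital $*$-homomorphisms, so they respect inverses: ${\rm ev}_{\U,b}(\mathfrak{u}^{-1})={\rm ev}_{\U,b}(\mathfrak{u})^{-1}={\rm ev}_{\rz,b}(\mathfrak{r})^{-1}={\rm ev}_{\rz,b}(\mathfrak{r}^{-1})$, so $(\mathfrak{u}^{-1},\mathfrak{r}^{-1})\in\A_{\U,\rz}$ is a two-sided inverse of $(\mathfrak{u},\mathfrak{r})$. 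Conversely, if $(\mathfrak{u},\mathfrak{r})$ is invertible in $\A_{\U,\rz}$, then applying the (unital) coordinate projections $\pi_\U$ and $\pi_\rz$ shows that $\mathfrak{u}=\pi_\U(\mathfrak{u},\mathfrak{r})$ and $\mathfrak{r}=\pi_\rz(\mathfrak{u},\mathfrak{r})$ are invertible in $\A_\U$ and $\A_\rz$ respectively, again using that these $C^*$-subalgebras are inverse-closed in their ambient von Neumann (or $C^*$-)algebras.

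Combining the two steps: $\mathfrak{t}\in\A$ is Fredholm $\iff$ $\gamma(\mathfrak{t})$ invertible in $\A_{\U,\rz}$ $\iff$ ${\rm ev}_\U(\mathfrak{t})$ invertible in $\A_\U$ and ${\rm ev}_\rz(\mathfrak{t})$ invertible in $\A_\rz$, which is exactly the claim. The only genuinely delicate point — and the one I would state carefully rather than gloss over — is the inverse-closedness used twice above: that invertibility of an element of a unital $C^*$-subalgebra can be tested in the subalgebra itself, which for $C^*$-algebras follows from the continuous functional calculus (spectrum does not depend on the ambient $C^*$-algebra). Everything else is a formal manipulation with the exact sequence of Theorem \ref{Teo seq} and the universal property of the pullback already recorded in diagram \eqref{pull seq}.
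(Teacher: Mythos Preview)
Your proof is correct and follows exactly the approach indicated in the paper, which simply states that the result is an immediate consequence of Atkinson's Theorem together with the exact sequence of Theorem~\ref{Teo seq}. Your proposal merely spells out the two routine details the paper leaves implicit: inverse-closedness of unital $C^*$-subalgebras, and that invertibility in the pullback $\A_{\U,\rz}$ is equivalent to componentwise invertibility.
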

\medskip
\begin{corollary}\label{Coro seq}
It holds true that $K_0(\A_{\U,\rz})\simeq K_0(\A)$ and $K_1(\A_{\U,\rz})\simeq K_1(\A)\oplus K_0(\K)$ as abelian groups.
\end{corollary}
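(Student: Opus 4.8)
The plan is to feed the short exact sequence of Theorem~\ref{Teo seq} into the six-term exact sequence in $K$-theory, exploiting that $\mathcal{K}$ is the algebra of compact operators: $K_0(\mathcal{K})\simeq\Z$ is generated by the class of the rank-one projection $\mathfrak{z}_0$, and $K_1(\mathcal{K})=0$. Substituting these values collapses the cyclic diagram to the exact chain
\[
K_0(\mathcal{K})\xrightarrow{i_*}K_0(\A)\xrightarrow{\gamma_*}K_0(\A_{\U,\rz})\xrightarrow{{\mathtt{exp}}}K_1(\mathcal{K})=0\xrightarrow{}K_1(\A)\xrightarrow{\gamma_*}K_1(\A_{\U,\rz})\xrightarrow{{\mathtt{ind}}}K_0(\mathcal{K}),
\]
with $i\colon\mathcal{K}\hookrightarrow\A$ the inclusion. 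Everything then reduces to identifying the map $i_*\colon K_0(\mathcal{K})\to K_0(\A)$.

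The key step is to prove $i_*=0$. For this I would use the disjoint decomposition $\mathtt{R}_0=\{(0,0)\}\sqcup\mathtt{R}_{(1,0)}$, which gives the orthogonal sum of projections $\mathfrak{r}_0=\mathfrak{z}_0+\mathfrak{r}_{(1,0)}$ inside $\mathfrak{F}_B\subset\A$, together with $\mathfrak{s}_1\mathfrak{r}_0\mathfrak{s}_1^*=\mathfrak{r}_{(1,0)}$, which follows from \eqref{equ 2.6}. Since $\mathfrak{s}_1$ is a unitary of $\A$, conjugation by it preserves $K_0$-classes, so in $K_0(\A)$ one has
\[
i_*[\mathfrak{z}_0]\;=\;[\mathfrak{z}_0]\;=\;[\mathfrak{r}_0]-[\mathfrak{r}_{(1,0)}]\;=\;[\mathfrak{r}_0]-[\mathfrak{s}_1\mathfrak{r}_0\mathfrak{s}_1^*]\;=\;0\;.
\]
(Equivalently, the same vanishing already appears inside $K_0(\mathfrak{I})$ in the computation of the $K$-theory of $\mathfrak{I}$, and propagates to $\A$ through $\mathcal{K}\hookrightarrow\mathfrak{I}\hookrightarrow\A$ by functoriality.)

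Once $i_*=0$ is in hand, the remainder is bookkeeping with the six-term sequence. On the $K_0$ side, ${\rm Ker}(\gamma_*)={\rm Im}(i_*)=0$ and $\gamma_*$ is onto since the following map lands in $K_1(\mathcal{K})=0$; hence $\gamma_*\colon K_0(\A)\xrightarrow{\ \sim\ }K_0(\A_{\U,\rz})$. On the $K_1$ side, ${\rm Im}({\mathtt{ind}})={\rm Ker}(i_*)=K_0(\mathcal{K})$, while $\gamma_*\colon K_1(\A)\to K_1(\A_{\U,\rz})$ is injective because its predecessor in the chain is $K_1(\mathcal{K})=0$; this produces a short exact sequence
\[
0\;\longrightarrow\;K_1(\A)\;\xrightarrow{\ \gamma_*\ }\;K_1(\A_{\U,\rz})\;\xrightarrow{\ {\mathtt{ind}}\ }\;K_0(\mathcal{K})\;\longrightarrow\;0
\]
which splits since $K_0(\mathcal{K})\simeq\Z$ is free, giving $K_1(\A_{\U,\rz})\simeq K_1(\A)\oplus K_0(\mathcal{K})$. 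I do not anticipate a genuine obstacle: the only substantive point is the vanishing of $i_*$, and its two ingredients ($\mathfrak{r}_0=\mathfrak{z}_0+\mathfrak{r}_{(1,0)}$ and the unitarity of $\mathfrak{s}_1$) are already available from Section~2.
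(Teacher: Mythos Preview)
Your proposal is correct and follows essentially the same approach as the paper: both run the six-term exact sequence from Theorem~\ref{Teo seq}, prove $i_*[\mathfrak{z}_0]=0$ via the identity $[\mathfrak{z}_0]=[\mathfrak{r}_0]-[\mathfrak{s}_1\mathfrak{r}_0\mathfrak{s}_1^*]$, and read off the two isomorphisms. Your version is in fact slightly more explicit than the paper's in spelling out why the resulting short exact sequence for $K_1$ splits (freeness of $K_0(\mathcal{K})\simeq\Z$).
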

\begin{proof}
The Theorem \ref{Teo seq} implies the following exact sequence
\begin{equation}\label{exac fin}
    \xymatrix{
K_0(\mathcal{K})= \Z[\mathfrak{z}_0]\ar[r]^{\;\;\;\;\;\;i_*} & K_0(\A) \ar[r]^{\gamma_*\;\;\;}& K_0(\A_{\U,\rz})\ar[d]^{{\mathtt{exp}}} \\
K_1(\A_{\U,\rz})\ar[u]^{{\mathtt{ind}}} &K_1(\A)
\ar[l]^{\;\;\;\;\gamma_*} &0 \ar[l]^{i_*} }
\end{equation} 
Notice that $i_*([\mathfrak{z}_0])=[\mathfrak{z}_0]=[\mathfrak{r}_0-\mathfrak{s}_1\mathfrak{r}_0\mathfrak{s}_1^*]=[\mathfrak{r}_0]-[\mathfrak{s}_1\mathfrak{r}_0\mathfrak{s}_1^*]=0$, and consequently $i_*\colon \Z[\mathfrak{z}_0]\to K_0( \A)$ is the zero map. Therefore, $\gamma_*\colon  K_0(\A)\to  K_0(\A_{\U,\rz})$ is an isomorphism of groups and as a result $K_0(\A_{\U,\rz})\simeq K_0(\A)=\Z^3$. Furthermore, the same argument provides $K_1(\A_{\U,\rz})\simeq K_1(\A)\oplus K_0(\mathcal{K}) = \Z^3$.

\end{proof}

\subsection{Corner Invariants}
Thanks to the Theorem \ref{Teo seq}, we have the short exact sequence
 \begin{equation}\label{seq corner}
     \xymatrix{
 0\ar[r]&\mathcal{K}\otimes C(\mathbb{T})\ar[r]^{} & \A\otimes C(\mathbb{T}) \ar[r]^{}& \A_{\U,\rz}\otimes C(\mathbb{T})\ar[r]&0,}
\end{equation}
\noindent
Let $\hat{\mathfrak{h}}\colon \mathbb{T}\rightarrow\A$ be the time-dependent magnetic Hamiltonian, that is, a self-adjoint element of $\A\otimes C(\mathbb{T})$ so that $\hat{\mathfrak{h}}(0)$ is the full magnetic Hamiltonian $\hat{\mathfrak{h}}$.  We will also consider the \emph{asymptotic Hamiltonian} defined by $\mathfrak{h_a}(t)\equiv(\mathfrak{h}^\U(t),\mathfrak{h}^\rz(t)):=\gamma(\hat{\mathfrak{h}}(t))$.
\medskip

Let $\mathscr{F}^{sa}_*\subset \mathcal{B}(\ell^2(\Z^2))$ be the set of all self-adjoint Fredholm operators such that its essential spectrum is not contained in either $(-\infty,0)$ or $(0,+\infty)$. Let us consider the subspace of $\mathscr{F}^{sa}_*$ given by
$$\mathscr{F}^\infty_*\;:=\;\{ a\in \mathscr{F}^{sa}_*\,|\, \|a\|=1,\;\sigma(a)\mbox{ is finite, and}\;\sigma_{{\rm ess}}(a)=\{\pm 1\}\,\}\;.$$
It is true that the inclusion $i\colon\mathscr{F}^\infty_*\rightarrow\mathscr{F}^{sa}_*$ is an homotopy equivalence \cite{Ati, Phi}.
\medskip

\noindent
\textbf{Asymptotic gap condition (AGC):} Assume that  the Fermi level $\mu$ is not contained in either $\sigma\big(\mathfrak{h}^\U(t)\big)$ or $\sigma\big(\mathfrak{h}^\rz(t)\big)$ for any $t$ in $\mathbb{T}$. We further assume that the spectrum of $\mathfrak{h_a}(t)$ is not contained in either $\R_{<0}$ or $\R_{>0}.$
\medskip

Without loss of generality, in what follows let us assume that the Fermi level $\mu$ is equal to $0.$ From {\bf AGC}, there is a smooth function $f\colon\R\rightarrow[0,1]$ such that $f(\mathfrak{h_a}(t))=\chi_{(-\infty,0]}(\mathfrak{h_a}(t)),$ which is in fact a time-dependent projection in $\A_{\U,\rz}$ and we will denote it by $\mathfrak{p_a}.$ Note also that $\mathfrak{p_a}$ defines an element $[\mathfrak{p_a}]$ in the K-group $K_0\big(\A_{\U,\rz}\otimes C(\mathbb{T})\big)$ and we shall call this class as the \emph{magnetic asymptotic invariant} of the system.
\medskip 

Using {\bf AGC} again and Proposition \ref{prop: fred}, we obtain that the continuous family $\mathfrak{\hat{h}}(t)$ lies in $\mathscr{F}_*^{sa}$ for any $t\in \mathbb{T}$. Thereby the class $[\mathfrak{\hat{h}}]$ in $[\mathbb{T},\,\mathscr{F}^{sa}_*]$ defines an element in 
 $K_1\big(C(\mathbb{T})\big)$ under the identification
 $K_1\big(C(\mathbb{T})\big)\simeq [\mathbb{T},\,\mathscr{F}^{sa}_*]$ given by the \emph{spectral flow}\footnote{The net numbers of eigenvalues crossing $0$ (counting multiplicity) in the positive direction as $t$ goes from $0$ to $1$ \cite{Phi}}   of the family $(\mathfrak{\hat{h}}(t))_{t\in \T}$ (see \cite{Ati, Phi}). We will call the class $[\mathfrak{\hat{h}}]$ as the \emph{magnetic corner invariant} of the system.  
\subsection{Proof of the correspondence} Let us consider the map $\delta_0\colon K_0(\A_{\U,\rz}\otimes C(\T))\rightarrow K_1(C(\mathbb{T}))$ defined as
 
 \begin{equation*}
    \xymatrix{
K_0(\A_{\U,\rz}\otimes C(\mathbb{T}))\ar[rd]^{\delta_0} \ar[r]^{\!{\mathtt{exp}}} &  K_1(\K\otimes C(\mathbb{T}))\ar[d] \\ & K_1(C(\mathbb{T})) }
\end{equation*}
 where the stabilization gives the vertical arrow and the exponential map comes from the sequence \ref{seq corner}. 
Observe that $\delta_0$ is a surjective homomorphism, since in view of the \cite[Exercise 8.B]{Ols},  the six-term exact sequence associated to \ref{seq corner} is equivalent to
\begin{equation*}
    \xymatrix{
K_0(\K)\oplus K_1(\K)\ar[r]^{i_*\oplus i_*} & K_0(\A)\oplus K_1(\A) \ar[r]^{\gamma_*\oplus \gamma_*\;\;\;}& K_0(\A_{\U,\rz})\oplus K_1(\A_{\U,\rz})\ar[d]^{\mathtt{exp}\oplus{\mathtt{ind}}} \\
K_1(\A_{\U,\rz})\oplus K_0(\A_{\U,\rz})\ar[u]^{{\mathtt{ind}\oplus \mathtt{exp}}} &K_1(\A)\oplus K_0(\A)
\ar[l]^{\;\;\;\;\gamma_*\oplus\gamma_*} & K_1(\K)\oplus K_0(\K)\ar[l]^{i_*\oplus i_*} }
\end{equation*} 
Now by Corollary \ref{Coro seq},  one knows that
\begin{equation*}
    K_0(\A_{\U,\rz})\oplus K_1(\A_{\U,\rz})\;=\;K_0(\A)\oplus K_1(\A)\oplus K_0(\K)
\end{equation*}
Thus, for some $N\in \N$ choosing the unitary $u\in {\rm Mat}_N(\A_{\U,\rz})$ such that $\mathtt{ind}([u])=[\mathfrak{z}_0]$  in the sequence  \ref{exac fin} it follows from its very definition that
$\delta_0(\theta([u]))$ is the $\Z$-generator of $K_1(C(\T))$, where $\theta\colon K_1(\A_{\U,\rz})\to K_0\big(C_0((0,1))\otimes \A_{\U,\rz}\big)$ is the map given in \cite[Theorem 7.2.5]{Ols}.
\medskip

 The next Theorem concerns the correspondence between the corner and asymptotic magnetic invariants defined in the quarter-plane system under the map $\delta_0$, which is the main result of this section.
\begin{theorem}\label{teo: corner}
If {\bf AGC} holds, then it is true that $\delta_0([\mathfrak{p_a}])=[\mathfrak{\hat{h}}].$
\end{theorem}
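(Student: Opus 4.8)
The plan is to evaluate $\delta_0([\mathfrak{p_a}])$ straight from the definition of the exponential map of the extension \ref{seq corner}, using the concrete self-adjoint lift $f(\mathfrak{\hat{h}})$ of $\mathfrak{p_a}$, and then to recognise the resulting class in $K_1\big(\mathcal{K}\otimes C(\T)\big)\simeq K_1\big(C(\T)\big)$ as the spectral flow of the loop $\big(\mathfrak{\hat{h}}(t)\big)_{t\in\T}$, which is exactly the corner invariant $[\mathfrak{\hat{h}}]$. For the lift, observe first that since $\A_{\U,\rz}$ embeds into the pullback $\A_\U\oplus\A_\rz$ and the spectrum of an element of a pullback $C^*$-algebra equals the union of the spectra of its two coordinates, {\bf AGC} together with the compactness of $\T$ produces a uniform $\varepsilon>0$ with $(-\varepsilon,\varepsilon)\cap\sigma\big(\mathfrak{h_a}(t)\big)=\emptyset$ for all $t\in\T$. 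As $\ker\gamma=\mathcal{K}$ (Theorem \ref{Teo seq}), one has $\sigma_{\mathrm{ess}}\big(\mathfrak{\hat{h}}(t)\big)=\sigma\big(\gamma(\mathfrak{\hat{h}}(t))\big)=\sigma\big(\mathfrak{h_a}(t)\big)$, so $\sigma\big(\mathfrak{\hat{h}}(t)\big)\cap(-\varepsilon,\varepsilon)$ consists, for each $t$, of finitely many eigenvalues of finite multiplicity. Taking the smooth $f\colon\R\to[0,1]$ of {\bf AGC} locally constant outside $(-\varepsilon,\varepsilon)$ — so that $f(\mathfrak{h_a})=\mathfrak{p_a}$ — and using that $\gamma$ is a $*$-homomorphism, the element $f(\mathfrak{\hat{h}})\in\A\otimes C(\T)$ is a self-adjoint lift of $\mathfrak{p_a}$.

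Next I would compute the exponential map on this lift. Exactly as in Proposition \ref{prop: 4.1} (\cf \cite[Proposition 12.2.2]{Ro}, \cite[Chapter 12]{Ols}), and with the sign convention fixed there, $\mathtt{exp}([\mathfrak{p_a}])=\big[\,\expo{2\pi\ii f(\mathfrak{\hat{h}})}\,\big]$ in $K_1\big(\mathcal{K}\otimes C(\T)\big)$; here the unitary $\mathfrak{u}(t):=\expo{2\pi\ii f(\mathfrak{\hat{h}}(t))}$ does belong to $\big(\mathcal{K}\otimes C(\T)\big)^+$, because $\gamma(\mathfrak{u}(t))=\expo{2\pi\ii\mathfrak{p_a}(t)}=\mathfrak{1}$ forces $\mathfrak{u}(t)-\mathfrak{1}\in\ker\gamma=\mathcal{K}$, and by the previous paragraph $\mathfrak{u}(t)-\mathfrak{1}$ is finite rank and norm-continuous in $t$. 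Transporting this class through the stabilisation isomorphism $K_1\big(\mathcal{K}\otimes C(\T)\big)\simeq K_1\big(C(\T)\big)\simeq\Z$ that enters the definition of $\delta_0$, one gets that $\delta_0([\mathfrak{p_a}])$ is the winding number of the loop $t\mapsto\det\mathfrak{u}(t)\in\T$, the Fredholm determinant of $\mathfrak{u}(t)$.

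Finally I would identify this winding number with a spectral flow. Because $f$ is a smooth monotone transition across the window $(-\varepsilon,\varepsilon)$ and, along $\T$, the only part of the spectrum of $\mathfrak{\hat{h}}(t)$ that can cross $0$ is carried by the finitely many eigenvalues inside that window, the winding number of $t\mapsto\det\expo{2\pi\ii f(\mathfrak{\hat{h}}(t))}$ counts precisely these crossings with sign and multiplicity — this is the standard analytic description of spectral flow (\cf \cite{Ati,Phi} and the corresponding step in \cite[Theorem 3.7]{Hay}). Since $\big(\mathfrak{\hat{h}}(t)\big)_{t\in\T}$ is a loop in $\mathscr{F}^{sa}_*$ by {\bf AGC} and Proposition \ref{prop: fred}, its spectral flow is, by definition, the corner invariant $[\mathfrak{\hat{h}}]$ under $K_1\big(C(\T)\big)\simeq[\T,\mathscr{F}^{sa}_*]$. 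Fixing the orientation of spectral flow compatibly with the normalisation of $\delta_0$ recalled before the statement (the unitary $u$ with $\mathtt{ind}([u])=[\mathfrak{z}_0]$) removes the last sign ambiguity and gives $\delta_0([\mathfrak{p_a}])=[\mathfrak{\hat{h}}]$.

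The step I expect to be the real obstacle is this last identification: matching the $K$-theoretic exponential map with the analytic notion of spectral flow \emph{with all signs and normalisations under control}, so that the equality comes out with coefficient $+1$ exactly as stated rather than merely up to sign. A secondary, purely technical point is the extraction of the uniform essential-spectrum gap from {\bf AGC} and the compactness of $\T$, which is what guarantees that $\expo{2\pi\ii f(\mathfrak{\hat{h}})}-\mathfrak{1}$ is a norm-continuous family of finite-rank operators and hence that the exponential map may legitimately be evaluated on the explicit lift $f(\mathfrak{\hat{h}})$.
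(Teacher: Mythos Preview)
Your argument is correct and complete in outline; the route, however, differs from the paper's in a genuine way worth recording. You stay on the analytic side: you choose the explicit lift $f(\mathfrak{\hat h})$ of $\mathfrak{p_a}$, compute the exponential as the loop of unitaries $\expo{2\pi\ii f(\mathfrak{\hat h}(t))}$ which are finite-rank perturbations of $\mathfrak{1}$, pass to the Fredholm determinant, and identify its winding number with the spectral flow of $(\mathfrak{\hat h}(t))_{t\in\T}$. The paper instead works entirely inside the homotopy theory of self-adjoint Fredholm operators: it lifts $\mathfrak{q_a}=\mathfrak{1}-\mathfrak{p_a}$ by an arbitrary self-adjoint $\mathfrak{l}$, passes to the retract $\mathscr{F}^\infty_*\subset\mathscr{F}^{sa}_*$ of \cite{Ati,Phi} to replace $\mathfrak{l}(t)-\epsilon\mathfrak{1}$ by some $\hat{\mathfrak{l}}(t)\in\mathscr{F}^\infty_*$, and then uses the explicit homotopies of \cite[Propositions~1 and~5]{Phi} to chain $\expo{2\pi\ii\mathfrak{l}}\simeq_h\expo{2\pi\ii\hat{\mathfrak l}}\simeq_h 2\hat{\mathfrak l}-\mathfrak{1}\simeq_h 2\mathfrak{l}-\mathfrak{1}\simeq_h\mathfrak{\hat h}$ as loops in $\mathscr{F}^{sa}_*$. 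In particular the paper never introduces the Fredholm determinant or a winding-number formula; the sign is fixed by the same normalisation of $\delta_0$ that you invoke, but is read off from Phillips' homotopy equivalences rather than from an eigenvalue-crossing count. Your approach is more computational and makes the integer visible as a determinant winding, which is pleasant; the paper's approach buys a cleaner bookkeeping of the homotopy class without needing the finite-rank/determinant apparatus, and applies verbatim to any self-adjoint lift of $\mathfrak{q_a}$ rather than the specific one $f(\mathfrak{\hat h})$. The point you flag as the real obstacle --- matching the $K$-theoretic exponential with spectral flow with the correct sign --- is precisely the content that the paper outsources to \cite[Proposition~5]{Phi}, so your concern is well placed but resolvable by the same reference.
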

\begin{proof}
Let $\mathfrak{l}$ be a selfadjoint lift of $\mathfrak{q_a}:=\mathfrak{1-p_a}$. In light of the Proposition \ref{prop ker} and Theorem \ref{Teo seq}, it holds that $\sigma_{\rm ess}(\mathfrak{l}(t))=\sigma(\mathfrak{q_a}(t))=\{0,1\}$. Consequently, for $0<\epsilon<1$ one has that $\mathfrak{l}(t)-\epsilon\mathfrak{1}\in \mathscr{F}_*^{sa}$ for any $t$, so let $\mathfrak{\hat{l}}(t)\in \mathscr{F}^\infty_*$ be the element given by the image of $\mathfrak{l}(t)-\epsilon \mathfrak{1}$ under the homotopy equivalence $\mathscr{F}^{sa}_*\simeq_h\mathscr{F}_*^\infty.$ It follows that
\begin{equation*}
    \begin{split}
        {\mathtt{exp}}([\mathfrak{p_a}])\;&=\;[\expo{2\pi\ii\mathfrak{l}} ]\;=\;[\expo{2\pi\ii \mathfrak{\hat{l}}}]\;=\;[\mathfrak{2\hat{l}-1}]\;,
    \end{split}
\end{equation*}
where the last step comes from the homotopy equivalence given in \cite[Proposition 5]{Phi}. On the other hand, observe that for all $t\in \mathbb{T}$  the elements $\chi_{(-\infty,0]}(\mathfrak{\hat{h}}(t))\mathfrak{\hat{h}}(t)-\mathfrak{1}$ and $\mathfrak{l}(t)-\mathfrak{2}$ have essential spectrum contained in $(-\infty,0)$, and the elements $\big(\mathfrak{1}-\chi_{(-\infty,0]}(\mathfrak{\hat{h}}(t))\big)\mathfrak{\hat{h}}(t)+\mathfrak{1}$ and $\mathfrak{l}(t)+\mathfrak{1}$
 have essential spectrum contained in $(0,+\infty)$. Since $\mathfrak{l}$ is a lift of $\mathfrak{q_a}$, then  \cite[Proposition 1]{Phi} and Theorem \ref{Teo seq} imply that $\mathfrak{\hat{h}}\simeq_h \mathfrak{2l-1}\simeq_h\mathfrak{2\hat{l}-1}$, where recall that the symbol $\simeq_h$ means homotopy equivalence of loops.  Thus, one concludes that  $\delta_0([\mathfrak{p_a}])=[\mathfrak{\hat{h}}].$
\end{proof}
The following Corollary provides a criterion for the triviality of the corner states. The proof is an immediate consequence of the Theorem \ref{teo: corner}.

\begin{corollary}\label{coro fin}
A necessary condition for non-trivial corner states is that the time-dependent  magnetic Hamiltonian is gapless.
\end{corollary}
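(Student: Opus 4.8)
The plan is to argue by contraposition. Assume that the time-dependent magnetic Hamiltonian $\hat{\mathfrak{h}}$ is \emph{gapped}, \ie that the Fermi level (which we have set to $0$) satisfies $0\notin\sigma(\hat{\mathfrak{h}}(t))$ for every $t\in\mathbb{T}$, and let us show that the magnetic corner invariant $[\hat{\mathfrak{h}}]$ is then trivial, which rules out non-trivial corner states. We keep the standing hypothesis \textbf{AGC}, which is what makes the corner invariant well defined in the first place; note that since $\gamma$ is a $*$-homomorphism one has $\sigma(\mathfrak{h_a}(t))\subseteq\sigma(\hat{\mathfrak{h}}(t))$, so that $0$ is automatically a gap point of the asymptotic Hamiltonian $\mathfrak{h_a}(t)=\gamma(\hat{\mathfrak{h}}(t))$ as well.

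The first step is to build the Fermi projection of $\hat{\mathfrak{h}}$. Since $\mathbb{T}$ is compact and $t\mapsto\hat{\mathfrak{h}}(t)$ is norm-continuous, the distance of $0$ to $\sigma(\hat{\mathfrak{h}}(t))$ is bounded below by a positive constant uniformly in $t$; choosing the smooth function $f\colon\R\to[0,1]$ used in the definition of $\mathfrak{p_a}$ with transition region inside this common gap, continuous functional calculus produces an honest projection $\mathfrak{P}:=f(\hat{\mathfrak{h}})=\chi_{(-\infty,0]}(\hat{\mathfrak{h}})\in\A\otimes C(\mathbb{T})$. Applying the surjection $\gamma$ of Theorem \ref{Teo seq}, which commutes with the functional calculus, yields $\gamma(\mathfrak{P})=f(\gamma(\hat{\mathfrak{h}}))=f(\mathfrak{h_a})=\mathfrak{p_a}$. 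In other words, the magnetic asymptotic invariant $[\mathfrak{p_a}]\in K_0(\A_{\U,\rz}\otimes C(\mathbb{T}))$ lifts along $\gamma$ to the \emph{projection} $\mathfrak{P}$.

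Now I invoke the standard computation of the exponential map on a class admitting a projection lift: taking $\mathfrak{P}$ itself as a self-adjoint lift of $\mathfrak{p_a}$ gives $\exp(2\pi\ii\mathfrak{P})=\mathfrak{1}$, because $\mathfrak{P}^2=\mathfrak{P}$, so by \cite[Proposition 12.2.2]{Ro} the image of $[\mathfrak{p_a}]$ under the exponential map of the sequence \ref{seq corner} is trivial in $K_1(\mathcal{K}\otimes C(\mathbb{T}))$. Since $\delta_0$ is, by its very definition, this exponential map followed by the stabilization isomorphism onto $K_1(C(\mathbb{T}))$, we conclude $\delta_0([\mathfrak{p_a}])=0$, and Theorem \ref{teo: corner} then forces $[\hat{\mathfrak{h}}]=\delta_0([\mathfrak{p_a}])=0$. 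Passing to the contrapositive: if the corner invariant is non-trivial, \ie there are non-trivial corner states, then necessarily $0\in\sigma(\hat{\mathfrak{h}}(t))$ for some $t\in\mathbb{T}$, which is exactly the assertion that $\hat{\mathfrak{h}}$ is gapless.

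I expect the only slightly delicate point to be the construction of $\mathfrak{P}$ as a continuous (in $t$) element of $\A\otimes C(\mathbb{T})$ and the compatibility of $\gamma$ with the functional calculus; both become routine once the uniform spectral gap over the compact parameter space $\mathbb{T}$ has been recorded. Everything else is an immediate consequence of Theorem \ref{teo: corner} together with the vanishing of the exponential map on projection-liftable classes.
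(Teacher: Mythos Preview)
Your argument is correct and is precisely the unpacking of what the paper means by ``immediate consequence of Theorem \ref{teo: corner}'': if $\hat{\mathfrak{h}}$ were gapped at the Fermi level, the spectral projection $\chi_{(-\infty,0]}(\hat{\mathfrak{h}})\in\A\otimes C(\mathbb{T})$ would be a projection lifting $\mathfrak{p_a}$ along $\gamma$, forcing $\delta_0([\mathfrak{p_a}])=0$ and hence $[\hat{\mathfrak{h}}]=0$ by Theorem \ref{teo: corner}. The paper gives no further details beyond citing that theorem, so your write-up matches its intended route.
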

\begin{remark}
When $b_\llcorner=0$ and $b_\ast\notin 2\pi \Z$, one can consider  time-dependent Hamiltonians of the form 
    $\mathfrak{\hat{h}}(t)=\mathfrak{q}_0^\bot+\mathfrak{q}_0\mathfrak{h}(t)\mathfrak{q}_0$
where $\mathfrak{h}(t)$ is a selfadjoint element in $\A$. Since $\mathfrak{q}_0\mathfrak{h}(t)\mathfrak{q}_0$ can be regarded as a operator acting on  $\ell^2(\N\times \N)$ and moreover ${\rm sf}\big((\mathfrak{\hat{h}}(t))_{t\in \T}\,|\,\mu=0\big)\,=\,{\rm sf}\big((\mathfrak{q}_0\mathfrak{h}(t)\mathfrak{q}_0)_{t\in \T}\,|\,\mu=0\big)$\footnote{Along this work the notation ${\rm sf}\big((\mathfrak{h}(t))_{t\in \T}\,|\,\mu=0\big)$ stands for the spectral flow at $\mu=0$ of the family $(\mathfrak{h}(t))_{t\in \T}$}, then the constructions introduced in \cite[Section 4]{Hay}  provides examples of corner states. However, in order to adapt the constructions of \cite{Hay} the Hamiltonians must belong to $ {\rm Mat}_2(\A)$, but by stability of the K-theory,  the Theorem \ref{teo: corner} also applies for these cases.
\end{remark}
Let us describe two examples where the Theorem \ref{teo: corner} applies.
\begin{example} Consider the selfadjoint element $\hat{\mathfrak{h}}:=\mathfrak{r}_0-\mathfrak{r}_0^\bot$ in $\A$. First of all, notice that $\sigma(\hat{\mathfrak{h}})=\{-1,1\}$ and moreover
$$\mathfrak{s}_1\hat{\mathfrak{h}}\mathfrak{s}_1^*\;=\;\mathfrak{r}_{(1,0)}-\mathfrak{z}_0-\mathfrak{r}_0^\bot\;.$$
    Then for $t\in [0,1]$, define
\begin{equation*}
\mathfrak{\hat{h}}(t)\;=\;\mathfrak{u}^*(t)\begin{pmatrix}
\mathfrak{r}_{(1,0)}+(\mathfrak{1}-\mathfrak{2}t)\mathfrak{z}_0-\mathfrak{r}_0^\bot& \mathfrak{0}\\
\mathfrak{0}& \mathfrak{1}
\end{pmatrix}\mathfrak{u}(t)\in  {\rm Mat}_2(\A)
\end{equation*}
where $\mathfrak{u}(t)$ is a continuous unitary path so that
\begin{equation*}
    \mathfrak{u}(0)\;=\;\begin{pmatrix}\mathfrak{1}& \mathfrak{0}\\
    \mathfrak{0}&\mathfrak{1}\end{pmatrix}\;,\qquad \mathfrak{u}(1)\;=\;\begin{pmatrix}\mathfrak{s}_1& \mathfrak{0}\\
    \mathfrak{0}&\mathfrak{s}_1^*\end{pmatrix}
\end{equation*}
It turns out that $\mathfrak{\hat{h}}(0)=\mathfrak{\hat{h}}(1)$, that is, $\mathfrak{\hat{h}}(\cdot)\in {\rm Mat}_2(\A)\otimes C(\T)$. Furthermore, for all $t$ one has $\sigma(\mathfrak{\hat{h}}(t))=\{-1, t,1\}$ and $\sigma_{\rm ess}(\mathfrak{\hat{h}}(t))=\{-1,1\}$. As a result, this Hamiltonian meets ${\bf AGC}$ and hence Theorem \ref{teo: corner} guarantee non-trivial topological corner states since one can see that ${\rm sf}\big(\mathfrak{\hat{h}}(t))_{t\in [0,1]}\;|\;\mu=0\big)= -1.$
\end{example}
\begin{example} For piezoelectric materials, if we add a suitable periodic perturbation along the positive side of the $y$-axis\footnote{As a consequence of that $\mathfrak{u}_n$ and $\mathfrak{r}_n$ are in $\A$ for all $n\in \Z^2$, we can choose any other semi-axis given by the range of these projectors.} the corner states are in correspondence with the  \emph{polarization} in the edge of such a semi-axis. In order to see the latter, let us first say a few words about the \emph{piezoelectric effect}.\\
It is well known that deformations in piezoelectric materials lead to accumulation of charge in the edge of the sample, which are due to two contributions: the relative displacements of the ionic cores,  and electrical conduction which is the so-called \emph{orbital polarization}. Here we will be dealing only with the latter contribution in dimension $1$ for a discrete non-random system where the interaction between particles are neglected. In this case the Hilbert space and the observable algebra turn out to be $\ell^2(\Z)$ and  $C^*(S)\simeq C(\T)$, respectively, where $S$ stands for the usual shift operator on $\ell^2(\Z)$ (\cf \cite{Dani,PRO}). At fixed Fermi level $\mu=0$, the periodic deformation of the system is modelled by a differentiable path of selfadjoint elements $[0,1]\ni t\mapsto H(t)\in C^*(S)$ so that $H(0)=H(1)$, $0\notin \sigma(H(t))$ for all $t$, and there are states above and below of $0$ during the deformation, \ie, the instantaneous Fermi projection $P(t)=\chi_{(-\infty,0]}(H(t))$ is different to the zero or identity operator for all $t$. Then one knows from \cite{Dani, KSV, Pan, Shulz} that, up to arbitrarily small corrections in the adiabatic limit, the orbital polarization $\Delta \mathscr{P}$ accumulated during one adiabatic cycle is given by
\begin{equation}\label{KSV2}
    \Delta\mathscr{P}\;=\;\ii\int_0^1 {\rm d}t\,\mathscr{T}\big( P(t) [\, \partial_t P(t), \,\nabla P(t) \,]\big)
\end{equation}
where $\mathscr{T}$ denotes the trace per unit volume on $C^*(S)$ and $\nabla=\ii [N,\;\cdot\;]$ is the commutator with the position operator. It is important to point out that equation \ref{KSV2} states that, for periodic deformation, the orbital polarization is a bulk effect the topological nature taking values in $2\pi \Z$ \cite[Corollary 1]{Dani}, as noted by Thouless \cite{Thou} in a more restricted context. Now, let us consider the path $[0,1]\ni t\mapsto \hat{H}(t)\in C^*(\hat{S})$ given by the truncation of  $H(t)$ on the Hilbert space $\ell^2(\N)$. From the bulk-boundary correspondence \cite[Theorem 5.5.3 and Section 7.7]{PRO}, it follows that
\begin{equation}
    \Delta \mathscr{P}\;=\;-2\pi\,{\rm sf}\big(\hat{H}(t))_{t\in [0,1]}\;|\;\mu=0\big)
\end{equation}
Now let us connect the latter result with the corner states. For that, we know that there exists a continuous path $[0,1]\in t\mapsto \mathfrak{\hat{h}}_1(t)\in \A$ of selfadjoint elements such that $\mathfrak{u}_0\mathfrak{\hat{h}}_1(t)\mathfrak{u}_0=\hat{H}(t)$ as operators acting on $\ell^2(\N).$ The above is a consequence of that $\mathfrak{u}_0\mathfrak{s}_j\mathfrak{u}_0=\delta_{2,j}\hat{S}$ acting on $\ell^2(\N)$ for $j=1,2$. Here $\hat{S}$ denotes the truncation of the shift operator $S$ on $\ell^2(\N).$ Define
\begin{equation*}
    \mathfrak{\hat{h}}(t)\;:=\;\mathfrak{u}_0\mathfrak{\hat{h}}_1(t)\mathfrak{u}_0+\mathfrak{u}_0^\bot\mathfrak{\hat{h}}_2\mathfrak{u}^\bot_0\in \A\;,\qquad t\in [0,1]
\end{equation*}
for some suitable selfadjoint element $\mathfrak{\hat{h}}_2\in \A$ such that $\mathfrak{u}_0^\bot\mathfrak{\hat{h}}_2\mathfrak{u}^\bot_0$ is invertible as operator acting on  $\ell^2\big(\Z^2\setminus (\{0\}\times \N)\big)$. By using the decomposition $\ell^2(\Z^2)=\ell^2(\N)\oplus \ell^2\big(\Z^2\setminus (\{0\}\times \N)\big)$ one gets for all $t$ $$\sigma(\hat{\mathfrak{h}}(t))\;=\;\sigma(\mathfrak{u}_0\mathfrak{\hat{h}}_1(t)\mathfrak{u}_0)\cup \sigma (\mathfrak{u}_0^\bot\mathfrak{\hat{h}}_2\mathfrak{u}^\bot_0)\;=\;\sigma(\hat{H}(t))\cup \sigma (\mathfrak{u}_0^\bot\mathfrak{\hat{h}}_2\mathfrak{u}^\bot_0).$$
Therefore, the gap assumption on the deformation verifies that
$\mathfrak{\hat{h}}(t)$ meets {\bf AGC}. Furthermore,
\begin{equation*}
\begin{split}
   {\rm sf}\big(\mathfrak{\hat{h}}(t))_{t\in [0,1]}\;|\;\mu=0\big)\;&=\;{\rm sf}\big(\mathfrak{u}_0\mathfrak{\hat{h}}_1(t)\mathfrak{u}_0)_{t\in [0,1]}\;|\;\mu=0\big)\\
   &=\;{\rm sf}\big(\hat{H}(t)_{t\in [0,1]}\;|\;\mu=0\big)\\
   &=\;-\frac{1}{2\pi}\Delta\mathscr{P}
\end{split}
\end{equation*}
Thus, the Theorem \ref{teo: corner} implies that $\delta_0([\mathfrak{p_a}])=-\frac{1}{2\pi}\Delta\mathscr{P}$ and the claim follows.\\
Let us observe that this type of perturbations of the system depends implicitly on the geometry of the quarter plane. Namely, as we proved in Proposition \ref{Prop projections}, the projector $\mathfrak{u}_0$ lies in the algebra $\A$ as consequence of that $b_\llcorner-b_\ast\notin 2\pi\Z$ and the definition of the quarter-plane magnetic field $B.$
\end{example}


\providecommand{\href}[2]{#2}


\begin{thebibliography}{10}

 \bibitem{Ati} M. Atiyah and I. Singer, {\sl Index theory for skew-adjoint Fredholm operators}. Inst. Hautes Etudes Sci. Publ. Math., {\bf 37}, 5-26, (1969).
 \bibitem{AVR1} J. Avron, {\em Colored Hofstadter butterflies, Multiscale Methods in Quantum Mechanics: Theory and Experiments}, Birkhäuser, (2004).
\bibitem{Avr} J. Avron, P.H.M. Mouche and B. Simon, {\sl On the measure of the spectrum for the almost Mathieu operator.} Commun.Math. Phys. {\bf 132}, 103-118 (1990).




 
\bibitem{Bel1} J. Bellissard, {\sl K-theory of $C^*$-algebras in solid state physics. Statistical mechanics and field theory: mathematical aspects}. Springer, Berlin, Heidelberg, 99-156, (1986). 
\bibitem{Bel} J. Bellissard, A. van Elst and H. Schulz-Baldes, {\sl The non-commutative geometry of the quantum Hall effect}. J. Math. Phys., {\bf 35}, 5373-5451, (1994).

\bibitem{Bla} {E. B. Blackadar,} {\em  K-Theory for Operator Algebras}. Cambridge University Press, Cambridge, (1998).

\bibitem{Cho} MD. Choi, G. A. Elliott and N. Yui, {\sl  Gauss polynomials and the rotation algebra}. Invent Math {\bf 99}, 225-246 (1990). 
\bibitem{Com} J.-M. Combes and F. Germinet, {\sl Edge and impurity effects on quantization of Hall currents}, Commun. Math. Phys. {\bf 256}, 159-180 (2005)
\bibitem{Con}  A. Connes, {\em Noncommutative Geometry}. Academic Press, San Diego, (1994).
\bibitem{Dav}K. R. Davidson,  {\em $C^*$-Algebras by Example.} Am. Math. Soc., Providence, (1996).
\bibitem{Deni1} 
{G. De Nittis and E. Gutiérrez, } 
{\sl Quantization of Edge Currents Along Magnetic Interfaces: A K-Theory Approach}. Acta Appl. Math., {\bf 175}, no.~6, (2021).
\bibitem{Deni2} G. De Nittis and M. Lein, {\em Linear Response Theory. An Analytic-Algebraic Approach}. Springer Briefs in Mathematical Physics, vol. 21., Springer, Berlin (2017).
\bibitem{Dani} G. De Nittis and D. Polo Ojito, {\em Topological polarization in disordered systems}. Preprint 	arXiv:2210.09860 (2022)
\bibitem{Deni3} G. De Nittis and H. Schulz-Baldes, {\sl Spectral flows associated to flux tubes}. Ann. Henri Poincaré, {\bf 17}, 1-35, (2016).
\bibitem{Dom} G. Dombrowski, F. Germinet and G. Raikov, {\sl Quantization of edge currents along magnetic barriers and magnetic guides}. Ann. Henri Poincaré, {\bf 12}, 1169-1197, (2011).
\bibitem{Dou} R. G. Douglas and R. Howe, {\sl On the $C^*$-algebra of Toeplitz operators on the quarter-plane.} Trans. Amer. Math. Soc., {\bf158}, no.~1, 203-217, (1971).






\bibitem{georgescu-iftimovici-02} 
{V. Georgescu and A. Iftimovici, } 
{\sl Crossed Products of $C^*$-Algebras and Spectral Analysis of Quantum Hamiltonians}. 
{Commun. Math. Phys.}, {\bf 228}, 519-560, (2002).

\bibitem{georgescu-iftimovici-06} 
{V. Georgescu and A. Iftimovici,}
{\sl Localizations at infinity and essential spectrum of quantum hamiltonians: I. General theory}. 
{Rew. Math. Phys.}, {\bf 18}, 417-483, (2006).

\bibitem{gracia-varilly-figueroa-01} J. M. Gracia-Bondia, J. C. Varilly and H. Figueroa,  {\em Elements of Noncommutative Geometry}.  Birkh\"{a}user, Boston, (2001).
\bibitem{Har} P. Harper, {\sl Single Band Motion of Conduction Electrons in a Uniform Magnetic Field.} Proc. Phys. Soc. (London) A {\bf 68}, 874 (1955).
\bibitem{Hay} S. Hayashi, {\sl Topological Invariants and Corner States for Hamiltonians on a Three-Dimensional Lattice.} Commun. Math. Phys., {\bf 364}, 343–356, (2018). 
\bibitem{Hin} 
{N. Hindman and D. Strauss,}  
{\em Algebra in the Stone-Cech compactification: Theory and applications}. 
{2dn edition.} {De Gruyter Textbook},  (2012).
 
 \bibitem{Iwa} A. Iwatsuka, 
 {\sl Examples  of  absolutely  continuous  Schrödinger  operators  in  magnetic  fields}.  RIMSK okyuroku, {\bf 21}, 385-401, (1985).
 \bibitem{Jia} X. Jiang, {\sl On Fredholm operators in quarter-plane Toeplitz algebras}. Proc. Amer. Math. Soc., {\bf 123}, no.~9, 2823-2830, (1995).

\bibitem{KSV} R. King-Smith and D. Vanderbilt, 
 {\sl Theory of polarization of crystalline solids}. 
 Phys. Rev. B {\bf 47}, 1651-1654 (1993)

  \bibitem{Kot} M. Kotani, H. Schulz-Baldes and C. Villegas-Blas, {\sl Quantization of interface currents.} J. Math. Phys.,  {\bf 55}, 121901, (2014).
\bibitem{Kuch}{P. Kuchment, }{\em Floquet Theory for Partial Differential Equations.} Birkhäuser, Boston, (1993).


\bibitem{Kun} H. Kunz, {\sl The quantum Hall effect for electrons in a random potential.} Commun. Math. Phys., {\bf 112}, 121-145, (1987).

  \bibitem{Pan} G. Panati, C. Sparber and S. Teufel, 
  {\sl Geometric Currents in Piezoelectricity}. Arch. Rat. Mech. Anal. {\bf 191}, 387-422 (2009)

\bibitem{Par1} E. Park, {\sl Index theory and Toeplitz algebras on certain cones in $\Z^2$}. J. Operator Theory, {\bf23}, no.~1, 125–146, (1990).
\bibitem{Par2} E. Park and C. Schochet, {\sl On the K-theory of quarter-plane Toeplitz algebras}. Internat. J. Math., {\bf2}, no.~2, 195-204, (1991).

\bibitem{Ped} G. K. Pedersen, {\em $C^*$-Algebras and Their Automorphism Groups}. Academic Press, London, (1979).
\bibitem{Phi} J. Phillips, {\sl Self-adjoint Fredholm operators and spectral flow.} Canad. Math. Bull., {\bf 39}, no.~4, 460-467, (1996).
 \bibitem{Pim} M. Pimsner and D. Voiculescu, {\sl Exact sequences for K-groups and Ext-groups of certain crossed products}. J. Oper. Theory, {\bf 4}, 93-118, (1980).
 \bibitem{PRO} E. Prodan and H. Schulz-Baldes, {\em Bulk and Boundary Invariants for Complex Topological Insulators: From K-Theory to Physics}. Mathematical Physics Studies. Springer, Berlin, (2016).

 
\bibitem{Ro} M. R$\slashed{o}$rdam, F. Larsen and N. Laustsen, {\em An Introduction to K-Theory for $C^*$-Algebras.} Cambridge University Press, Cambridge, (2000).
\bibitem{Sam} P. Samuel, {\sl Ultrafilters and compactification of uniform spaces}. Trans. Amer. Math. Soc., {\bf 64}, 100-132, (1948).
\bibitem{Bel2} H. Schulz-Baldes and J. Bellissard, {\sl Anomalous transport: a mathematical framework}. Rev. Math. Phys., {\bf 10}, 1-46, (1998).
\bibitem{Shulz} H. Schulz-Baldes and S. Teufel, 
{\sl Orbital Polarization and Magnetization for Independent
Particles in Disordered Media}. Commun.  Math. Phys. {\bf 319}, 649-681
(2013)
\bibitem{Guo} 
{G. Thiang,}  
{\sl Edge-following topological states}. 
{J. of Geom. and Phys.}, {\bf 156}, 103796, (2020).

\bibitem{Tho} D. J. Thouless, M. Kohmoto, M. P. Nightingale and M. den Nijs, {\sl Quantized Hall conductance in a two-dimensional periodic potential}. Phys. Rev. Lett., {\bf 49}, 405-408, (1982).
\bibitem{Thou} D.J. Thouless, {\sl Quantization of particle transport}. Phys. Rev. B {\bf 27}, 6083–6087 (1983)
\bibitem{Ols} N. E. Wegge-Olsen, {\em K-Theory and C*-Algebras: A Friendly Approach.} Oxford University Press, Oxford, (1993).
\bibitem{Wil} D. P. Williams, {\em Crossed Products of $C^*$-Algebras}. Am. Math. Soc., Providence, (2007).



\end{thebibliography}
\end{document}